\documentclass[sigconf]{aamas} 

\usepackage{balance} 
\usepackage{amsmath,amsfonts}
\usepackage{algorithmic}
\usepackage{algorithm}
\usepackage{array}
\usepackage{textcomp}
\usepackage{stfloats}
\usepackage{url}
\usepackage{verbatim}
\usepackage{graphicx}
\usepackage{tikz}
\usepackage[capitalise]{cleveref}
\usepackage{amsthm}

\usepackage{caption}
\usepackage{subcaption}

\usepackage{float}
\newfloat{procedure}{htbp}{lop}
\floatname{procedure}{Procedure}

\newtheorem{theorem}{Theorem}[section]
\newtheorem{corollary}{Corollary}[theorem]
\newtheorem{lemma}[theorem]{Lemma}
\newtheorem{proposition}[theorem]{Proposition}
\newtheorem{definition}{Definition}[section]
\newtheorem{claim}[theorem]{Claim}

\newtheorem{conjecture}[theorem]{Conjecture}
\newtheorem*{remark}{Remark}

\newcommand{\abs}[1]{\left\lvert #1 \right\rvert}

\doi{KNSC4490}

\makeatletter
\gdef\@copyrightpermission{
  \begin{minipage}{0.2\columnwidth}
   \href{https://creativecommons.org/licenses/by/4.0/}{\includegraphics[width=0.90\textwidth]{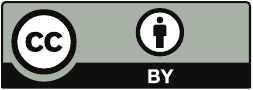}}
  \end{minipage}\hfill
  \begin{minipage}{0.8\columnwidth}
   \href{https://creativecommons.org/licenses/by/4.0/}{This work is licensed under a Creative Commons Attribution International 4.0 License.}
  \end{minipage}
  \vspace{5pt}
}
\makeatother

\setcopyright{ifaamas}
\acmConference[AAMAS '26]{Proc.\@ of the 25th International Conference
on Autonomous Agents and Multiagent Systems (AAMAS 2026)}{May 25 -- 29, 2026}
{Paphos, Cyprus}{C.~Amato, L.~Dennis, V.~Mascardi, J.~Thangarajah (eds.)}
\copyrightyear{2026}
\acmYear{2026}
\acmDOI{}
\acmPrice{}
\acmISBN{}

\title[Robust Sequential Learning in Random Order Networks]{Robust Sequential Learning in Random Order Networks}

\author{William Guo}
\affiliation{
  \institution{University of Pennsylvania}
  \city{Philadelphia, PA}
  \country{USA}}
\email{willguo6@seas.upenn.edu}

\author{Edward Xiong}
\affiliation{
  \institution{MIT}
  \city{Cambridge, MA}
  \country{USA}}
\email{eyxiong@mit.edu}

\author{Jie Gao}
\affiliation{
  \institution{Rutgers University}
  \city{New Brunswick, NJ}
  \country{USA}}
\email{jg1555@rutgers.edu}

\begin{abstract}
In the sequential learning problem, agents in a network attempt to predict a binary ground truth, informed by both a noisy private signal and the predictions of neighboring agents before them.
It is well known that social learning in this setting can be highly fragile: small changes to the action ordering, network topology, or even the strength of the agents' private signals can prevent a network from converging to the truth.
We study networks that achieve \textit{random-order asymptotic truth learning}, in which almost all agents learn the ground truth when the decision ordering is selected uniformly at random. 
We analyze the robustness of these networks, showing that those achieving random-order asymptotic truth learning are resilient to a bounded number of adversarial modifications.
We characterize necessary conditions for such networks to succeed in this setting and introduce several graph constructions that learn through different mechanisms. Finally, we present a randomized polynomial-time algorithm that transforms an arbitrary network into one achieving random-order learning using minimal edge or vertex modifications, with provable approximation guarantees. 
Our results reveal structural properties of networks that achieve random-order learning and provide algorithmic tools for designing robust social networks.
\end{abstract}

\keywords{Social networks; Social Learning; Sequential learning; Bayesian learning; Robust networks; Approximation algorithms; Submodular Optimization}

\newcommand{\BibTeX}{\rm B\kern-.05em{\sc i\kern-.025em b}\kern-.08em\TeX}

\begin{document}

\pagestyle{fancy}
\fancyhead{}


\maketitle

\section{Introduction}

The ability of a population to correctly aggregate dispersed information is fundamental to effective societal decision-making. Whether predicting a market crash, identifying a public health threat, or creating new social behaviors, success hinges on how weak signals can compound into stronger signals and how these signals propagate through the social network. 
It is in the interest of both individual agents and network designers to understand the circumstances that enable society to learn as a collective, particularly when many network parameters — including the network topology and decision ordering — can be subject to change.  

Social learning, or decision making in a social setting, is a well-studied topic in economics and social networks~\cite{Golub2017-qo,Mobius2014-oy,Bikhchandani1992-rs}.
In the classical model of sequential social learning~\cite{Banerjee1992-ra,Bikhchandani1992-rs,Welch1992-yt,Smith2000-wk,Chamley2004-or}, agents take turns to make decisions, and each agent sees a private signal with uncertainty as well as the previous agent's decisions. One major concern in this setting is \emph{herding} or \textit{information cascades}: with a constant probability, a few early agents make the wrong decision and subsequent agents ignore their own private signals, conforming to the actions of the agents they observe. This results in the majority of the network converging to the wrong value, even with perfectly rational Bayesian agents. 

In this work, we consider sequential learning on a social network, following the model adopted in a few recent papers~\cite{Bahar2020-am,Arieli2021-social,Lu24-enabling,Filip25-NPhard}. A set of agents make predictions sequentially about an underlying binary ground truth. Each agent performs Bayesian inference to compute the truth value with maximum likelihood, using their independent bounded private signal as well as their neighbor's predictions to construct their own belief.
Recent work on this model demonstrates that the quality of information spread throughout a network is dependent on both its graph structure and the order in which individuals act. The ideal learning outcome is characterized by the notion of \textit{asymptotic learning}, in which 
all but $o(n)$ agents in a network of size $n$ correctly learn the ground truth with probability approaching $1$ as $n$ goes to infinity. 
However, the precise relationship between graph topology/decision ordering and whether a network is capable of learning asymptotically still remains elusive.

Of particular interest are networks that achieve asymptotic learning when the decision ordering is chosen \textit{uniformly at random}. 
In practice, imposing a carefully crafted ordering on a network can be challenging without global coordination.
Random orderings result in the ``average case" or expected behavior, allowing us to isolate the role of topology in social learning. 

Due to the inherent difficulty of preserving key graph structures under random orderings, networks that learn under random orderings are more difficult to design. 
In the literature, only two instances of network learning under random orders have been identified: the ``celebrity network'' in \cite{Bahar2020-am}, and symmetric expander graphs satisfying the ``local learning requirement'' in \cite{Arieli2021-social}. 
Both networks seem to possess graph properties robust to adversarial deletion: \cite{Arieli2021-social} explicitly shows the existence of a network obtaining random-order learning, even after the removal of all but an arbitrary constant fraction of the network. 

In this paper, we focus on characterizing and enabling asymptotic learning with random ordering. We characterize the robustness properties of random-order learning networks to adversarial modifications. 
Furthermore, we offer constructions for networks achieving random-order learning through different means, and provide an algorithmic approach to enable arbitrary networks to learn using a small number of modifications.

\subsection{Our Contributions}

To motivate our exposition into random-order learning networks, we first show that strategic-order learning can be quite fragile.
We exhibit a network that achieves asymptotic learning under a strategic ordering only when the uncertainty of the agent's private signals is in a specified range.
This demonstrates the lack of stability in networks that rely on such strategic orderings, as discussed in \cite{Lu24-enabling} and \cite{Filip25-NPhard}. 
In contrast, we show that asymptotic learning under random ordering is robust to a constant number of adversarial modifications, generalizing the robustness properties discussed in \cite{Arieli2021-social} to arbitrary random-order learning networks. Namely, a random-order learning network with learning rate $1 - \varepsilon$ is robust to $o(1/\varepsilon)$ many modifications. This bound is nearly tight in the worst-case, as removing $\Theta(1/\varepsilon)$ many vertices from a learning celebrity network can result in a non-learning network.

Motivated by these robustness properties, we provide constructions for networks that support random-order learning. 
We start with the classic example of a complete network that suffers from an information cascade~\cite{Banerjee1992-ra}.
We first show that any learning network must have an independent set of size $\omega(1)$, a property that a complete graph does not have. 
Surprisingly, by introducing a superconstant number of ``guinea pigs'' (degree one vertices), we find that a complete network can be ``boosted'' to achieve random-order asymptotic learning. This boosting approach is not limited to just guinea pigs: any graph structure capable of achieving asymptotic learning under a strategic order can be embedded to enable the complete graph to enable random-order learning. 

As these constructions are specific to the complete graph, 
one may wonder how to boost an arbitrary network family to achieve random-order learning using a minimum number of edge/vertex modifications. 
We devise a randomized algorithm that achieves this by greedily boosting vertices with high coverage of other non-learning vertices. Our approach is closely related to the influence maximization framework introduced in \cite{Kempe03-maxinfluence}, and more broadly submodular maximization~\cite{Nemhauser1978-jq,Wolsey1982-ik}, which we use to show that our algorithm is a $O(g(n) \log n)$-approximation of the minimum number of edge modifications needed to achieve random-order learning, for any superconstant function $g(n) = \omega(1)$.

Our algorithm, however, assumes the existence of an efficient ``learning oracle'' which determines whether a vertex achieves a high learning rate. The construction of such an oracle appears to be related to many other learning problems that are computationally hard. We conjecture the hardness of computing such an oracle for general graphs. Meanwhile, we give a deterministic local linear-time heuristic to check whether a vertex achieves asymptotic learning.




\subsection{Related Work}
There is a large variation in model and parameter choices within social learning. We focus on the settings closest to ours, and refer the readers to some surveys in the field~\cite{Golub2017-qo,Mossel2017-sd,Mobius2014-oy,Acemoglu2011-tx}. 

\textit{Enabling Social Learning:} To avoid herding in sequential learning, previous work suggests limiting social interactions, for example, removing the visibility between agents early in the sequence~\cite{Smith1991-sy,Sgroi2002-rz}, or introducing a stochastic social network among the agents~\cite{Acemoglu2011-vj}.
A recent work also studied the timing effects of strategic behaviors of agents to delay actions~\cite{Hann-Caruthers2025-sq}.
When the private signals are unbounded (i.e., with probability arbitrarily close to $1$), asymptotic learning occurs almost certainly~\cite{Smith2000-wk, Hann-Caruthers2018-ec}.

\textit{Sequential vs. Repeated learning:} We study sequential learning where each agent makes one prediction following a linear order. In the repeated learning setting, agents broadcast a binary decision at every new time step $t$~\cite{Mossel2014-eb,Mossel_2013}.
For Bayesian agents with repeated learning, asymptotic truth learning is achieved if the graph is undirected and the distribution of private beliefs is not atomic\footnote{Note that a binary Bernoulli distribution for private
signal as in our setting is not non-atomic.}~\cite{Mossel2014-eb}, or if agents repeatedly share with others their current beliefs~\cite{Mossel2017-sd}. 

\textit{Hardness of Bayesian Learning:}
It is well known that general Bayesian belief inference is difficult \cite{Cooper1990-fn}. In the repeated social learning setting, it has been shown that computing the optimal decision is NP-hard even at $t = 2$ \cite{Hazla2021-vf}. Generally, inferring the most likely ground truth value is PSPACE-hard \cite{Hazla2019-reasoning}. In the sequential setting, determining whether a particular network can obtain a high learning rate under some strategic ordering is also NP-hard~\cite{Filip25-NPhard}.

\textit{Influence Maximization:} 
Our algorithm for enabling random-order learning relies on identifying a set of highly influential agents, a problem dating back to \cite{Domingos2001-min}.
Prior work has established approximation algorithms for maximizing the expected influence under various diffusion models ~\cite{Kempe03-maxinfluence, Borgs2014-maxinfl}. 
Notably, so long as the influence objective is submodular, one can immediately obtain strong approximation guarantees~\cite{Wolsey1982-ik, Nemhauser1978-jq}.

\textit{Robustness of Social Learning:}
The existing literature predominantly analyzes the robustness of social learning in light of faulty or misguided decision-making rules. For example, Bohren~\cite{Aislinn_Bohren2016-ya} considers a setting where some agents blindly predict based on their private signal and shows that significantly over-/underestimating the proportion of uninformed agents leads to undesirable behavior. Mueller-Frank~\cite{Mueller-Frank2018-lh} concerns the repeated learning setting, noting that the standard weighted average (DeGroot) decision model is not robust to arbitrarily small adjustments made to a single agent at each iteration. 
In the sequential learning setting, Arieli et. al~\cite{Arieli2021-social} observed that networks learning under random orderings are robust to random vertex deletions, showing that robustness against random vertex deletions is embedded in the definition of uniform random orderings. 

\section{Setting}\label{section:setting}

We define a \textit{network} $\mathcal{F}$ to be an infinite family of undirected graphs with unbounded order $\sup_{G \in \mathcal F} \abs{V(G)} = \infty$. For ease of notation, throughout this paper we treat networks as sequences $\mathcal F = \{ G_n \}_{n \in \mathbb N}$ of undirected graphs $G_n = (V_n, E_n)$ where $\abs{V_n} = n$. Each graph $G_n$ represents $n$ agents with perfect knowledge of the entire graph topology. Each graph is equipped with a decision ordering (equivalently, permutation) $\sigma_n: [n] \to [n]$, where $\sigma_n$ is a bijection mapping each agent to a unique index. We distinguish between the \textit{strategic ordering} setting, in which one may carefully choose the agent's ordering for any graph $G_n$, and the \textit{random ordering} setting, in which $\sigma_n$ is chosen uniformly at random from the $n!$ possibilities.
In either case, the decision ordering is known by all agents in advance. 

On a graph $G = (V,E)$, each agent $v \in V$ seeks to learn the ground truth signal $\theta \in \{0, 1\}$, where $\Pr(\theta = 0) = \Pr(\theta = 1) = \frac{1}{2}$. Each agent $v$ is given a random private signal $p_v \in \{0, 1\}$, representing an independent noisy measurement of the ground truth. The agent's private signals are correlated with $\theta$ with a common probability $q > 1/2$. 

As per the decision ordering $\sigma_n$, agents will go in sequence to make their predictions. For any agent $v$, let $N(v)$ denote the set of neighbors of $v$ in a graph. In sequential learning, $v$
can see the actions of their neighbors arriving before them in the ordering, namely in the set $N_{\sigma}(v) = \{u \in V_n: (u, v) \in E_n, \sigma(u) < \sigma(v)\}$. Each agent $v$ then announces a prediction $a_v \in \{0, 1\}$ visible only to its neighbors succeeding it in the ordering. The ordering $\sigma$ induces an acyclic directed graph $\vec G$ which orients each edge in $G$ to point from the lower index to the higher index in $\sigma$. 
An agent $v$'s \textit{subnetwork} $B(v)$ is defined to be the set of all $u \in \vec{G}$ for which a $u$-$v$ directed path exists in $\vec G$. Such vertices are called the ancestors of $v$. Informally, it is the set of all agents that may ultimately influence $v$'s belief and action.

Following the literature in social learning~\cite{Banerjee1992-ra}, we assume all agents are perfectly Bayesian, and will predict the most probable ground truth value given their knowledge of $G, \sigma, q,$ their private noisy signal, and the predictions of their neighbors arriving before them.
For simplicity, if an agent finds that it is equally likely for $\theta = 0$ or $1$, the agent will side with their own private signal, although our results generalize to arbitrary tie-breaking rules.

For each agent $v \in V$, its \textit{learning rate} $\ell_{\sigma}(v) := \ell_{\sigma}(v, q)$ given a fixed ordering $\sigma$ is the probability that $v$ correctly predicts the ground truth: $\ell_{\sigma} (v) = \Pr(a_v = \theta)$.
The learning rate of a graph $G = \{V, E\}$ paired with a fixed ordering $\sigma$ is the expected fraction of agents predicting the ground truth correctly, i.e.
$$L_{\sigma}(G) = \frac{1}{|V|} \sum_{v \in V} \ell_{\sigma}(v).$$

Similarly, we define the \textit{random-order learning rate} of an agent $v$ to be $\ell(v) = \mathbb{E}_{\sigma}[\ell_\sigma(v)]$, averaging the learning rate over all possible orderings $\sigma$ weighted uniformly. Denote by $S_n$ the set of all possible permutations of $[n]$. For an event $A \subseteq S_n$ dependent only on the random ordering of the vertices, the random-order learning rate conditioned on $A$ is denoted as $\ell(v \mid A) = \Pr(a_v = \theta \mid \sigma \in A)$.
The random-order learning rate of a graph $G$ is the expected fraction of agents predicting the ground truth given a decision ordering selected uniformly at random, or
$$L(G) = \mathbb{E}_{\sigma}[L_{\sigma}(G)] = \frac{1}{|V|}\sum_{v \in V} \ell(v).$$

The learning rate of a network $\mathcal F = \{G_n\}_{n \in \mathbb{N}}$ is the \textit{asymptotic learning rate} of its graph family, or $\lim_{n \to \infty} L(G_n)$.
We say that a network obtains \textit{random-order asymptotic truth learning} if $L(G_n) \to 1$ as $n \to \infty$.  
When $\mathcal{F}$ is equipped with a sequence of fixed orderings $\{\sigma_n\}$, we say the network achieves \textit{strategic order learning} if $\lim_{n \to \infty} L_{\sigma_n}(G_n) = 1$. We will abbreviate random order (resp. strategic order) asymptotic truth learning as random-order (resp. strategic order) learning, or even simply ``learning'' when the context is clear. An agent $v$ with learning rate approaching $1$ as $n$ goes to infinity \textit{learns} or \textit{achieves learning}.

We now describe a few helpful results with proofs in \cref{appendix:background}. The first is the generalized improvement principle, an extension of Proposition $1$ in \cite{Lu24-enabling}:

\begin{proposition}[Generalized improvement principle]
\label{prop:improvement2}
    Given a graph $G = \{V, E\}$ and a particular vertex $v \in V$, consider a  subgraph $G' = \{V, E'\}$ of $G$ such that $E' \subseteq E$, and for all $u, w \neq v$, $(u, w) \in E \implies (u, w) \in E'$. 
    Then for any fixed ordering $\sigma$, the learning rate of $v$ in graph $G$, denoted by
    $\ell_{\sigma}(v)$, is at least that of the learning rate of $v$ in $G'$ denoted by $\ell'_{\sigma}(v)$. Further, under random orderings $\ell(v) \geq \ell'(v)$. 
\end{proposition}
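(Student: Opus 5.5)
The plan is a simulation argument. Fix an ordering $\sigma$. Every edge in $E \setminus E'$ is incident to $v$, since all edges among vertices other than $v$ are preserved. So $G$ and $G'$ differ only in which of $v$'s earlier neighbors $v$ observes. Edges from $v$ to vertices after $v$ in $\sigma$ are irrelevant to $v$'s own decision, because they only affect what later agents see.

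The first step is to check that every agent $u$ with $\sigma(u) < \sigma(v)$ behaves identically in the two graphs. The history each such $u$ can observe is built from edges among vertices preceding $v$, and these edges are all in both $E$ and $E'$. Each agent's Bayesian rule depends on the graph, so this needs care: $u$ knows the whole topology, but its posterior depends only on the joint law of its observed neighbors' actions and its own signal. That joint law is determined by the restriction of the graph to vertices ordered before $u$, since those agents' actions depend only on earlier edges. By induction on $\sigma$-order, every agent preceding $v$ uses the same decision rule in both graphs. Hence, conditioned on $(\theta, (p_w)_w)$, these agents produce the same actions in $G$ and in $G'$.

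Next, compare $v$'s information sets. In $G$, $v$ observes $p_v$ together with $(a_u)_{u \in N_\sigma(v)}$. In $G'$, $v$ observes $p_v$ together with $(a_u)_{u \in N'_\sigma(v)}$, where $N'_\sigma(v) \subseteq N_\sigma(v)$, and these actions have the same joint distribution with $\theta$ as in $G$. So $v$'s information in $G'$ is a garbling, namely a coordinate projection, of its information in $G$. The Bayesian MAP rule maximizes $\Pr(a_v=\theta)$ over all decision rules measurable with respect to $v$'s information. One admissible rule for $v$ in $G$ is to ignore the extra neighbors and apply the $G'$ rule. Therefore $\ell_\sigma(v) \ge \ell'_\sigma(v)$.

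A subtlety is the tie-breaking convention. The paper's rule (follow the private signal on ties) need not be the specific optimal rule, but any MAP rule attains the maximal success probability. The inequality concerns probabilities, so it is unaffected. Finally, $\ell(v) = \mathbb{E}_\sigma[\ell_\sigma(v)] \ge \mathbb{E}_\sigma[\ell'_\sigma(v)] = \ell'(v)$ by averaging the pointwise inequality over uniform $\sigma$.

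The main obstacle is the first step, showing the predecessors' decision rules coincide. I would handle it by the induction on $\sigma$-order described above, noting that each agent's conditional distributions depend only on the induced subgraph of earlier vertices.
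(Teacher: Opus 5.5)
Your proof is correct, but it takes a different route from the paper's.

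\textbf{The paper's route.} The paper never compares decision rules directly. It builds a clone $v'$ of $v$ that carries only the $G'$-edges and inserts it immediately before $v$ in $\sigma$. It forces $p_{v'} = p_v$, so the two share one signal instead of having independent ones. Then $v'$ faces exactly the situation of $v$ in $G'$, while $v$ is unaffected. Next it adds the edge $(v',v)$, arguing that $v$'s decision is unchanged because $a_{v'}$ is determined by information $v$ already holds. Finally it invokes the basic improvement principle (\cref{prop:improvement1}: an agent does at least as well as any earlier neighbor) to get $\ell_{\sigma'}(v) \ge \ell_{\sigma'}(v')$. This reuses an existing lemma, which makes it short. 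However, it implicitly needs \cref{prop:improvement1} to hold in the modified model with the correlated signals $p_{v'} = p_v$. It also only asserts, rather than proves, that the agents before $v$ see the same subnetworks.

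\textbf{Your route.} You go directly to the mechanism underlying \cref{prop:improvement1}. The $G'$ decision rule is a feasible (measurable) rule for $v$ in $G$, and the MAP rule is optimal among all such rules. You also make explicit two points the paper leaves unstated:
\begin{itemize}
\item the induction on $\sigma$-order showing that every predecessor of $v$ uses the same decision rule in $G$ and $G'$, since its posterior depends only on the induced subgraph of earlier vertices;
\item the fact that the paper's tie-breaking convention selects a posterior maximizer, so it does not affect the success probability.
\end{itemize}
Your argument is therefore self-contained and avoids the nonstandard shared-signal model, while the paper's is more compact given the earlier lemma. Both handle the random-order claim identically, by averaging the fixed-$\sigma$ inequality over $\sigma$.
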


The generalized improvement principle ensures that each agent does at least as well as if it could only see a subset of its neighbors.
This is true due to the rationality (Bayesian inference) of each agent, along with their knowledge of the graph $G$ and ordering $\sigma$. 
As a corollary, each agent learns at least as well as its neighbors before it in the decision ordering; inductively, this implies that each agent does at least as well as all previous agents in its subnetwork.

Next, we establish a general property of random-order learning in a network, which holds independently of the decision rules.

\begin{lemma}
    \label{lemma:conditionallearning}
    For a vertex $v$ in $G$ with a random learning rate at least $\ell(v) \geq 1 - \varepsilon$, and an event $A \subseteq S_n$ which occurs with probability $\Pr(A) = \frac{|A|}{n!}$ on uniform random orderings, the learning rate $ \ell(v \mid A)$ of $v$ conditioned on $A$ is bounded by
    \[
    \ell(v \mid A) \geq 1 - \frac{\varepsilon}{\Pr(A)}.
    \]
\end{lemma}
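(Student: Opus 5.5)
This is a direct averaging argument, using only the law of total probability over the uniform choice of $\sigma$; no property of Bayesian decision rules is needed. I will work with the error probability $1-\ell(v)=\Pr(a_v\neq\theta)\le\varepsilon$, where the probability is joint over $\sigma$, the ground truth $\theta$, and the private signals. The ordering $\sigma$ is drawn independently of $\theta$ and of the signals, and $A$ is an event on $\sigma$ alone. Hence the conditional learning rate $\ell(v\mid A)=\Pr(a_v=\theta\mid\sigma\in A)$ is well defined whenever $\Pr(A)>0$. It equals the average of $\ell_\sigma(v)$ over $\sigma\in A$:
\[
\ell(v\mid A)=\frac{1}{|A|}\sum_{\sigma\in A}\ell_\sigma(v).
\]

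The key step is to split the unconditional error over $A$ and its complement:
\[
\varepsilon\;\ge\;1-\ell(v)=\Pr(A)\bigl(1-\ell(v\mid A)\bigr)+\Pr(A^c)\bigl(1-\ell(v\mid A^c)\bigr).
\]
The second summand is nonnegative, since every $\ell_\sigma(v)\le 1$. Dropping it gives $\Pr(A)\bigl(1-\ell(v\mid A)\bigr)\le\varepsilon$. Dividing by $\Pr(A)$ and rearranging yields $\ell(v\mid A)\ge 1-\varepsilon/\Pr(A)$, as claimed.

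The main point to be careful about is the justification of the first display: the conditional rate must be exactly the uniform average of $\ell_\sigma(v)$ over $A$. This holds because conditioning on $\sigma\in A$ does not change the joint law of $\theta$ and the signals. I would also note that if $\Pr(A)=0$ the statement is vacuous or undefined, so we assume $\Pr(A)>0$. Finally, the bound is only informative when $\Pr(A)>\varepsilon$, and no further work is needed beyond this.
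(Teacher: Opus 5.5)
Your proposal is correct and is essentially the paper's own proof: the paper writes $\ell(v) = \ell(v \mid A)\Pr(A) + \ell(v \mid \overline{A})\Pr(\overline{A})$, bounds $\ell(v \mid \overline{A}) \le 1$, and rearranges. That is exactly your error decomposition with the complement term dropped. Your remarks that $\sigma$ is independent of $\theta$ and of the signals, and that one needs $\Pr(A) > 0$, are justifications the paper leaves implicit.
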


Finally we note that any vertex with many learning neighbors itself must achieve learning.

\begin{lemma}
\label{cor:goodneighborhood}
    For any network $\mathcal{F} = \{G_n\}_{n \in \mathbb N}$ and vertex $v = \{v_n : v_n \in G_n, n \in \mathbb{N}\}$, if $v$ has $d = \omega(1)$ neighbors that achieve learning, then $v$ itself also achieves learning.
\end{lemma}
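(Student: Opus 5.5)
Fix $\delta>0$; it suffices to show $\ell(v_n)\ge 1-O(\delta)$ for all large $n$. The idea is that under a random ordering, $v$ is very likely to act after at least one of its $d$ learning neighbors. By the generalized improvement principle (\cref{prop:improvement2}), $v$ then does at least as well as it would seeing only that one neighbor. Seeing a single earlier neighbor $u$ is in turn at least as good as simply copying $u$'s action, since a Bayesian agent with access to $a_u$ can always do as well as copying it.

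\textbf{Step 1: choose a constant-size subset of the learning neighbors.} Let $k$ be a large constant with $1/(k+1)<\delta$. Since $d=\omega(1)$, for large $n$ we can fix $k$ learning neighbors $u_1,\dots,u_k$ of $v$. Let $A_i$ be the event that $u_i$ precedes $v$ and $u_i$ is the first of $u_1,\dots,u_k$ in $\sigma$. By symmetry of the uniform ordering, $v$ precedes all of $u_1,\dots,u_k$ with probability exactly $1/(k+1)$. The remaining mass splits evenly, so $\Pr(A_i)=1/(k+1)$ for each $i$. Each $u_i$ learns, so $\ell(u_i)\ge 1-\varepsilon_n$ with $\varepsilon_n\to 0$; uniformity is available because $k$ is a fixed constant.

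\textbf{Step 2: condition and compare.} On $A_i$, delete all edges at $v$ except $(u_i,v)$. By \cref{prop:improvement2} applied with the fixed ordering $\sigma$, $\ell_\sigma(v)\ge \ell'_\sigma(v)$, where $\ell'$ denotes the rate in the reduced graph $G'$. In $G'$, $v$ sees only $a_{u_i}$ and $p_v$. The optimal Bayesian rule does at least as well as copying $a_{u_i}$. Moreover, $u_i$'s action distribution is unchanged in $G'$, because $v$ comes after $u_i$ and only edges incident to $v$ were removed. Hence $\ell_\sigma(v)\ge \ell_\sigma(u_i)$ for every $\sigma\in A_i$. Averaging over $A_i$ and applying \cref{lemma:conditionallearning} gives
\[
\ell(v\mid A_i)\ge \ell(u_i\mid A_i)\ge 1-\varepsilon_n(k+1).
\]

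\textbf{Step 3: combine.} Write $\ell(v)=\sum_i \Pr(A_i)\ell(v\mid A_i)+\Pr(\text{$v$ first})\cdot \ell(v\mid \text{$v$ first})$. Bounding the last conditional rate below by $0$ gives
\[
\ell(v)\ge \tfrac{k}{k+1}\bigl(1-(k+1)\varepsilon_n\bigr)\to \tfrac{k}{k+1}>1-\delta.
\]
Since $\delta$ was arbitrary, $\ell(v_n)\to 1$.

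The main obstacle is Step 2. Conditioning on an ordering event could bias a neighbor's learning rate, which is why we pay the $1/\Pr(A_i)$ factor in \cref{lemma:conditionallearning}. This factor forces $k$ to be a fixed constant rather than all $d$ neighbors. The other point to handle carefully is justifying that $u_i$'s behaviour is identical in $G'$ and $G$ for that $\sigma$.
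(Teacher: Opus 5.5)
Your proof is correct, and it takes a genuinely different and simpler route than the paper.

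The paper first proves the quantitative \cref{lemma:goodneighborhood}. It uses all $d$ learning neighbors and conditions on the number $j$ of them preceding $v$. It then compares $v$ with the \emph{last} of these (the event $C_{u_0,j}$) and invokes a monotonicity step, $\ell(u_0\mid A_{u_0,j})\ge\ell(u_0\mid B_{u_0,j})$, so that the $1/\Pr(\cdot)$ penalty from \cref{lemma:conditionallearning} shrinks as $j$ grows. Summing over $j$ gives error $\frac{1}{d+1}+O(\varepsilon\ln d)$, and the corollary then tunes $d=\min(d_n,e^{1/\sqrt{\varepsilon}})$.

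You instead fix $k$ neighbors and always compare $v$ with the \emph{first} of $u_1,\dots,u_k$ when it precedes $v$. You pay the full factor $k+1$, get $\ell(v_n)\ge 1-\frac{1}{k+1}-k\varepsilon_n$, and let $k\to\infty$ only after $n\to\infty$. What this buys:
\begin{itemize}
\item Your route needs no monotonicity argument at all. The paper's step is about rank within $M\cup\{v\}$ rather than absolute index, so it does not follow verbatim from \cref{lemma:monotonicity} and needs extra justification.
\item Your route only ever needs a common rate for finitely many neighbors. It therefore also works directly if the hypothesis is read non-uniformly, without a single $\varepsilon(n)$ for all $d_n$ neighbors.
\end{itemize}

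What the paper's argument buys is a sharper bound when many learning neighbors are available. With $d\approx 1/\varepsilon$ its error is $O(\varepsilon\log(1/\varepsilon))$, whereas your bound, optimized at $k\approx\varepsilon^{-1/2}$, gives about $2\sqrt{\varepsilon}$.

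Two smaller remarks. Contrary to your closing comment, $k$ need not stay constant: $k=\min(d_n,\lfloor\varepsilon_n^{-1/2}\rfloor)$ gives error $O(1/d_n+\sqrt{\varepsilon_n})$, which is exactly the rate the paper's corollary proof extracts. Also, the detour through $G'$ in Step 2 is unnecessary, since \cref{prop:improvement1} gives $\ell_\sigma(v)\ge\ell_\sigma(u_i)$ for every $\sigma\in A_i$ directly.
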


One may expect that as long as the random ordering places $v$ after one of its $d$ learning neighbors, it too will learn. However, each neighbor has a lower learning rate conditioned on being early in the ordering, as, in expectation, they have less information to inform their beliefs (see \cref{lemma:monotonicity}).
If $v$ arrives too early, none of its neighbors before it may be in a position that enables learning.
The decline in learning rate conditioned on being early in an ordering is a recurring obstacle throughout our analyses.

\begin{remark}
    Many of our results invoke asymptotic notation despite referencing a single instance of a graph or vertex. As these results generalize to an arbitrary $n$, we find it acceptable to do this for brevity, as is standard for theoretical findings.  When discussing guarantees related to `asymptotic learning' for single graph/vertex instances, as in \cref{cor:goodneighborhood}, we will occasionally clarify our results by referencing infinite families/sequences as opposed to fixed instances.
\end{remark}

\section{Robustness in Network Learning}

\subsection{Fragility of Strategic-Order Learning}\label{subsection:fragile}

To motivate our exposition into random-order learning networks, we first highlight the fragility of social learning under strategic orderings. Prior work (e.g., \cite{Lu24-enabling}) studied a myriad of networks that learn under strategic orderings. 
This type of learning is heavily dependent on the strategic ordering, yet even so, it can be fragile to other network parameters. 
Here, we introduce a network family that learns conditionally depending on $q$ (the probability that private signals are correct), even with a fixed graph topology and strategic ordering of the vertices. 

Our construction is shown in Figure \ref{fig:smarter-dumber}, in which there is a special vertex $v$ with four neighbors arriving before it, and a set of $k$ neighbors $\{w_i\}$ arriving after it. Each $w_i$ also has an independent neighbor that arrives before it. All the agents $\{w_i\}$ feed their predictions into a common neighbor $u_0$ arriving later. We let $k=\log n$, though any superconstant value of $k$ suffices. All remaining vertices are arranged in a chain following the vertex $u_0$.

In essence, the ability of the network to learn depends solely on whether $u_0$ is a learner. If $u_0$ could observe all (or even a constant fraction) of the independent private signals from the agents $\{w_i\}$, it would be able to correctly determine the ground truth with high probability by a simple Chernoff bound. However, all agents $w_i$ observe the action of $v$. For $q > q_0$ for some $q_0 \approx 0.7887$, each $w_i$ will default to the action of $v$, whereas for $q < q_0$, each  $w_i$ will act on its independent private signal with a constant probability. In the former case, all $w_i$'s will simply regurgitate the action of $v$, which $u_0$ will also adopt. But as $v$ is not a learner, neither is $u_0$. In the latter case, $u_0$ receives enough private signals to learn with high probability. The detailed proof can be found in \cref{appendix:fragile}.

\begin{proposition}\label{prop:changeq}
    There exists a strategic-order network $\mathcal F$ that obtains asymptotic learning for $q < q_0$ and does not learn for $q > q_0$, for some fixed threshold $q_0$.
\end{proposition}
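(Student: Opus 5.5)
The plan is to verify the construction of Figure~\ref{fig:smarter-dumber} directly by computing each agent's Bayesian decision rule in the fixed strategic order. We read the figure as follows, and the computations below depend on this reading. The four early neighbors of $v$ and the private neighbor $x_i$ of each $w_i$ see nobody, so their actions equal their private signals. Hence $v$ sees five i.i.d.\ signals (its own and four others) and takes the majority. Its learning rate is therefore the constant
\[
r(q)=\Pr[\mathrm{Bin}(5,q)\ge 3]<1 .
\]
Each $w_i$ sees its own signal, the action of $x_i$ (a second independent signal), and $a_v$, whose reliability $r(q)$ it knows. This gives $w_i$ a clean rule:
\begin{itemize}
\item if at least one of its two signals agrees with $a_v$, then $w_i$ outputs $a_v$;
\item if both signals disagree with $a_v$, then $w_i$ compares $\bigl(q/(1-q)\bigr)^2$ with $r(q)/(1-r(q))$.
\end{itemize}
We define $q_0$ as the unique crossing point of these two quantities. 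Establishing existence and uniqueness of $q_0$ in $(1/2,1)$ is a one-variable calculus check that both sides are increasing with a single crossing. We expect $q_0 \approx 0.7887$, possibly $(3+\sqrt3)/6$. Below $q_0$, a $w_i$ whose two signals oppose $a_v$ deviates; above $q_0$, every $w_i$ copies $v$.

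\emph{Case $q<q_0$.} Conditioned on $a_v=a$ and $\theta$, the actions $a_{w_i}$ are i.i.d., because they depend only on independent signal pairs. A given $w_i$ deviates from $a$ with probability $(1-q)^2$ if $\theta=a$ and $q^2$ if $\theta\neq a$. These two values are distinct constants. The count of deviating $w_i$ is thus a sum of $k=\log n=\omega(1)$ i.i.d.\ Bernoullis whose mean separates the two hypotheses. A threshold test therefore recovers $\theta$ with error $e^{-\Omega(k)}\to 0$ by a Chernoff bound. Since $u_0$ is Bayesian and observes all $a_{w_i}$ and $a_v$-independent structure, its error is at most that of this test, so $\ell(u_0)\to1$. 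Each chain vertex after $u_0$ sees its predecessor. By the corollary of Proposition~\ref{prop:improvement2} (an agent does at least as well as its earlier neighbors), inductively every chain vertex has learning rate at least $\ell(u_0)$. The chain has $n-O(\log n)$ vertices, so $L(G_n)\to 1$.

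\emph{Case $q>q_0$.} Every $w_i$ outputs $a_v$ deterministically, so the entire observation of $u_0$ is $a_v$ plus its own single signal. Because
\[
\frac{r}{1-r}>\Bigl(\frac{q}{1-q}\Bigr)^2>\frac{q}{1-q},
\]
one private signal never overturns $a_v$, so $u_0$ copies $v$. We then argue inductively along the chain that each vertex's information is (a function of) $a_v$ together with its own signal, since the predecessor's action equals $a_v$. Hence each chain vertex also copies $v$, and almost all agents have learning rate $r(q)<1$, so the network does not learn.

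The main obstacle is this last inductive step. We must justify that a chain vertex cannot extract more than $a_v$ from its predecessor, which requires the hypothesis that all earlier agents' actions are deterministic functions of $a_v$. We also need to handle the boundary behavior at exactly $q=q_0$ and tie-breaking, which the statement excludes by using strict inequalities.
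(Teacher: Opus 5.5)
\textbf{The threshold and the chain arguments.} Your construction, the decision rule for the $w_i$, and both chain arguments follow the paper's proof, and your guess for the threshold is exact. With $x=q/(1-q)$ one has $(1+5x+10x^2)\bigl(r/(1-r)-x^2\bigr)=x^2(x-1)(x^2-4x+1)$. So the $w_i$ deviate exactly when $1<x<2+\sqrt3$, which gives $q_0=(3+\sqrt3)/6$. It is this factorization that gives a single crossing; monotonicity of the two sides alone does not. The step you call the main obstacle is routine: for $q>q_0$ each $a_{u_{j-1}}$ equals $a_v$, so $u_j$ sees $a_v$ plus one signal and copies it.

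\textbf{The gap: the test at $u_0$ uses $a_v$.} The real problem is at $u_0$ when $q<q_0$. Your test counts how many $w_i$ deviate from $a_v$. But in Figure~\ref{fig:smarter-dumber}, $u_0$ is adjacent only to the $w_i$ and the chain, not to $v$. So the test is not a function of $u_0$'s observations, and comparing it with the Bayesian $u_0$ is unjustified.

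What $u_0$ actually sees is a mixture. Given $(\theta,a_v)$, the $a_{w_i}$ are i.i.d. When $\theta=1$, $\Pr(a_{w_i}=1)$ is $1-(1-q)^2$ or $q^2$, according as $a_v=1$ or $0$. When $\theta=0$, it is $(1-q)^2$ or $1-q^2$. For $q\in(1/2,1)$ these two pairs overlap exactly when $q^2=1/2$.

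\textbf{The claim fails at $q=1/\sqrt2$.} This point lies below $q_0$, and there the claim is false. On the event $a_v\neq\theta$, which has constant probability $1-r(q)$, the $a_{w_i}$ are i.i.d. fair coins whatever $\theta$ is. Even if a genie reveals this event to $u_0$, its only remaining evidence is its own signal, so $u_0$ errs with probability at least $(1-r)(1-q)$. The chain, whose vertices copy $u_0$ (since $\ell(u_0)\ge\ell(w_1)>q$ here), inherits this error. The paper's sketch has the same hole: it refers to the fraction of \emph{correct} $a_{w_i}$, which $u_0$ cannot observe.

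\textbf{Two repairs.}
\begin{itemize}
\item Carry out a four-hypothesis test and exclude the single point $q=1/\sqrt2$. For $q<1/\sqrt2$, the majority of the $a_{w_i}$ reveals $a_v$, after which your deviation count works. For $q>1/\sqrt2$, the majority already equals $\theta$.
\item Alternatively, add the edge $vu_0$ to the construction. Then your test is legitimate for every $q\in(1/2,q_0)$. The case $q>q_0$ is unaffected, because all $a_{w_i}=a_v$ anyway.
\end{itemize}
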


\begin{figure}
    \begin{tikzpicture}
        \begin{scope}[every node/.style={circle, draw}]
            \node (v) at (2.5, 1.5) {$v$};
            \foreach \i in {1,...,4} {
                \node (v\i) at (\i, 0) {};
                \path [->] (v\i) edge (v);
            }
            \node (w) at (2.5, 4.5) {$u_0$};
            \foreach \i in {1,...,4} {
                \node (w\i) at (4.5, \i + 1) {$w_\i$};
                \node (ww\i) at (6, \i + 1) {}; 
                \path [->] (ww\i) edge (w\i);
                \path [->] (v) edge (w\i);
                \path [->] (w\i) edge (w);
            }
        \node (wnext) at (1, 4.5) {};
        \path [->] (w) edge (wnext);
        \end{scope}
        \node (vdots) at (4.5,6) {$\vdots$};
        \node (dots) at (-0.5, 4.5) {$\cdots$};
        \path [->] (wnext) edge (dots);
    \end{tikzpicture}
    \caption{A network that obtains learning for $q < q_0$ but not for $q > q_0$, for some $q_0 \in [\frac{1}{2}, 1]$. The edges in the graph are oriented to indicate the direction of information flow.}
    \label{fig:smarter-dumber}
\end{figure}
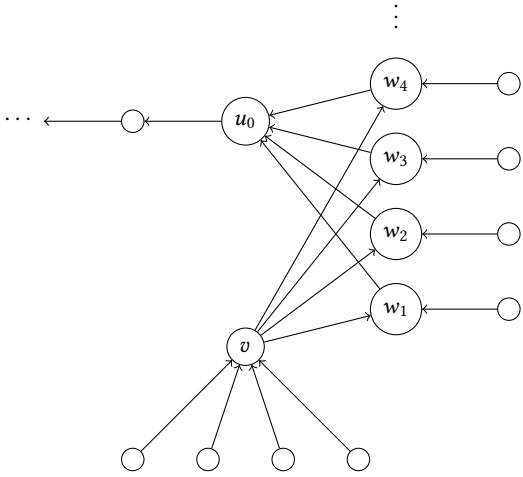

This example is rather counter-intuitive in that the network stops learning when the quality of private signals \emph{improves}.
However, this is not universal: a similar construction of a network that only learns for \textit{high} values of $q$ can be found in \cref{appendix:fragile}. It can be seen that a small modification to the network, e.g., inserting or removing one of the early-arriving neighbors of $v$, can influence the entire network's learning behavior via a similar analysis.

A fundamental strength of random-order learning networks is the ability to learn almost independently of the decision ordering imposed on it. Such networks are able to learn seemingly due to inherent properties of the underlying topology. 
A natural question is whether such properties can be broken via minor perturbations to the network structure.
In the next section, we investigate the robustness of random-order learning networks against \textit{adversarial} edge/vertex modifications. 


\subsection{Robustness of Random-Order Learning}\label{subsection:robust}

In this section, we argue that a network's robustness to edge/vertex insertion/deletions can be made explicit in terms of its random-order learning rate. We start with a rather intuitive observation: conditioned on being early in the decision ordering, an agent does worse than when conditioned on being later in the ordering. This can be attributed to the generalized improvement principle, as an agent arriving early and only seeing a few of its neighbors will be less informed than an agent arriving later and seeing more of its neighbors.

\begin{lemma}[Monotonicity]
\label{lemma:monotonicity}
        Given a graph $G$, fix an agent $v \in G$. For all $i \in [n]$, let $A_i$ be the event that $v$ is at index $i$, or $\sigma(v) = i$. Then $\Pr(a_v = \theta \mid A_i) \leq \Pr(a_v = \theta \mid A_{i+1})$ for all $1 \leq i \leq n - 1$. 
\end{lemma}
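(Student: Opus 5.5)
We prove the inequality with a coupling between orderings that place $v$ at index $i$ and orderings that place $v$ at index $i+1$. The idea is that moving $v$ one step later can only add one visible neighbor, so the generalized improvement principle (\cref{prop:improvement2}) applies directly.

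Fix $i \le n-1$ and define a bijection $\phi: A_i \to A_{i+1}$ as follows. Given $\sigma \in A_i$, let $w$ be the agent with $\sigma(w) = i+1$. Set $\phi(\sigma) = \sigma \circ \tau$, where $\tau$ is the transposition of $v$ and $w$: that is, $v$ and $w$ swap positions and all other agents keep theirs. This map is invertible, since from $\sigma' \in A_{i+1}$ we recover $\sigma$ by swapping $v$ with the agent at index $i$. Hence $|A_i| = |A_{i+1}|$. Because $\sigma$ is uniform, conditioning on $A_i$ or on $A_{i+1}$ gives the uniform distribution on the respective set. It therefore suffices to show, for each $\sigma \in A_i$,
\[
\ell_{\sigma}(v) \le \ell_{\phi(\sigma)}(v),
\]
and then average over $\sigma$.

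Fix $\sigma$ and write $\sigma' = \phi(\sigma)$. The two orderings differ only in the relative order of $v$ and $w$.
\begin{itemize}
\item If $(v,w) \notin E$, the induced DAGs $\vec G_\sigma$ and $\vec G_{\sigma'}$ coincide. Every agent sees the same set of earlier neighbors, and the joint law of all actions is identical, so $\ell_\sigma(v) = \ell_{\sigma'}(v)$.
\item If $(v,w) \in E$, then in $\sigma'$ the agent $v$ observes $w$ in addition to everything it observed under $\sigma$.
\end{itemize}

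The step needing care is that under $\sigma'$ the agent $w$ also loses $v$ as an observed predecessor, so $w$'s action has a different distribution. To handle this, consider the graph $G' = G \setminus \{(v,w)\}$. This graph satisfies the hypothesis of \cref{prop:improvement2} with respect to $v$: only an edge incident to $v$ is removed.

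Under $\sigma'$, compare $v$'s information in $G'$ and in $G$.
\begin{itemize}
\item In $G'$ under $\sigma'$, the process for every agent before $v$ is exactly the process in $G$ under $\sigma$ for those agents. The agents preceding $v$ are the same set, and $w$'s action in $G'$ under $\sigma'$ depends only on its other earlier neighbors. The one difference is that under $\sigma$ in $G$ the agent $w$ comes after $v$, so it does not affect $v$ at all.
\item Consequently $v$'s observations in $G'$ under $\sigma'$ (its earlier neighbors other than $w$) have the same joint distribution with $\theta$ and $p_v$ as under $\sigma$ in $G$. This gives $\ell'_{\sigma'}(v) = \ell_\sigma(v)$.
\item By \cref{prop:improvement2}, $\ell_{\sigma'}(v) \ge \ell'_{\sigma'}(v)$.
\end{itemize}
Combining the two, $\ell_{\sigma'}(v) \ge \ell_\sigma(v)$. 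Averaging over $\sigma \in A_i$ and using the bijection $\phi$ yields $\Pr(a_v=\theta \mid A_i) \le \Pr(a_v=\theta\mid A_{i+1})$.

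The main point to check carefully is the distributional equality $\ell'_{\sigma'}(v) = \ell_\sigma(v)$. The actions of agents at indices below $i$ are identical in both settings. The only potential discrepancy is $w$, and $w$ is invisible to $v$ in both of the compared situations. Since $v$ is Bayesian and its decision rule depends only on the law of its observations, equal laws give equal learning rates.
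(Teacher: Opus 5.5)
Your proof is correct and follows the paper's argument. You use the same adjacent-transposition bijection between $A_i$ and $A_{i+1}$, the same split on whether the swapped agent is a neighbor of $v$, and an appeal to \cref{prop:improvement2} in the adjacent case. Your intermediate graph $G\setminus\{(v,w)\}$ only makes explicit a step the paper compresses into one line: \cref{prop:improvement2} compares two graphs under a single ordering, not one graph under two orderings.
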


A simple proof can be found in \cref{appendix:background}.
Using monotonicity, we demonstrate that the random-order learning error of a graph only increases linearly in the number of modifications.
The proof relies on the probability of any given agent arriving earlier in the random ordering before any of the vertex modifications. 
    
\begin{theorem}
    Given a graph $G$ and a vertex $v \in G$ with random-order learning rate $1 - \varepsilon$, after $k$ vertex deletions, $v$ obtains a learning rate at least $1 - (k+1)\varepsilon$.
\end{theorem}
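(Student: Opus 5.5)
Let $D = \{u_1, \dots, u_k\}$ be the deleted vertices, $G' = G - D$ the resulting graph on $n-k$ vertices, and $\ell'(v)$ the random-order learning rate of $v$ in $G'$. The plan is to couple a uniform ordering of $G$ with a uniform ordering of $G'$. Let $A \subseteq S_n$ be the event that $v$ precedes every vertex of $D$. Because all $k+1$ relative orders of $\{v\} \cup D$ are equally likely, $\Pr(A) = \frac{1}{k+1}$. On $A$, no vertex of $D$ is an ancestor of $v$ in $\vec G$: every ancestor precedes $v$, and every deleted vertex comes after $v$. Hence the action $a_v$ should depend only on the restriction of $\sigma$ to $G'$ and on the signals of vertices in $G'$.

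The key step is to compare $v$'s behavior in $G$ on $A$ with its behavior in $G'$. I would argue that, conditioned on $A$, the induced ordering on $V \setminus D$ is uniform over all $(n-k)!$ orderings. This holds because the relative order of $V\setminus D$ is independent of where the vertices of $D$ are placed once we only require them to follow $v$. It then remains to show that, for each fixed $\sigma \in A$ with restriction $\sigma'$, we have $\ell_\sigma(v) \le \ell'_{\sigma'}(v)$. I expect this to be the main obstacle. The ancestors of $v$ and their actions are identical in both graphs, but the Bayesian inferences may not be: an agent $w$ preceding $v$ knows the full graph and ordering, yet in $G$ it never sees any vertex of $D$, because $D$ arrives after $v$. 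One might therefore think every prior agent's decision rule coincides in $G$ and $G'$. That is true, since each agent's posterior depends only on the joint distribution of its observed neighbors' actions, and this distribution is determined by the subgraph of ancestors, which is the same in both graphs. I would make this precise by induction along $\sigma$: for each $w$ with $\sigma(w) \le \sigma(v)$, the action $a_w$ is the same function of the signals in $G$ and in $G'$. This gives the equality $\ell_\sigma(v) = \ell'_{\sigma'}(v)$ on $A$, which is stronger than the inequality we need.

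With this in hand, $\ell'(v) = \ell(v \mid A)$ holds in $G$. Applying \cref{lemma:conditionallearning} with $\Pr(A) = \frac{1}{k+1}$ gives $\ell'(v) \ge 1 - (k+1)\varepsilon$, which is the claim. \cref{lemma:monotonicity} is not strictly needed in this route. It could offer an alternative argument, by conditioning on $v$ being early, but the conditioning event $A$ is cleaner. The only delicate point is the induction showing that the decision rules agree, which rests on the observation that agents' beliefs depend only on their ancestor subgraph. The tie-breaking toward one's own private signal is the same in both graphs, so it does not interfere.
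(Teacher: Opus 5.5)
Your argument breaks at the claim that, conditioned on $A$, the induced ordering $\sigma'$ on $V\setminus D$ is uniform over all $(n-k)!$ orderings. That claim is false. The event $A$ is correlated with the rank of $v$ among the surviving vertices, because every vertex of $D$ is far more likely to land after $v$ when $v$ is early.

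Take $V=\{v,w,d\}$ and $D=\{d\}$. The orderings in $A$ are $vwd$, $vdw$, $wvd$, which restrict to $vw$, $vw$, $wv$, so $v$ is first in $G'$ with probability $2/3$, not $1/2$. In general, under $A$ the rank of $v$ in $\sigma'$ equals $i$ with probability $p_i=\binom{n-i}{k}/\binom{n}{k+1}$, which is nonincreasing in $i$. Conditioned on that rank, the other survivors are uniformly ordered, by exchangeability.

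Hence the identity $\ell'(v)=\ell(v\mid A)$ that you derive does not hold. For instance, let $G'$ be a star with center $v$ and two leaves, and let $D$ be one isolated vertex. Then $\ell'(v)-\ell(v\mid A)=\frac{1}{6}q(2q-1)(1-q)>0$ for every $q\in(1/2,1)$.

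The missing ingredient is exactly \cref{lemma:monotonicity}, which you set aside as unnecessary. Your observation that on $A$ all of $v$'s ancestors act identically in $G$ and $G'$ is correct. It yields $\ell(v\mid A)=\sum_{i=1}^{n-k}p_i\,\ell'(v\mid\sigma'(v)=i)$, so $\ell(v\mid A)$ is $v$'s learning rate in $G'$ under an ordering distribution biased toward placing $v$ early. By \cref{lemma:monotonicity}, the map $i\mapsto\ell'(v\mid\sigma'(v)=i)$ is nondecreasing. Since $(p_i)$ is nonincreasing, Chebyshev's sum inequality gives
\[
\ell'(v)=\frac{1}{n-k}\sum_{i=1}^{n-k}\ell'(v\mid\sigma'(v)=i)\ \ge\ \sum_{i=1}^{n-k}p_i\,\ell'(v\mid\sigma'(v)=i)=\ell(v\mid A)\ \ge\ 1-(k+1)\varepsilon,
\]
where the last step uses \cref{lemma:conditionallearning}. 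Equivalently, the uniform rank distribution stochastically dominates $(p_i)$. This is precisely the paper's proof.

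The step you flagged as the main obstacle, agreement of the decision rules on $A$, is the routine part. The real subtlety is the non-uniformity of the induced ordering, and it resolves in the favorable direction only because of monotonicity.
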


\begin{proof}
    Let $D$ be the set of $k$ deleted vertices, and let $G'$ be the graph obtained by deleting $D$ from $G$. For $n = \abs{G}$, the size of $G'$ is $n - k$. 
    Consider the probability distribution $\mathbb P_{v}$ of orderings on $G'$, obtained by first uniformly at random picking an ordering $\sigma$ of $G$ such that $v$ precedes all $w \in D$, and then removing all the $w \in D$ to obtain an ordering $\sigma'$ on $G'$.    We use $\ell(v \mid \mathbb P_{v})$ to denote the learning rate of $v \in G'$ under this distribution of orderings. Furthermore, let $A_v$ denote the event that $v$ precedes all vertices $w \in D$ in the original graph $G$.
    
    The probability $\Pr(A_v)$ under a uniformly random ordering of vertices in $G$ is $\frac{1}{k + 1}$, so by Lemma \ref{lemma:conditionallearning} we have
    \[
    \ell(v \mid A_v) \geq 1 - (k +1)\varepsilon.
    \]
    For any ordering $\sigma$ such that $A_v$ occurs, we can trivially remove all vertices in $D$ to obtain an ordering $\sigma'$ of $G'$. The learning rate of $v$ under $\sigma$ and $\sigma'$ is the same, since $v$ is not affected by any vertex that arrives after it. Therefore, 
    \[
    \ell(v \mid \mathbb P_v) = \ell(v \mid A_v) \geq 1 - (k + 1) \varepsilon.
    \]
    Now observe that in the distribution $\mathbb P_v$, the probability that $v$ is found at index $i$ is
    \[
    \Pr(\sigma(i) = v \mid \mathbb P_v) = \frac{\binom {n-i}{k}}{\binom {n}{k+1}}
    \]
    and for the orderings such that $\sigma(i) = v$, all other vertices are uniformly random in the remaining indices.
    Importantly, note that $\Pr(\sigma(i) = v \mid \mathbb P_v)$ is monotonically decreasing in $i$. Thus by Lemma \ref{lemma:monotonicity}, the learning rate of $v$ in $G'$ is
    \begin{align*}
        \ell(v) &= \frac{1}{n-k}\sum_{i=1}^{n-k} \ell(v \mid \sigma(i)=v)
        \geq \sum_{i=1}^{n-k} \frac{\binom{n-i}{k}}{\binom{n}{k+1}} \ell(v \mid \sigma(i)=v) \\
        &= \ell(v \mid \mathbb P_v) 
        \geq 1 - (k+1) \varepsilon
    \end{align*}
    as desired.
\end{proof}

The same approach works to prove the following analogous statements on other modifications. 

\begin{corollary}
    Given a graph $G$ and a vertex $v \in G$ with random-order learning rate $1 - \varepsilon$, after $k$ vertex insertions, $v$ obtains learning rate at least $1 - (k+1)\varepsilon$.
\end{corollary}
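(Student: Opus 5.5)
The natural plan is to mirror the proof of the preceding theorem, with the roles of $G$ and the modified graph exchanged. Let $I$ be the set of $k$ inserted vertices and $G'$ the resulting graph on $n+k$ vertices. I assume, as the paper's notion of vertex insertion suggests, that insertion adds only edges incident to vertices of $I$, so $G'[V]=G$. Let $B$ be the event, over uniform orderings of $G'$, that $v$ precedes every $w\in I$. Then $\Pr(B)=\frac{1}{k+1}$.

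\textbf{Step 1: reduce to $G$ on $B$.} On $B$, every ancestor of $v$ lies in $V$, and the edges among ancestors are those of $G$. Each such agent's Bayesian decision depends only on its own subnetwork and the ordering. So $a_v$ is distributed exactly as in $G$ under the induced ordering of $V$. Let $Q$ be the induced distribution on orderings of $G$. Given $v$'s index $i$ among $V$, the other vertices of $V$ are uniform under $Q$, and
\[
Q(\sigma(v)=i)=\frac{\Pr(\sigma(v)=i)\Pr(B\mid \sigma(v)=i)}{\Pr(B)}\le \frac{k+1}{n}.
\]
Write $e_i$ for $v$'s error in $G$ at index $i$. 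Since $\frac1n\sum_i e_i=\varepsilon$, we get
\[
1-\ell_{G'}(v\mid B)=\sum_i Q(i)\,e_i\le (k+1)\varepsilon .
\]
This is the same computation as in \cref{lemma:conditionallearning}.

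\textbf{Step 2: remove the conditioning on $B$.} I would write the unconditional rate as
\[
\ell_{G'}(v)=\frac{1}{n+k}\sum_{j}\ell_{G'}(v\mid \sigma(v)=j).
\]
By \cref{lemma:monotonicity} applied in $G'$, the terms $\ell_{G'}(v\mid\sigma(v)=j)$ are nondecreasing in $j$. Let $w_j=\Pr(\sigma(v)=j\mid B)\propto\binom{n+k-j}{k}$, which is nonincreasing in $j$. By Chebyshev's sum inequality, the uniform average is at least the $w$-weighted average:
\[
\ell_{G'}(v)\ge\sum_j w_j\,\ell_{G'}(v\mid\sigma(v)=j).
\]
It then remains to show
\[
\sum_j w_j\,\ell_{G'}(v\mid \sigma(v)=j)\ \ge\ \sum_j w_j\,\ell_{G'}(v\mid \sigma(v)=j,B)=\ell_{G'}(v\mid B),
\]
and Step 1 finishes the argument.

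\textbf{Main obstacle.} The hard part is the last inequality. It is the one point where the deletion argument does not transfer verbatim. In the theorem, conditioning on $v$'s position left the other vertices uniform in the reduced graph. Here, conditioning on $B$ at a fixed position $j$ forces the $j-1$ predecessors of $v$ to come from $V$. Unconditionally, some predecessors may be inserted vertices, which can feed misleading information to $v$'s ancestors.

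First, I would try a coupling that fixes $j$ and the set $S$ of $v$'s predecessors. Write $S=S_V\cup S_I$ with $S_V\subseteq V$ and $S_I\subseteq I$. Compare $v$ with predecessor set $S$ to $v$ with predecessor set $S_V$ together with $|S_I|$ extra vertices of $V$. By \cref{prop:improvement2}, extra vertices of $V$ only help $v$ when they attach directly to $v$. So I would aim to show that dropping $S_I$ costs no more than the gain handled by monotonicity.

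If this fails, the fallback is to bound the error on $B^c$ without the conditioning trick. Decompose by the number $m$ of inserted vertices preceding $v$. For each $m$, rerun Steps 1--2 with $B_m$ defined as the event that exactly those $m$ inserted vertices precede $v$, treating them as part of a fixed prefix. Then sum the contributions, checking that the total error remains at most $(k+1)\varepsilon$.
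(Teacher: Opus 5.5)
There is a genuine gap. Your Step 1 and the Chebyshev step are fine. The inequality you reduce to at the end of Step 2, however, is false in general, so this route cannot be completed.

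Take $G$ to be a star centered at $v$ with $n-1$ leaves, insert a single isolated vertex $w$, and write $L_i$ for $v$'s learning rate in $G$ at position $i$. Then $\ell_{G'}(v\mid\sigma(v)=j,B)=L_j$, whereas
\[
\ell_{G'}(v\mid\sigma(v)=j)=\Bigl(1-\tfrac{j-1}{n}\Bigr)L_j+\tfrac{j-1}{n}L_{j-1}.
\]
So each left-hand term is at most its right-hand counterpart, strictly for $j=3$ (where $L_2=q<L_3$). Your weighted sum is therefore strictly smaller than $\ell_{G'}(v\mid B)$, even though here $\ell_{G'}(v)=\ell_G(v)$: the slack was already spent in the Chebyshev step.

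Fixing $v$'s own position is the wrong matching, because inserted predecessors displace vertices of $V$ and can additionally corrupt $v$'s ancestors. Your coupling runs the same wrong way, since extra $V$-predecessors only improve the $B$ side. The fallback never says how to control $v$ when inserted vertices precede it and feed its ancestors, which is the whole difficulty.

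The missing ingredient is to apply \cref{prop:improvement2} ordering by ordering.
\begin{enumerate}
\item On $B^c$, let $w^*$ be the first inserted vertex. Delete every edge between $v$ and a neighbor arriving after $w^*$; this can only lower $\ell_\sigma(v)$.
\item Now all of $v$'s in-neighbors and their ancestors precede $w^*$, so they lie in $V$ with the edges of $G$. Hence $v$ acts exactly as in $G$ with $v$ moved into the slot just before $w^*$.
\item Combined with $B$, in every ordering $v$ does at least as well as in $G$ at position $R+1$. Here $R$ is the number of vertices of $V\setminus\{v\}$ preceding the first element of $\{v\}\cup I$.
\item Given $R$, the rest of $V$ is uniformly ordered, and $\Pr(R+1=i)=\binom{n+k-i}{k}/\binom{n+k}{k+1}\le\frac{k+1}{n+k}$. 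So the error is at most $\frac{k+1}{n+k}\sum_i e_i\le(k+1)\varepsilon$.
\end{enumerate}
The law of $R$ does not depend on which element of $\{v\}\cup I$ comes first. So this argument gives exactly $\ell_{G'}(v\mid B^c)\ge\ell_{G'}(v\mid B)$, the comparison your Step 2 needed. It gets there by matching the position of the first special vertex rather than the position of $v$.

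The paper only asserts that the deletion argument carries over. This improvement-principle step is what actually makes it carry over to insertions.
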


\begin{corollary}
    Given a graph $G$ and a vertex $v \in G$ with random-order learning rate $1 - \varepsilon$, after $k$ edge deletions or insertions, $v$ obtains learning rate at least $1 - \frac{2k+1}{2}\varepsilon$.
\end{corollary}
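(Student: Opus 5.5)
The plan is to mimic the proof of the vertex-deletion theorem. We isolate an event on the ordering under which none of the modified edges can influence $v$. We then lower bound the conditional learning rate via \cref{lemma:conditionallearning}. Finally we transfer this bound to the uniform ordering on the modified graph $G'$ using \cref{lemma:monotonicity}. Note that edge modifications do not change the vertex set, so orderings of $G$ and $G'$ are literally the same permutations. This removes the bookkeeping of deleting vertices from $\sigma$.

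The key observation is this: an edge $(a,b)$ is oriented from the earlier to the later endpoint. It can only affect $a_v$ if both endpoints precede $v$, or if one endpoint is $v$ and the other precedes $v$. Otherwise the edge enters $v$'s subnetwork $B(v)$ at no point, and the joint distribution of $(\theta, a_v)$ is identical in $G$ and $G'$. So it suffices that \emph{for every modified edge, at least one endpoint other than $v$ lies after $v$, or the edge is incident to $v$ and its other endpoint lies after $v$}.

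Let $D$ be the set of endpoints of the $k$ modified edges, excluding $v$, with $m=\abs{D}$. Let $A$ be the event that, among $D\cup\{v\}$, $v$ is in position $1$ or $2$, with the following exception. If $v$ is in position $2$, the unique vertex $u\in D$ preceding it must not be the partner of $v$ in a modified edge. Under $A$, every modified edge not incident to $v$ has at least one endpoint after $v$: only one vertex of $D$ precedes $v$, and an edge has two endpoints. Every edge incident to $v$ has its partner after $v$. Hence $\ell(v\mid A)$ is the same in $G$ and $G'$.

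Next I bound $\Pr(A)$, splitting on whether $v$ is an endpoint of a modified edge.
\begin{itemize}
\item If $v$ is not an endpoint, then $m\le 2k$ and $\Pr(A)=2/(m+1)\ge 2/(2k+1)$.
\item If $v$ is an endpoint of $j\ge 1$ modified edges, then $m\le 2k-j$. The forbidden partners number at most $j$, and $\Pr(A)=\frac{1}{m+1}+\frac{m-j'}{m(m+1)}$, where $j'\le j$ is the number of distinct partners. A short calculation with $m\le 2k-j'$ should still give $\Pr(A)\ge 2/(2k+1)$; I expect this to be routine but will check it.
\end{itemize}
Then \cref{lemma:conditionallearning} gives $\ell(v\mid A)\ge 1-\frac{2k+1}{2}\varepsilon$ in $G$, hence also in $G'$.

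The main obstacle is the transfer step: passing from the conditional rate $\ell(v\mid A)$ in $G'$ to the unconditional rate in $G'$. As in the theorem, I would show that under $A$ the probability that $v$ sits at index $i$ is nonincreasing in $i$. Conditioned on $\sigma(v)=i$, the remaining vertices should be uniformly distributed subject to $A$. I would verify that, conditioned on $\sigma(v)=i$ and on $A$, the law of the rest of the ordering matters for $v$ only through the same conditional learning $\ell(v\mid\sigma(v)=i)$; this works because $v$'s behaviour is identical in $G$ and $G'$ under $A$. The weight of index $i$ is
\[
\binom{n-i}{m}+(m-j')(i-1)\binom{n-i}{m-1}
\]
up to normalization. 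Monotonicity of this weight in $i$ needs an argument for small $i$, and this is where I would work most carefully. If the direct monotonicity fails, I would try to dominate the weight by a nonincreasing sequence. Failing that, I would fall back to checking that these weights make $i\mapsto\Pr(\sigma(v)\le i\mid A)$ pointwise dominate the uniform CDF. Stochastic dominance of that kind is exactly what is needed, since \cref{lemma:monotonicity} says $\ell(v\mid\sigma(v)=i)$ is nondecreasing in $i$. Combining the two then gives $\ell(v)\ge\ell(v\mid A)\ge 1-\frac{2k+1}{2}\varepsilon$ in $G'$.
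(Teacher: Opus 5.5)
Your plan is the one the paper intends: it only says ``the same approach works,'' and $\frac{2k+1}{2}$ is exactly $1/\Pr(A)$ for $v$ being among the first two of at most $2k+1$ endpoints. However, two steps fail as written.

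First, the ``routine'' inequality $\Pr(A)\ge 2/(2k+1)$ in the case where $v$ is an endpoint is false exactly when every modified edge is incident to $v$. Then $m=j'=k$ and $\Pr(A)=1/(k+1)<2/(2k+1)$; for $k=1$ this reads $1/2<2/3$. For deletions, no cleverer event can rescue this, because the stated bound is itself false there. Let $v$'s only neighbor be $w$, and attach $h$ pendant vertices to $w$. As $h\to\infty$, $\ell(v\mid \sigma(w)<\sigma(v))\to 1$, by Proposition~\ref{prop:improvement1} and a Chernoff bound for $w$. Meanwhile $\ell(v\mid \sigma(v)<\sigma(w))=q$, so $\varepsilon\to(1-q)/2$. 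After deleting $(v,w)$, $v$ is isolated and $\ell'(v)=q=1-2\varepsilon+o(1)$, which is below $1-\frac{3}{2}\varepsilon$ for large $h$. With $k$ such hubs, deleting all $k$ edges gives $1-(k+1)\varepsilon+o(1)$. Insertions at $v$ can be discarded up front via Proposition~\ref{prop:improvement2}, and in every remaining configuration your count does give $\Pr(A)\ge 2/(2k+1)$. But if all $k$ modifications are deletions at $v$, the best possible bound is $1-(k+1)\varepsilon$.

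Second, the transfer step is not just delicate but unsupported by anything you cite. In the vertex-deletion theorem, $D$ disappears from $G'$. So conditioned on $v$ preceding $D$ and on $v$'s index, the induced ordering of $G'$ is uniform. Thus $\ell(v\mid A_v)$ is a decreasing-weight average of the same numbers $\ell'(v\mid\sigma(v)=i)$ whose uniform average is $\ell'(v)$, and Lemma~\ref{lemma:monotonicity} applies.

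Here $D$ remains in $G'$, and conditioning on $A$ and $\sigma(v)=i$ forces all but at most one vertex of $D$ behind $v$. So $\ell'(v\mid\sigma(v)=i,A)$ is in general not $\ell'(v\mid\sigma(v)=i)$. The fact that $v$ behaves identically in $G$ and $G'$ on $A$ only yields $\ell'(v\mid A)=\ell(v\mid A)$; it does not relate $\ell'(v\mid A)$ to $\ell'(v)$. Your weight-monotonicity or stochastic-dominance argument therefore averages the wrong quantities.

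What you would need is that pushing the vertices of $D$ after $v$ cannot increase $v$'s rate in $G'$ at a fixed index. Lemma~\ref{lemma:monotonicity} only moves $v$ itself, and this claim is not true in general. Suppose a vertex of $D$ is a moderately informed hub adjacent to many of $v$'s in-neighbors. Keeping it behind $v$ stops them from herding on it, which is exactly the mechanism of Proposition~\ref{prop:changeq}. Closing this step requires a genuinely new comparison, for example a coupling between orderings of $G'$ inside and outside $A$. The paper does not provide one either.
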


In this context, we note that a vertex insertion allows for the addition of an arbitrary set of edges from the new vertex to any existing vertices in the graph. Thus, we find that networks achieving random asymptotic learning are robust against a number of modifications dependent on the learning rate of the network:

\begin{theorem}\label{thm:randrobust}
    Given a network $\mathcal F$ with random-order asymptotic learning rate $1 - \varepsilon$, define the network $\mathcal F'$ where each $G'_n \in \mathcal F'$ is obtained from some $G_m \in \mathcal F$ after at most $k$ modifications, which may be vertex insertions/deletions or edge insertions/deletions. Then, $\mathcal F'$ obtains random-order learning rate at least $1 - (k+1)\varepsilon$. If $\varepsilon = o(1)$ and $k = o(1/\varepsilon)$, then $\mathcal{F'}$ achieves asymptotic learning. 
\end{theorem}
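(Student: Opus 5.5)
The plan is to prove a per-vertex bound that handles all $k$ modifications at once, in the spirit of the vertex-deletion theorem, and then average over vertices. Fix $G_m \in \mathcal F$ and the modified graph $G'_n$. For each surviving original vertex $v$, write $\ell_{G_m}(v) = 1 - \varepsilon_v$, so that $\frac1m\sum_v \varepsilon_v = 1 - L(G_m) =: \varepsilon_m$.

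For such a $v$, I would collect a set $T$ of at most $k$ ``touched'' vertices, one per modification:
\begin{itemize}
\item each deleted vertex;
\item each inserted vertex;
\item for each inserted or deleted edge, one endpoint other than $v$ (if $v$ itself is an endpoint, take the other endpoint).
\end{itemize}
Work on the common vertex universe $V(G_m)\cup V(G'_n)$ and let $A_v$ be the event that $v$ precedes every vertex of $T$. Then $\Pr(A_v) = 1/(|T|+1) \ge 1/(k+1)$. By Lemma~\ref{lemma:conditionallearning}, $\ell_{G_m}(v \mid A_v) \ge 1-(k+1)\varepsilon_v$.

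The key structural claim is that, on $A_v$, the ordered ancestor subgraph $B(v)$ is identical in $G_m$ and $G'_n$. Every modified edge has an endpoint after $v$, and every inserted or deleted vertex comes after $v$, so no modification touches an edge among vertices preceding $v$. Since a Bayesian agent's action is a function only of its ancestors' signals and of the ordered structure of its ancestor subgraph, $v$'s action has the same distribution in both graphs. Hence $\ell_{G'_n}(v \mid A_v) = \ell_{G_m}(v\mid A_v)$.

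Next I would pass from the conditional rate to the unconditional rate in $G'_n$ via Lemma~\ref{lemma:monotonicity}, as in the deletion proof. Conditioning on $A_v$ makes the position of $v$ in the induced ordering of $G'_n$ stochastically earlier than uniform. The weight on position $i$ is proportional to the number of ways to place $T$ after $i$, which is decreasing in $i$. So averaging the nondecreasing sequence $i \mapsto \ell(v\mid \sigma(v)=i)$ against uniform weights dominates averaging it against these decreasing weights.

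\textbf{Main obstacle.} For deletions and insertions, the vertices of $T$ are absent from $G'_n$, so given $v$'s position the rest of the ordering is uniform and the argument goes through exactly as before. For edge modifications, however, the chosen endpoints remain in $G'_n$ and are forced to lie after $v$. Thus, given $\sigma(v)=i$, the set of predecessors of $v$ is not a uniform $(i-1)$-subset. This is presumably why the paper's edge corollary carries the different constant $\frac{2k+1}{2}$. To handle it I would first try a coupling: condition on the position of $v$ and swap each forced endpoint with a uniformly chosen vertex, showing this reproduces the uniform law up to an event whose total probability is absorbed into the stated error. If that fails, I would fall back on the paper's separate corollaries: treat the edge modifications through their own event with $\Pr \ge 1/(k+1)$ and accept a constant no worse than $k+1$, since $\frac{2k+1}{2} \le k+1$.

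Finally I would average. The at most $k$ inserted vertices contribute error at most $k/n$. The surviving original vertices contribute
\[
\frac{1}{n}\sum_v (k+1)\varepsilon_v \le \frac{m}{n}(k+1)\varepsilon_m,
\]
and $n \ge m-k$. As $m\to\infty$ along the family we have $\varepsilon_m \to \varepsilon$ and $m/n \to 1$. Since $k/n \to 0$ whenever $k = o(1/\varepsilon)$ and $\varepsilon$ does not shrink faster than $1/n$ (otherwise the $k/n$ term is trivially $o(1)$ anyway), this gives $L(G'_n) \ge 1-(k+1)\varepsilon - o(1)$. In particular, if $\varepsilon=o(1)$ and $k=o(1/\varepsilon)$, then $(k+1)\varepsilon \to 0$ and $\mathcal F'$ learns.
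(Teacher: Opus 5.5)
Your outline follows the paper's. The paper proves the vertex-deletion case by exactly this argument: condition on $v$ preceding the touched vertices, apply \cref{lemma:conditionallearning}, note that $v$'s prefix is unchanged, then reweight by position. It then asserts that ``the same approach works'' for the other modification types, and reads off \cref{thm:randrobust}. But the step you flag is a genuine gap, and it is wider than you think. Inserted vertices are \emph{present} in $G'_n$; they are absent only from $G_m$. So conditioning on $A_v$ pins them after $v$ in the ordering of $G'_n$, exactly as it pins edge endpoints. The reweighting identity, which writes $\ell_{G'_n}(v\mid A_v)$ as $\sum_i p_i\,\ell_{G'_n}(v\mid\sigma(v)=i)$ with decreasing weights $p_i$, holds only when every vertex of $T$ is deleted.

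Your coupling does not close this gap. It keeps $\sigma(v)$ fixed and changes which vertices precede $v$. For a fixed ordering, a touched vertex or modified edge inside $v$'s prefix can lower $v$'s accuracy, for instance by making $v$'s ancestors herd as in \cref{prop:changeq}. So there is no pointwise comparison. Moreover, the event on which the coupled prefixes differ has probability bounded below by a constant, so it cannot be absorbed into $(k+1)\varepsilon$. Falling back on the corollaries only relocates the problem: the paper justifies them only by ``the same approach'', which meets this same obstacle. Applying them one modification type at a time would also multiply the factors $(k_j+1)$ rather than give $(k+1)\varepsilon$.

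The missing idea is to move $v$ rather than the touched vertices.
\begin{itemize}
\item Add the deleted vertices to $G'_n$ as isolated vertices; this changes no learning rate.
\item Let $E_r$ be the event that exactly $r$ vertices of $T$ precede $v$. Then $\Pr(E_r)=1/(|T|+1)$ and $A_v=E_0$.
\item Swap $v$ with the vertex in the $(r+1)$-st of the positions occupied by $T\cup\{v\}$. This is a measure-preserving bijection $E_0\to E_r$. It leaves every position before $v$'s old slot untouched and moves $v$ later.
\item So all of $v$'s old in-neighbors still precede it with unchanged actions, and $v$ may see more. By \cref{prop:improvement2}, $\ell_\sigma(v)$ cannot decrease. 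This is the argument of \cref{lemma:monotonicity}, which only needs the prefix before $v$ to be preserved.
\end{itemize}
Hence $\ell_{G'_n}(v\mid A_v)\le\ell_{G'_n}(v\mid E_r)$ for every $r$. Averaging over $r$ gives $\ell_{G'_n}(v)\ge\ell_{G'_n}(v\mid A_v)=\ell_{G_m}(v\mid A_v)\ge 1-(k+1)\varepsilon_v$, where the inserted vertices are added to $G_m$ as isolated vertices. This handles all modification types at once.

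Separately, your parenthetical about $\varepsilon$ shrinking faster than $1/n$ is wrong as stated. Use instead that $1-L(G)\ge(1-q)/|V(G)|$, since every vertex has rate $q$ when it comes first. Then $k=o(1/\varepsilon)$ forces $k=o(m)$, which gives both $k/n\to 0$ and $m/n\to 1$.
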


For general random-order learning networks, this bound is nearly tight.
As an example, consider the celebrity network \cite{Bahar2020-am}, which consists of complete bipartite graphs $K_{n-k, k}$ with $k = o(n)$ and $k = \omega(1)$.\footnote{When $k = \Omega(n)$ or $O(1)$, a poorly informed herd occurs with a constant probability. For example, when $k = \Omega(n)$, with a constant probability, the first celebrity sees only a constant number of commoners, so its probability of failure is at least a constant.} The larger partition of $n-k$ vertices can be thought of as ``commoners", while the smaller partition of $k$ vertices represents ``celebrities". In a given random ordering, with high probability, the first celebrity in the ordering observes the independent signals of many commoners. Upon aggregating this information into a single high-quality signal, the celebrity disseminates it to the rest of the network. When $k = \Omega(\log n)$, the learning rate of the celebrity network is at most $1 - \Theta(1/k)$, as at least $\Theta(n/k)$ commoners arrive before the first celebrity with probability $1 - o(1)$. Thus for $\varepsilon = \Theta(1/k)$, one may adversarially delete the $k = \Omega(1/\varepsilon)$ celebrities from the network, leaving $n - k$ lone vertices which do not learn. 

\begin{figure}
    \centering
    \includegraphics[width=0.6\linewidth]{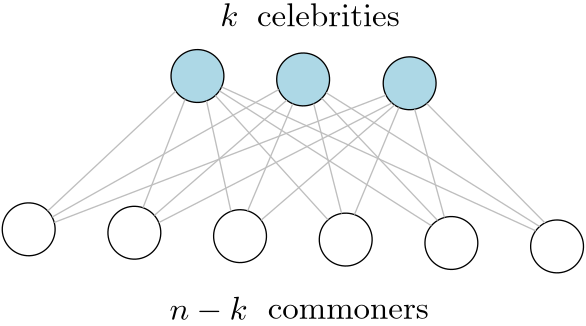}
    \caption{The celebrity graph as described in \cite{Bahar2020-am}} 
    \label{fig:celebrity}
\end{figure}

\section{Constructing Robust Networks}\label{section:algorithm}

\subsection{Boosting a Complete Network}\label{subsection:randomlearningexample}

When given a network that does not support random-order learning, a natural question is to ask how to achieve learning with minimal modifications, especially given the robustness guarantees proven in \cref{subsection:robust}.
Prior to this work, the understanding of networks obtaining random-order learning has been quite limited, both in terms of necessary/sufficient properties and tangible examples of graph families. We first note a key necessary property of random-order learning networks, relying on a simple observation of Bayesian learning: the strength of a signal is dependent on the number of independent private signals backing it. It is known that each decision is a deterministic function of the announced private signals before it (see Lemma 4 in Appendix B of \cite{Lu24-enabling}). 
Then, for any agent $v$ learning under random orderings, it is necessary that, with probability approaching $1$, a superconstant number of agents preceding $v$ act according to their own private signals. Conversely, if there is a constant probability that $v$'s subnetwork contains only $O(1)$ ancestors acting on their own private signals, with constant probability all of those ancestors receive incorrect private signals, implying that $v$ has a learning rate bounded away from $1$. 
We relate the number of agents predicting their own private signal to the size of the maximum independent set, and formalize a network-wide property in \cref{prop:indep_set}. The proof can be found in \cref{appendix:results4}.

\begin{proposition}\label{prop:indep_set}
    Any network $\mathcal{F}$ obtaining random-order learning must have maximum independent set size $\omega(1)$.
\end{proposition}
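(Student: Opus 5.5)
We prove the contrapositive. Suppose that along some subsequence the maximum independent set of $G_n$ has size at most a constant $c$. We will show that a constant fraction of the agents then have learning rate bounded away from $1$, so $L(G_n) \not\to 1$ along that subsequence.

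The first step is to identify the agents who can possibly act on their own signals. Fix an ordering $\sigma$ and call an agent \emph{informative} if its action depends on its private signal for some realization of the earlier actions. An agent with no earlier neighbors acts on its private signal. Call such agents \emph{sources}; the set of sources is exactly the set of agents that are first in the ordering among their closed neighborhoods. The sources form an independent set, so there are at most $c$ of them. By Lemma 4 of \cite{Lu24-enabling}, every action is a deterministic function of the announced private signals before it. The quantity to bound is therefore the number of agents whose signal actually feeds into later actions.

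The bounded independent set is used to force an early cascade. With $\alpha(G_n) \le c$, the graph is ``dense'': by a greedy argument, any set of $c+1$ vertices contains an edge. In particular, at most $c$ of the first $m$ agents in $\sigma$ can be pairwise nonadjacent. The plan is as follows.

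\begin{itemize}
\item Consider the event $E$ that the first $M$ agents in the ordering, for a suitable constant $M$ depending on $c$ and $q$, all receive the wrong private signal. This has probability $(1-q)^M$, a positive constant.
\item On $E$, argue inductively that every agent $w$ acts on the wrong value. Given the actions of its earlier neighbors and the common knowledge of $\sigma$, the posterior of $w$ is a deterministic function of those early wrong signals, plus at most one private signal per agent that deviated. The aim is to show that once roughly $c$ consistent wrong actions are visible, a single private signal cannot overturn them, exactly as in the classical cascade on $K_n$ \cite{Banerjee1992-ra}.
\item The needed ingredient is that every later agent sees enough of the early herd. An agent $w$ fails to be adjacent to many of the first $M$ agents only if those non-neighbors, together with $w$, contain a large independent set. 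Hence each $w$ is adjacent to all but at most $c-1$ of the first $M$ agents. Choosing $M \gg c$ makes $w$ see at least $M-c+1$ wrong actions.
\item Conditional on $E$, each such agent then errs. So $\Pr(a_w \ne \theta) \ge (1-q)^M$ for all $w$, and hence $L(G_n) \le 1-(1-q)^M$.
\end{itemize}

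The hard part is the inductive step. A Bayesian agent does not simply count neighbors' actions; it reasons about the whole DAG, and agents may in principle discount correlated herd actions. I would try first to replace exact Bayesian reasoning with the improvement-principle-style bound: an agent's decision is a function of at most the private signals of informative ancestors. I would then show that on $E$, no agent beyond the first $M$ is informative. Doing this by strong induction on position uses the fact that its observed neighborhood already reveals at least $M-c+1$ identical actions, which encode at least one more independent wrong signal than any single private signal can counter.

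If the direct cascade argument is too delicate, a fallback is the counting statement quoted before the proposition. It suffices to show that with constant probability only $O(1)$ agents act on their own signals. This follows by covering $V$ by the closed neighborhoods of a maximal independent set of size at most $c$, together with Lemma~\ref{lemma:conditionallearning} applied to the constant-probability event that one of these $\le c$ dominating vertices arrives early.
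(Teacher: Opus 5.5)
\textbf{There is a genuine gap: the step your whole induction rests on is false.} From $\alpha(G_n)\le c$ you only get that the set $U$ of non-neighbors of $w$ among the first $M$ agents satisfies $\alpha(U)\le c-1$, since any independent subset of $U$ extends by $w$. Nothing bounds $\abs{U}$, because $U$ can be a large clique.

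For instance, let $G_n$ be two disjoint copies of $K_{n/2}$, so $\alpha=2$. An agent $w$ in one clique is non-adjacent to every early agent in the other clique. That is about $M/2$ of the first $M$, and with probability about $2^{-M}$ it is all of them. So nothing guarantees that a later agent sees $M-c+1$ wrong actions.

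Relatedly, $E$ constrains the wrong set of agents. It fixes the signals of the first $M$ agents, but agents arriving after position $M$ can still have in-degree $0$ or $1$. In the example, if the first $M$ agents all fall in the first clique, the first agent of the second clique is a source. Such agents act on their own, possibly correct, signals and can seed an independent herd, so ``conditional on $E$ every agent errs'' fails.

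Your fallback does not repair this. You do not explain how a small dominating set arriving early bounds the number of agents acting on their own signals. Also, \cref{lemma:conditionallearning} only helps once you already have such a constant-probability event.

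\textbf{The missing idea} is to put the wrong-signal event on exactly the agents that can act on their own signal, wherever they sit in $\sigma$. The paper does this as follows.
\begin{itemize}
\item It lets $S$ be the agents of in-degree $0$ or $1$, i.e.\ those first or second in their closed neighborhood.
\item It bounds $\mathbb{E}\abs{S}\le\sum_v \frac{2}{1+\deg v}\le 2\alpha(G)$ via Caro--Wei.
\item Markov's inequality then gives $\abs{S}\le 4\alpha(G)$ with probability at least $1/2$.
\item All of $S$ is then wrong with probability at least $(1-q)^{4\alpha(G)}=\Omega(1)$.
\end{itemize}
In fact $\abs{S}\le 2\alpha(G)$ holds deterministically. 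Each agent of $S$ has at most one earlier neighbor, so $S$ induces a forest, and a forest contains an independent set of half its size.

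The induction then needs only that every agent outside $S$ has at least two earlier neighbors, all wrong by induction. No agent has to see a large part of an early herd. The Bayesian step you single out as hard is that such an agent follows unanimously wrong neighbors despite correlations. The paper simply asserts this (``any agent $v\in V\setminus S$ will follow its neighbors if they unanimously agree'') without further argument. So in either route, that claim is what still needs a real proof.
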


Now, we introduce a new type of network obtaining asymptotic truth learning under random orderings that relies primarily on constant-degree vertices. We contrast this with the celebrity graph, in which \textit{all} agents learn with high probability: there are no peripheral agents that do not learn yet still contribute significantly to the underlying learning scheme. In our construction, the constant-degree vertices do not learn themselves, but serve as guinea pigs for a select number of special vertices that propagate the signal to the rest of the network. Ignoring the constant-degree vertices, the rest of the network forms a complete graph, which is known not to achieve learning~\cite{Banerjee1992-ra}.

\begin{proposition}\label{prop:kcorecounterexample}
    There exists a network $\mathcal F$ obtaining random-order learning that no longer achieves learning when all vertices of degree $1$ are removed from each graph $G_n \in \mathcal{F}$. 
\end{proposition}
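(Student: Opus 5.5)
We build the network explicitly: let $G_n$ consist of a clique $K$ on $n - k$ vertices together with $k = k(n)$ pendant ``guinea pig'' vertices. Here $k = \omega(1)$ and $k = o(n)$; for definiteness take $k = \lceil\log n\rceil$. All $k$ guinea pigs are attached to a single hub vertex $h \in K$. Deleting all degree-$1$ vertices leaves the complete graph $K_{n-k}$. The maximum independent set of $K_{n-k}$ has size $1$, so by \cref{prop:indep_set} it cannot obtain random-order learning. This handles the ``no longer learns'' half of the statement. It remains to show that $G_n$ itself learns under a uniformly random ordering.

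\textbf{Step 1: the hub learns.} Each guinea pig $g$ has $h$ as its only neighbor. Whenever $g$ precedes $h$, $g$ sees nothing and announces its private signal $p_g$. By Bayesian rationality and \cref{prop:improvement2}, $h$ does at least as well as an agent who sees only the pendants arriving before it and ignores its clique neighbors. The number of pendants preceding $h$ is uniform on $\{0,\dots,k\}$, so with probability $1 - o(1)$ it is $\omega(1)$. Conditioned on this, those announcements are i.i.d.\ signals with accuracy $q > 1/2$, and a majority vote is correct with probability $1-o(1)$ by a Chernoff bound. Hence $\ell(h) = 1 - o(1)$.

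\textbf{Step 2: propagate through the clique.} A clique vertex $u$ arriving after $h$ sees $a_h$, together with every earlier clique vertex. By \cref{prop:improvement2}, $u$ does at least as well as the agent obtained by deleting its other edges and keeping the edge to $h$. That agent sees only $a_h$. Heuristically it copies $a_h$ once $\ell_\sigma(h) > q$, which gives accuracy at least $\ell_\sigma(h)$, and otherwise it uses its own signal with accuracy $q$. Rather than rely on this case analysis, it is cleaner to note that a single observed neighbor of accuracy $\ell$ yields Bayesian accuracy at least $\max(\ell, q)$. So every $u$ with $\sigma(u) > \sigma(h)$ has conditional accuracy at least $\ell(h \mid \sigma)$.

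The obstacle is that $h$ is itself a clique vertex. When $h$ arrives it may see earlier clique agents who have herded, and a herd could override the pendant evidence. This is the monotonicity issue flagged after \cref{lemma:monotonicity}, and Step 1 only sidesteps it through the improvement principle. Step 1 is fine because $h$ is Bayesian: extra observations never hurt.

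The real loss comes from clique vertices arriving before $h$. This is a fraction of order $\sigma(h)/n$, which is uniform in $[0,1]$, so a constant fraction. To avoid this I would modify the construction: use $m = \omega(1)$, $m = o(n)$ hubs $h_1,\dots,h_m$, each with its own $k = \omega(1)$ private pendants, so that $n = (n-mk) + mk$ with $mk = o(n)$. Then:
\begin{itemize}
\item Each hub learns by Step 1.
\item The first hub in the ordering appears before index $o(n)$ with probability $1-o(1)$.
\item Every clique vertex after the first hub inherits that hub's accuracy.
\end{itemize}
Conditioning on the hub's position costs only $o(1)$ via \cref{lemma:conditionallearning}, as long as the position events have probability $1-o(1)$. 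Guinea pigs do not learn, but they number $mk = o(n)$. Altogether $L(G_n) \ge (1 - o(1))(1-o(1)) \to 1$.

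The main technical care goes into one point. The first hub's conditional accuracy, given that it arrives early, must still be $1-o(1)$. This holds because the pendant count preceding it is independent of its clique position. The bound in Step 1 therefore survives conditioning on early arrival, and \cref{lemma:monotonicity} is not needed.
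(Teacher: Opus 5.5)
Your final construction ($m=\omega(1)$ clique hubs, each with $k=\omega(1)$ private pendants, $mk=o(n)$) is the paper's, and the ``no longer learns'' half via \cref{prop:indep_set} is fine. The gap is in the propagation step, at the point you flag yourself. You route all learning through the \emph{first} hub $h^*$ and claim its pendant count is independent of its position, so that the Step~1 bound survives conditioning on early arrival. This is false: arriving early is exactly what leaves a hub few of its own pendants to observe. Conditioned on $h_i$ being first among the hubs, the number of $h_i$'s pendants preceding it is distributed as the number of pendants before the first hub in a uniformly random interleaving of $k$ pendants and $m$ hubs. It is at least $j$ with probability at most $\bigl(k/(m+k)\bigr)^j$, and its mean is $k/(m+1)$. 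Your appeal to \cref{lemma:conditionallearning} fails for the same reason: for a fixed hub, the event ``$h_i$ is the first hub'' has probability $1/m\to 0$, not $1-o(1)$, so the lemma gives only $1-m\varepsilon$. Since you impose no relation between $m$ and $k$, take $m=k$. Then with probability $1/2$ the hub $h^*$ sees none of its pendants. The clique agents before it are non-hubs seeing only each other, a pure complete-graph sequence that cascades on a wrong value with constant probability, and $h^*$ follows that cascade. So $\mathbb{E}_\sigma[1-\ell_\sigma(h^*)]$ is bounded away from $0$, and your bound $L(G_n)\ge \mathbb{E}_\sigma[(N_\sigma/n)\,\ell_\sigma(h^*)]$ does not tend to $1$. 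Here $N_\sigma$ is the number of clique agents after $h^*$.

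Your route can be repaired because the statement is existential: impose $k/m\to\infty$, e.g.\ $m=\log n$ and $k=\log^2 n$. Step~1 gives a per-ordering bound $1-\ell_\sigma(h^*)\le e^{-cP_\sigma}$, where $P_\sigma$ is the number of $h^*$'s pendants before it. Combined with the distribution above, this yields $\mathbb{E}_\sigma[1-\ell_\sigma(h^*)]\le O(jm/k)+e^{-cj}$ for any slowly growing $j$. Then $L(G_n)\ge \mathbb{E}_\sigma[N_\sigma/n]-\mathbb{E}_\sigma[1-\ell_\sigma(h^*)]\to 1$. This corrected version really does avoid \cref{lemma:monotonicity}, but only when $k\gg m$. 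The paper instead shows $\ell(h_i)\to 1$ unconditionally for each hub and then applies \cref{cor:goodneighborhood}, since every clique agent has $m=\omega(1)$ learning neighbors. Through \cref{lemma:goodneighborhood} and monotonicity, that result works with a later-arriving learning neighbor rather than the first one. This absorbs the early-arrival penalty and covers all $m,k=\omega(1)$.
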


\begin{figure}
    \centering
    \begin{subfigure}[c]{0.42\linewidth}
        \centering
        \includegraphics[width=\linewidth]{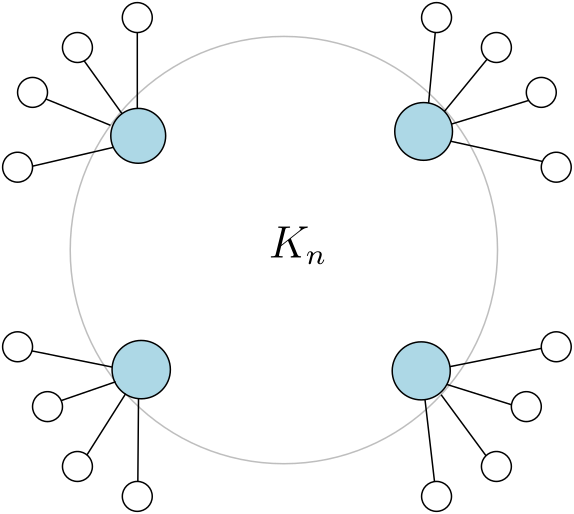}
        \caption{}
        \label{fig:covid}
    \end{subfigure}
    \hfill 
    \begin{subfigure}[c]{0.5\linewidth}
        \centering
        \includegraphics[width=\linewidth]{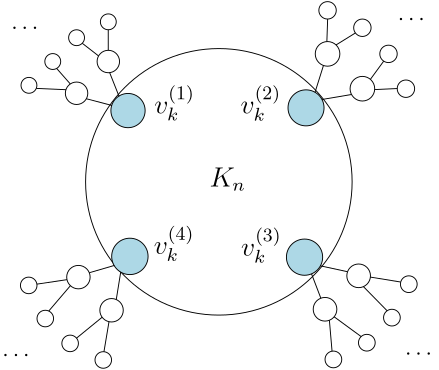}
        \caption{}
        \label{fig:embed}
    \end{subfigure}
    \caption{(a): By equipping a small number of vertices in a $K_n$ graph with many guinea pigs, the entire network can be boosted to achieve learning. (b): Example of $K_n$ being boosted by $\sqrt{n}$ complete binary trees of size $\log \log n$ to achieve random-order asymptotic learning.}
\end{figure}

An instance of such a network is shown in Figure \ref{fig:covid}. For any $g(n), h(n) = \omega(1)$ such that $g(n) \cdot h(n) = o(n)$, we can equip $g(n)$ special vertices in the complete graph $K_n$ with $h(n)$ guinea pigs with degree $1$. The number of added vertices is a negligible fraction of the entire network, and probability approaching $1$, each special vertex sees $\Omega(h(n))$ of its guinea pigs, which all give their own private signals. By a Chernoff bound, we may deduce that each special vertex learns, and so by \cref{cor:goodneighborhood}, the entire  graph $K_n$ achieves learning. 

Despite the complete graph having poor learning capabilities on its own, Proposition \ref{prop:kcorecounterexample} implies that it can achieve learning upon ``boosting'' a small number of its vertices.
One may note that equipping a vertex with a superconstant number of guinea pigs is not the only way to boost a vertex to achieve a high learning rate. Indeed, we show that \textit{any} construction permitting a vertex to achieve high learning rates under fixed orderings can be embedded into a $K_n$ network to boost the network at large.
One example of this using complete binary trees is shown in Figure \ref{fig:embed}.

Formally, suppose we are given a network $\mathcal{F'} = \{G_n'\}$ equipped with strategic orderings $\{\sigma_n\}$ such that for all $k \geq 1$, imposing the ordering $\sigma_k = \{v_1, v_2, \cdots, v_k\}$ on $G_k'$ permits $v_k$ to learn, i.e. $\ell_{\sigma_k}(v_k) = 1 - o(1)$. 
Now, for sufficiently large $n$, let $k = \log \log n$. We begin with the complete graph $K_{n}$, and pick $\sqrt{n}$ distinct vertices, denoted $S = \{v_{k}^{(i)}: i \in [\sqrt{n}]\}$. For each vertex $v_k^{(i)}$, attach a copy of $G'_{k} \setminus v_k$ so the induced graph on $\{G'_{k} \setminus v_k\} \cup v_k^{(i)} = G'_k$. 
For $m = n + \sqrt{n} (\log \log n - 1)$, we set $G_m$ to be the final result, and allow $\mathcal{F} = \{G_m\}$ to be our final network.

\begin{proposition}[Embedding arbitrary learning structures into random-order learning networks]\label{prop:embed}
Given any network $\mathcal{F}' = \{G_n'\}$ with strategic orderings $\{\sigma_n\}$ for which there exists $v = \{v_n: v_n \in G_n'\}$ obtaining asymptotic learning, the construction described above yields a network $\mathcal{F}$ that achieves random-order asymptotic learning.
\end{proposition}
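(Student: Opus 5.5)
The plan is to work with fixed orderings $\sigma$ and to isolate a high-probability event on $\sigma$ under which every clique vertex arriving late enough has an earlier neighbor whose learning rate, \emph{for that fixed} $\sigma$, is $1-o(1)$. Since the $\sqrt n(\log\log n-1)$ gadget vertices are an $o(1)$ fraction of $G_m$, it then suffices to show that almost all of the $n$ clique vertices learn.

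\textbf{Step 1 (good gadgets).} Call the gadget at $v_k^{(i)}$ \emph{good} under $\sigma$ if the relative order that $\sigma$ induces on the $k$ vertices of its copy of $G'_k$ coincides with $\sigma_k$. In particular, $v_k^{(i)}$ is then last within its gadget.

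\begin{itemize}
\item Each gadget is good with probability exactly $1/k!$, and the gadgets have disjoint vertex sets, so these events are mutually independent.
\item With $k=\log\log n$ we have $k! \le k^k = (\log n)^{\log\log\log n} = n^{o(1)}$. Hence the number $Z$ of good gadgets has mean $\sqrt n/k! = n^{1/2-o(1)}\to\infty$.
\item A Chernoff bound then gives $Z \ge \tfrac12\sqrt n/k!$ with probability $1-o(1)$.
\end{itemize}

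\textbf{Step 2 (a good gadget's root learns pointwise).} Fix any $\sigma$ under which gadget $i$ is good.

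\begin{itemize}
\item Every non-root vertex of the gadget is adjacent only to gadget vertices, and $v_k^{(i)}$ comes after all of them. So the actions of those vertices are distributed exactly as in $G'_k$ under $\sigma_k$.
\item The only difference for $v_k^{(i)}$ itself is its extra clique edges. Deleting them is precisely the edge removal allowed in \cref{prop:improvement2}, since only edges incident to $v_k^{(i)}$ are removed.
\item Hence $\ell_\sigma(v_k^{(i)}) \ge \ell_{\sigma_k}(v_k) = 1-\delta_k$ with $\delta_k = o(1)$. This bound holds pointwise in $\sigma$, not merely on average.
\end{itemize}

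\textbf{Step 3 (clique vertices).} Fix a clique vertex $w$ and let $E_w$ be the event that some good root precedes $w$ in $\sigma$.

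\begin{itemize}
\item For $\sigma\in E_w$, let $u(\sigma)$ be such a root. The corollary of \cref{prop:improvement2} (an agent does at least as well as any earlier neighbor, for fixed $\sigma$) gives $\ell_\sigma(w) \ge \ell_\sigma(u(\sigma)) \ge 1-\delta_k$. Since $w$ is adjacent to every clique vertex, $u(\sigma)$ is indeed an earlier neighbor.
\item Averaging over $\sigma$ gives $\ell(w) \ge (1-\delta_k)\Pr(E_w)$.
\item To bound $\Pr(E_w)$, condition on which gadgets are good, a condition determined only by within-gadget relative orders. On $Z\ge \tfrac12\sqrt n/k!$, the positions of $w$ and of the $Z$ good roots among one another are still a uniform relative order, so $w$ is first among them with probability $1/(Z+1) = o(1)$.
\item Therefore $\Pr(E_w) = 1-o(1)$, uniformly in $w$, and so $L(G_m) \to 1$.
\end{itemize}

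\textbf{Expected obstacle.} The delicate point is the conditioning. \cref{cor:goodneighborhood} cannot be applied directly, because a root's \emph{random-order} learning rate is only about $1/k!$; most gadgets are not good. The argument must instead stay at the level of fixed $\sigma$, using the pointwise bound of Step 2. It must also check that the goodness event, which depends only on within-gadget order, does not bias the position of $w$ relative to the good roots. I would verify this by noting that a uniform permutation decomposes into independent within-gadget orders plus a uniform interleaving.
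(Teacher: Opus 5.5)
Your Steps 1 and 2 are correct. The pointwise-in-$\sigma$ framing is a clean way to organize the argument: a good gadget gives $\ell_\sigma(v_k^{(i)})\ge 1-\delta_k$, and the improvement principle then transfers this to any later clique vertex. The gap is in the last bullet of Step 3, which claims that, after conditioning on which gadgets are good, the relative order of $w$ and the good roots is uniform, so that $w$ precedes all of them with probability $1/(Z+1)$. This is false. Goodness of gadget $i$ forces its internal order to be $\sigma_k$, which puts $v_k^{(i)}$ \emph{last} among its $k$ gadget vertices, so its position is the maximum of $k$ uniformly random positions. Already for a single gadget, $\Pr(v_k^{(i)} \text{ precedes } w \mid \text{gadget } i \text{ good}) = 1/(k+1)$ rather than $1/2$. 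You flagged exactly this issue under ``Expected obstacle,'' and your decomposition (independent internal orders plus a uniform interleaving) is correct. But it shows the opposite of what you need. Goodness is independent of the interleaving, yet the root's position depends on both, so ``some good root precedes $w$'' is exactly the event ``some good gadget lies \emph{entirely} before $w$.'' This bias of good roots toward late positions is precisely the obstacle the paper's proof is built around.

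Quantitatively, if $w$ sits at relative position $t$, a fixed gadget lies entirely before $w$ with probability about $t^k$. Hence $\Pr(\overline{E_w})$ is about $\int_0^1 e^{-Zt^k}\,dt$, which is of order $Z^{-1/k}$. What you need is $Z^{1/k}\to\infty$, not merely $Z\to\infty$. Your argument uses $k=\log\log n$ only through $Z\to\infty$, so it would equally ``prove'' the claim for $k$ with $k!\approx\sqrt n/\log n$. There $Z\approx\log n$ but $Z^{1/k}\to 1$, so a typical clique vertex has no good root before it with probability $1-o(1)$.

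The repair is the paper's argument:
\begin{enumerate}
\item Fix $j=n/k$. Each gadget lies entirely in the first $j$ positions with probability at least $\bigl((j-k)/(m-k)\bigr)^k\ge(2k)^{-k}=n^{-o(1)}$.
\item These indicators have nonpositive covariances, so by Chebyshev at least $n^{1/2-o(1)}$ gadgets lie before $j$ with probability $1-o(1)$.
\item Internal orders are independent of the interleaving, so one of these gadgets is good except with probability $\exp(-n^{1/2-o(1)})$.
\item Then $\Pr(\overline{E_w})\le\Pr(\sigma(w)<j)+o(1)=O(1/k)+o(1)$, and your pointwise bound $\ell(w)\ge(1-\delta_k)\Pr(E_w)$ finishes the proof.
\end{enumerate}
The paper instead conditions on $\sigma(w)=j$ and invokes monotonicity.
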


A complete proof of both \cref{prop:kcorecounterexample} and \cref{prop:embed} can be found in \cref{appendix:results4}. To summarize the proof idea, the probability over the random ordering that the agents in any $G'_k$ copy are ordered identically to the sequence given by $\sigma_k$ is
$$\frac{1}{k!} \geq \frac{1}{(\log \log n)^{\log \log n}} \geq \frac{1}{n^{\delta}}$$
for any $\delta > 0$. For any copy of the graph $G'_k$ for which this occurs, the corresponding $v_k^{(i)}$ achieves learning by \cref{prop:improvement2}, i.e., its learning rate is at least that if it could only see its neighbors in the copy of $G'_k$. As the copies of $G_k'$ are disjoint, we conclude that each $v_k^{(i)}$ has a probability $\geq 1/n^{\delta}$ of achieving learning. There are $\sqrt{n}$ such copies of $v_k^{(i)}$, so we expect many copies to achieve learning.

The key technical challenge arises from the fact that the earlier copies of $v_k^{(i)}$ have an exponentially lower probability of learning, as necessarily all other vertices in its corresponding $G_k'$ copy must appear before it. To circumvent this, we condition on the number of graphs whose entire vertex set appears before an index $j = n/\omega(1)$ for some small superconstant denominator. Since the internal ordering of each graph copy is independent, we proceed as above to show that at least one vertex $v_k^{(i)}$ before index $j$ achieves learning. Thus, each vertex $u$ after index $j$ learns with probability approaching $1$, and as each agent's learning rate monotonically increases in its index, $u$ achieves learning in aggregate. 

We find this result rather surprising, as the inherent difficulty of preserving key graph properties under random orderings would suggest that complex structures such as binary trees cannot induce random-order learning. 
Notably, each boosted vertex $v_i$ has a low probability of learning on its own, but it is almost guaranteed that in aggregate, at least one $v_i$ learns. 

\subsection{Algorithmically Boosting Networks}\label{subsection:algorithm}

We now discuss how to enable networks to achieve random-order learning while minimizing the number of modifications (limited to edge insertions/deletions and vertex insertions),\footnote{
Our analysis excludes vertex deletions, as events on the induced subgraph of the original vertices may change non-uniformly under deletions.
} which can be thought of as a generalization of the boosting approach depicted in \cref{fig:covid}.
Our algorithm runs in polynomial time, taking a fixed graph as input. To obtain the asymptotic guarantees for an entire network, one would need to execute our algorithm on each $G_n \in \mathcal{F}$. The proofs in this section can be found in \cref{appendix:algproofs}.

A key assumption we make is that the algorithm has access to a learning oracle that can determine whether the learning rate of a vertex is above a certain threshold or not. 

\begin{definition}\label{def:oracle}
    Let $f(v, t): V \times (0,1)\to \{0, 1\}$ be an oracle that outputs $1$ if $v$ learns at rate $\geq t$ under random orderings, and $0$ otherwise. 
\end{definition}

To the best of our knowledge, it is unclear whether such a function can be computed efficiently. We discuss a useful heuristic to simulate a learning oracle in Section \ref{subsection:oracle}.

When modifying a graph, the number of new vertices should be sublinear in the number of vertices in the original graph so as not to overrule the existing network. 
However, the number of vertices requiring boosting can be as large as $\Omega(n)$, as is the case in a sparse graph with constant average degree. To resolve this imbalance, we replicate the structure of the celebrity graph by connecting each non-learner with a small but superconstant number of learning vertices. This subroutine is described in BoostAgents.

\begin{procedure}[H]
\caption{BoostAgents ($G$, $S$, $k$)}
\label{proc:boostagents}
\begin{algorithmic}  
    \STATE Construct $T = \{z_{11}, \dots, z_{kk}\} \cup \{w_1, \dots, w_k\}$
    \STATE Add edges $(z_{ij}, w_i)$ for all $i, j \in [k]$
    \STATE For each $v \in S$ and $i \in [k]$, add edge $(w_i, v)$
\end{algorithmic}
\end{procedure}

\begin{figure}
    \centering   
    \includegraphics[width=0.65\linewidth]{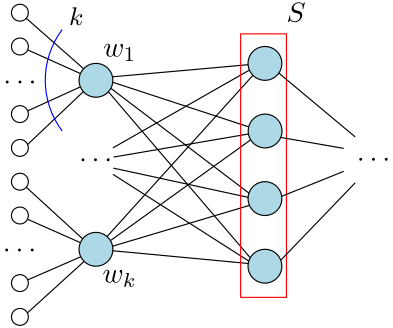}
    \caption{BoostAgents boosts the vertices $S \subseteq V$ by connecting them to an added set of celebrities $\{w_1, \cdots, w_k\}$.}
    \label{fig:boosting}
\end{figure}

As depicted in Figure~\ref{fig:boosting}, we introduce a set of $k$ ``celebrities" $\{w_i\}$, each equipped with $k$ independent guinea pigs to guarantee that their learning rate is high. By \cref{cor:goodneighborhood}, the added celebrities are capable of boosting the learning quality of the set $S$.

\begin{claim}
\label{claim:boost}
    Suppose $k = \omega(1)$. After BoostAgents on a set $S_n \subseteq V_n$, as $n \to \infty$, each $v \in S_n$ achieves random-order asymptotic truth learning.
\end{claim}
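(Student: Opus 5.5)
The plan is to first show that each added celebrity $w_i$ learns under a random ordering of the modified graph, and then to apply \cref{cor:goodneighborhood} to every $v \in S_n$, which is adjacent to all $k = \omega(1)$ celebrities. By the generalized improvement principle (\cref{prop:improvement2}), $w_i$ does at least as well as it would in the subgraph where its only edges go to its guinea pigs $z_{i1},\dots,z_{ik}$. In that subgraph each $z_{ij}$ has degree one, so whenever it precedes $w_i$ it sees no one and announces its private signal. Thus $w_i$ observes $X_i$ independent signals, each correct with probability $q > 1/2$, where $X_i$ is the number of its guinea pigs arriving before it. Acting as a Bayesian, it does at least as well as majority vote on these signals together with its own.

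The next step is to control $X_i$. Under a uniform random ordering, the relative order of $\{w_i, z_{i1},\dots,z_{ik}\}$ is uniform, so the rank of $w_i$ among these $k+1$ vertices is uniform and $X_i$ is uniform on $\{0,\dots,k\}$. Take any $m = m(k) = \omega(1)$ with $m = o(k)$, for example $m = \sqrt{k}$. Then $\Pr(X_i < m) \le (m+1)/(k+1) = o(1)$. Conditioned on $X_i \ge m$, a Chernoff bound shows the majority of the observed signals is correct with probability $1 - e^{-\Omega(m)}$. Hence $\ell(w_i) \ge 1 - o(1)$, uniformly in $i$, so every celebrity learns in the sense used by \cref{cor:goodneighborhood}.

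Finally, each $v \in S_n$ has the $k = \omega(1)$ neighbors $w_1,\dots,w_k$, all of which achieve learning. \cref{cor:goodneighborhood} then gives that $v$ achieves learning, which is the claim.

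The main obstacle is that \cref{cor:goodneighborhood} needs the neighbors' learning to hold in the full modified graph under its own uniform ordering; its proof must handle the degradation of a neighbor's rate when that neighbor arrives early (\cref{lemma:monotonicity}). The bound on $w_i$ is an unconditional random-order bound valid in the full graph, because \cref{prop:improvement2} lets us drop all of $w_i$'s edges to $S_n$. So the hypotheses hold as stated and the corollary applies. If one prefers a direct argument, I would first try the following. With probability $1 - o(1)$, $v$ is not among the earliest $\delta$-fraction of the vertices $\{v, w_1,\dots,w_k\}$, so it sees $\Omega(\delta k)$ celebrities. Conditioning on a celebrity's position being not too early costs only a constant factor in its error by \cref{lemma:conditionallearning}. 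Combining this with \cref{prop:improvement2}, with $v$ copying any one such celebrity, finishes the argument.
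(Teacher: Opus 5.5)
Your proposal is correct and follows essentially the same route as the paper. You use \cref{prop:improvement2} to reduce each celebrity $w_i$ to its guinea pigs, note that $w_i$'s rank among $\{w_i, z_{i1}, \dots, z_{ik}\}$ is uniform so it sees at least $\sqrt{k}$ independent signals with probability $1 - O(1/\sqrt{k})$, apply a Chernoff bound, and then invoke \cref{cor:goodneighborhood} for each $v \in S_n$. Your optional direct argument is not needed, and as written its bookkeeping is slightly off: probability $1-o(1)$ forces $\delta \to 0$, and then conditioning multiplies the error by a factor of order $1/\delta$, not a constant. That tradeoff is exactly what the proof of \cref{cor:goodneighborhood} already balances.
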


The number of vertices added by BoostAgents is $k^2 + k = \omega(1)$. As $k$ can be any arbitrary slow-growing function of $n$, the number of new vertices can be made to be within the optimum by an additive value that is slightly superconstant. However, the number of edges added is $|S|k + k^2$. While the additive term $k^2$ is small and can be ignored, $|S|$
may be as large as $\Omega(n)$. Hence, we focus on selecting a set $S$ that is of \textit{near-optimal} size, or within the optimum by a small multiplicative factor. 

To lower-bound the number of modifications necessary for a non-learning network to achieve learning, we note that the set of modified vertices must be able to reach almost all non-learners with constant probability.
It turns out that calling BoostAgents on a set of vertices maximizing coverage of non-learners is an appropriate method to enhance the learning rate of a network. We prove this in \cref{lemma:reachlearn}, where we argue that any non-learner reachable by many boosted vertices will also receive the effects of boosting.

\begin{lemma}\label{lemma:reachlearn}
    For a set $S \subseteq V$ on which BoostAgents has been called, suppose that each $u \in S$ has been equipped with $1/\varepsilon = \omega(1)$ many guinea pigs. Then any vertex $v$ that is reachable from $S$ w.p. $\geq 1 - \delta$ has learning rate at least $1 - \delta - 2\sqrt[3]{81\varepsilon}$.
    In particular, the probability $v$ is reachable by some $u \in S$ for which $\sigma(u) \geq n\varepsilon\alpha$ is at least $1 - \delta - \alpha\varepsilon$, for any $\alpha = \omega(1)$, $\alpha = o(\frac{1}{\varepsilon})$. 
\end{lemma}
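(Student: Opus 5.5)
We establish the second statement first, since it is purely about the random ordering, and then turn it into a learning-rate bound. Throughout, ``$v$ is reachable from $S$'' means that some $u \in S$ lies in $v$'s subnetwork $B(v)$ in the oriented graph $\vec G$. Fix a path witnessing reachability from some $u\in S$. Prepending the boosting structure gives a path from a celebrity $w_i$ to $u$; the guinea pigs themselves have degree one and so never lie on such a path. Let $E$ be the event that $v$ is reachable from some $u\in S$ with $\sigma(u)\ge n\varepsilon\alpha$. Its complement is contained in the union of two events: either $v$ is not reachable from $S$, which has probability at most $\delta$, or the reaching vertex lies among the first $n\varepsilon\alpha$ positions.

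To control the second event we would bound the probability that a fixed vertex (the relevant $u$, or $v$ itself) lands among the first $n\varepsilon\alpha$ positions, which is $\varepsilon\alpha$. To avoid a union bound over all of $S$, we pick the witness canonically, say the last vertex of $S$ on a witnessing path. If that vertex sits early while $v$ is still reached, one can argue via a path from a later element, or else bound the event by ``$v$ itself is early'', which has probability $\varepsilon\alpha$. This gives $\Pr(E)\ge 1-\delta-\alpha\varepsilon$. This canonical-choice step is the part we expect to be the delicate obstacle. The fallback is to absorb it into the error term.

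For the learning bound, let $u\in S$ satisfy $\sigma(u)\ge n\varepsilon\alpha$, which we may condition on since it depends only on $\sigma$. The vertex $u$ has $1/\varepsilon$ guinea pigs, so it is preceded by about $\alpha$ of them in expectation. A Chernoff bound then says that, except with probability $e^{-\Omega(\alpha)}$, $u$ sees at least $\alpha/2$ independent private signals from guinea pigs, each acting on its own signal. By \cref{prop:improvement2}, $u$ does at least as well as the majority of these signals, which errs with probability $e^{-\Omega(\alpha)}$. Applying the improvement principle along the directed path from $u$ to $v$ (the corollary that an agent does at least as well as previous agents in its subnetwork), $v$ errs with at most that probability on this event. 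This is combined with \cref{lemma:conditionallearning} if one conditions on $E$.

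Summing the errors gives $1-\ell(v)\le \delta+\alpha\varepsilon+e^{-c\alpha}+(\text{Chernoff slack})$. We then optimize $\alpha$; one also has to account for $u$ possibly being too late relative to its guinea pigs, which costs a $1/\alpha$-type term. Balancing polynomial terms such as $\alpha\varepsilon$ against $1/\alpha^2$ from a Chebyshev-style bound gives $\alpha\approx\varepsilon^{-1/3}$ and an error of order $\varepsilon^{2/3}$. The stated $2\sqrt[3]{81\varepsilon}$ suggests the authors used Chebyshev with constants yielding an error of order $\varepsilon^{1/3}$, so we would choose $\alpha$ as a power of $\varepsilon$ and track the constants to reach exactly $2\sqrt[3]{81\varepsilon}$.
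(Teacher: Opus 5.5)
\paragraph{Main gap: the second claim.} Your bound $\Pr(E)\ge 1-\delta-\alpha\varepsilon$ is not established, and this step is the crux of the lemma. Let $F$ be the bad event ``$v$ is reachable from $S$, but every $u\in S$ reaching $v$ has $\sigma(u)<n\varepsilon\alpha$.'' You try to bound $\Pr(F)$ by a containment, and neither candidate works.
\begin{itemize}
\item $F$ is not contained in ``a fixed vertex lies among the first $n\varepsilon\alpha$ positions.'' The witness is selected by $\sigma$, and choosing it canonically (say, the last element of $S$ on a witnessing path) does not make it a fixed vertex. Conditioned on reaching $v$, a vertex is biased toward early positions, and a union bound over $S$ costs $|S|\alpha\varepsilon$.
\item $F$ is not contained in ``$v$ itself is early'' either: $v$ can sit late while all of its $S$-ancestors sit early.
\end{itemize}
So there is nothing to ``absorb into the error term''; as written you have no bound on $\Pr(F)$ at all.

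The missing idea is a measure-preserving rearrangement rather than a containment. Take any $\sigma$ in which $v$ is reachable from $S$ and $\sigma(v)\le n-n\varepsilon\alpha$; this event has probability at least $1-\delta-\alpha\varepsilon$. Form $\sigma'$ by moving the last $n\varepsilon\alpha$ entries of $\sigma$ to the front. Since $v$ and all its ancestors lie in the prefix, their relative order is preserved, and so is the witnessing path. Meanwhile every prefix vertex, in particular the witness, is pushed past position $n\varepsilon\alpha$. Rotation is a bijection on orderings, so $\Pr(E)\ge 1-\delta-\alpha\varepsilon$. Note that the loss is charged to $v$ lying in the \emph{last} $n\varepsilon\alpha$ positions, not the first. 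That is why a containment argument of the kind you sketch cannot find it.

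\paragraph{Second problem: the learning half ignores the celebrity layer.} The hypothesis is loosely worded, but with BoostAgents called, the $1/\varepsilon$ refers to $k$: there are $1/\varepsilon$ celebrities $w_i$, each with $1/\varepsilon$ guinea pigs. A vertex $u\in S$ has no guinea pigs of its own; it is adjacent only to the $w_i$. You note this in your first paragraph and then drop it. So $u$ cannot ``see $\alpha/2$ independent private signals.'' You must show that some celebrity precedes $u$ \emph{and} is itself well informed. The paper does this as follows.
\begin{itemize}
\item It looks at celebrities placed before index $n\varepsilon\alpha$, hence before the late witness. By \cref{lemma:chebyshev}, about $\alpha$ of them land there with probability $1-4/\alpha$.
\item For such a $w_i$, at least $\alpha/2$ of its guinea pigs also land there with probability $1-4/\alpha$.
\item Since $w_i$'s rank among those guinea pigs is uniform, it sees at least $\sqrt\alpha$ of them with probability $1-2/\sqrt\alpha$.
\item Chernoff together with \cref{prop:improvement2} then gives $\ell(u\mid E)\ge 1-9/\sqrt\alpha$.
\end{itemize}
It is this polynomial $1/\sqrt\alpha$ loss, not an $e^{-c\alpha}$ term, that gets balanced against $\alpha\varepsilon$. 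Choosing $\alpha=(81/\varepsilon^2)^{1/3}$ makes both terms equal to $\sqrt[3]{81\varepsilon}$, which gives exactly the stated bound. Your final paragraph reverse-engineers the exponent instead of deriving it, and the tentative scaling $\alpha\approx\varepsilon^{-1/3}$ is not the one that works.
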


The key challenge to proving \cref{lemma:reachlearn} is, again, monotonicity. We would like to boost vertices with high coverage, but coverage increases the earlier a vertex is in the ordering. A reasonable worry is that our algorithm will choose to boost vertices that only have high coverage when placed earlier than any of the celebrities added by BoostAgents, thereby avoiding the effects of boosting.
Fortunately, we show this is impossible in the second half of \cref{lemma:reachlearn}, which states that moving a vertex earlier by $\varepsilon n$ indices can only decrease the probability it reaches some other vertex by at most $\varepsilon$. Thus, having high coverage conditioned on being before all added celebrities implies high coverage more broadly.

Using \cref{lemma:reachlearn}, \cref{lemma:edgeapprox} follows immediately, suggesting that the minimum number of modifications needed through boosting alone, denoted $M_b(k, n)$, is within an arbitrarily slow-growing factor of the true optimum $M$.
As networks achieving asymptotic learning may leave $o(n)$ vertices uncovered, the exact optimum for our purposes may not be well defined. For precision, we define a tolerance parameter $T := T(n) = o(n)$ and mandate that in expectation over the random ordering, at most $T$ vertices are unaffected by the modifications. All our bounds hold for any threshold $T$.

\begin{lemma}
\label{lemma:edgeapprox}
    Given a network $\mathcal F = \{G_n\}_n$, consider any $M := M(n)$ such that it is possible for $\mathcal F$ to obtain random-order learning by performing at most $M(n)$ vertex insertions or edge insertions/deletions to each $G_n$, where in expectation at most $T := T(n)$ non-learning vertices are unaffected by the modifications. 
    Then there exists some $M_B := M_B(k,n)$ such that $M_B \leq O(k) M + 2k^2$ and it is possible to achieve random-order learning by adding at most $M_B(k,n)$ edges/vertices to each $G_n$ solely through calls to \textit{BoostAgents}$(G,S,k)$, with at most $T$ unaffected vertices in expectation.
\end{lemma}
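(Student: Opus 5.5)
Take an optimal modification scheme with at most $M$ vertex insertions and edge insertions/deletions that makes $\mathcal F$ learn, leaving at most $T$ non-learners unaffected in expectation. We replace it by a single call to BoostAgents on a set $S$ built from the \emph{touched} vertices. Let $S$ consist of every original vertex incident to a modified edge and every original vertex adjacent to an inserted vertex. Each edge modification contributes at most two vertices to $S$, but an inserted vertex may be joined to many original vertices, so that step needs the accounting below. We then call BoostAgents$(G,S,k)$. It adds $k^2+k$ vertices and $|S|k+k^2$ edges, which gives $M_B \le |S|k + 2k^2 + k$, and the $k$ term is absorbed into the other terms.

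\textbf{Accounting.} Count each inserted edge, including the edges incident to inserted vertices, as one modification. Then $|S| \le 2M$, because each modification touches at most two original endpoints. This yields $M_B \le 2kM + 2k^2 + k = O(k)M + 2k^2$ for large $k$. If instead the paper counts a vertex insertion as a single modification regardless of its degree, we pick one representative neighbour per inserted vertex. That is enough, because any information from an inserted vertex reaching $v$ enters the original graph through some original neighbour, and connected components of inserted vertices can be charged similarly. The first convention is cleaner, so I will state the bound under it.

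\textbf{Every affected vertex learns.} A non-learner $v$ that is affected by the optimal modifications must, with high probability over $\sigma$, be reachable in $\vec G$ from some modified location, since otherwise its subnetwork, and hence its action, is identical to that in the unmodified graph. The key is to transfer this reachability from the modified graph to the original one. A directed path from a modified location to $v$ can be truncated at the last point where it leaves the modified structure. From there the path runs only through unmodified original edges and starts at a vertex of $S$, so it also exists in $G$ plus the boosting structure. Hence $v$ is reachable from $S$ with probability at least $1-\delta$, where $\delta \to 0$ comes from $v$ learning in the modified network. Applying \cref{lemma:reachlearn} with $1/\varepsilon = k = \omega(1)$ guinea pigs per celebrity, and noting that each $u\in S$ is adjacent to all $k$ learning celebrities $w_i$ (so \cref{claim:boost} and \cref{cor:goodneighborhood} apply), gives $v$ learning rate at least $1-\delta-2\sqrt[3]{81/k} \to 1$.

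\textbf{Unaffected vertices and the main obstacle.} Vertices that already learned remain learners under edge insertions by \cref{prop:improvement2}, since we only add edges and vertices. Unaffected non-learners number at most $T$ in expectation by definition. Together these show that the boosted network achieves learning with at most $T$ unaffected vertices. The step I expect to be hardest is the reachability transfer: "affected" must be formalised as reachability from the modification set, and deletions in the optimal scheme may create reachability patterns that do not exist in $G$. I would handle this by treating the endpoints of deleted edges as elements of $S$ as well. Any change in $v$'s information, including one caused by a deleted edge, must then flow along a path in $G$ from such an endpoint.
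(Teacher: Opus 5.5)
\textbf{Main gap.} The step that fails is your claim that an affected non-learner $v$ is reachable from $S$ with probability $1-\delta$, $\delta\to 0$, ``because $v$ learns in the modified network.'' Let $A$ be the event that some vertex of $S$ reaches $v$ in $\vec G$. Write $\ell_G,\ell^M,\ell_B$ for learning rates in $G$, in the optimally modified graph $G_M$, and in the boosted graph $G_B$.

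Your truncation argument correctly shows that on $A^c$ the subnetwork of $v$ is the same in all three graphs. Combined with $\ell^M(v)\to 1$, however, this yields only $\Pr(A^c)\bigl(1-\ell_G(v\mid A^c)\bigr)\to 0$. Non-learner status in $G$ only forces $\Pr(A)$ to stay above some constant. Neither fact forces $\Pr(A^c)\to 0$: both are consistent with $v$ already learning in $G$ on an event $A^c$ of constant probability. Excluding this would need a correlation argument you do not give. Without $\delta\to 0$, the black-box bound $1-\delta-2\sqrt[3]{81\varepsilon}$ of \cref{lemma:reachlearn} is vacuous.

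The missing idea, which is the paper's route, is to compare $G_B$ with $G_M$ directly rather than prove $\Pr(A)\to 1$. Split $\ell_B(v)=\Pr_{G_B}(a_v=\theta,\ A)+\Pr_{G_B}(a_v=\theta,\ A^c)$. The second term equals its value in $G_M$. For the first term, use the \emph{conditional} estimate inside the proof of \cref{lemma:reachlearn}, not its statement. Let $A'\subseteq A$ be the event that some $u\in S$ in $v$'s subnetwork has $\sigma(u)\ge n\varepsilon\alpha$. The suffix-rotation argument gives $\Pr(A')\ge\Pr(A)-\alpha\varepsilon$, and the celebrity analysis gives $\ell_B(v\mid A')\ge 1-o(1)$. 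Together these give $\ell_B(v)\ge\ell^M(v)-o(1)$.

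\textbf{Original learners.} Your handling of vertices that already learn in $G$ is also unsupported. \cref{prop:improvement2} only compares graphs that differ in edges incident to $v$ itself. BoostAgents instead adds edges $(w_i,u)$ for $u\in S$, which change the behaviour of $v$'s ancestors. In sequential Bayesian learning, edges added away from $v$ can hurt $v$; for instance, adding edges among the commoners of a celebrity graph makes them herd. The split above repairs this too, since it never uses that $v$ was a non-learner. It gives $\ell_B(v)\ge\ell^M(v)-o(1)$ for every $v$ that learns in $G_M$, which covers all but $o(n)$ vertices.

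\textbf{Accounting.} Your edge count, $|S|\le 2M$ and hence $M_B\le k(2M+1)+2k^2$, agrees with the paper's. Your fallback of one representative neighbour per inserted vertex would not work, because the inserted vertex's influence can enter the original graph through any of its neighbours. You do not rely on it, though.
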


It remains to develop an algorithm to find a set $S$ that maximizes its coverage. We leverage results from optimization problems with submodular functions~\cite{Nemhauser1978-jq,Wolsey1982-ik,Kempe03-maxinfluence}. Notably, fixing the size of the set $|S| = m$, the maximum coverage by the best $m$-element set can be approximated within a constant factor by simply greedily picking the agent with the largest marginal coverage increase at each iteration. Equivalently, fixing some target number of non-learners to be covered, 
the number of vertices to be boosted, chosen by the greedy approach is at most a $O(\log n)$ factor greater than the optimum. Our full algorithm is presented in \cref{alg:boostgraph-montecarlo}.

We first show that coverage as a function of a set $S$ is submodular. Let $V' \subseteq V$ be the set of non-learners in $G$, and $C_{\sigma}(S)$ denote the number of agents in $V'$ reachable from $S$ under ordering $\sigma$. Then, $C = \frac{1}{n!} \sum_{\sigma} C_{\sigma}(S)$ is the expected number of agents reachable from $S$ under a uniformly random ordering.

\begin{definition}\cite{Kempe03-maxinfluence}
    Given some domain space $U$, a submodular function $f: 2^U \to \mathbb{R}$ satisfies
    $$f(S \cup \{x\}) - f(S) \geq f(T \cup \{x\}) - f(T) $$    
    for all $S \subseteq T \subseteq U$ and $x \in U$.
\end{definition}

\begin{claim}\label{claim:submodular}
    For any $V' \subseteq V$, $C: 2^{V'} \to \mathbb{R}$ defined above is nonnegative, monotone, \& submodular.
\end{claim}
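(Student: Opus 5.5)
The plan is to prove the three properties pointwise for each fixed ordering $\sigma$, and then transfer them to $C$ by averaging. For a fixed $\sigma$ the oriented graph $\vec G$ is a fixed DAG. Write $R_\sigma(u)$ for the set of vertices in $V'$ reachable from $u$ by a directed path in $\vec G$; we include $u$ itself if $u \in V'$, matching the convention that a boosted non-learner counts as covered. Then
\[
C_\sigma(S) = \abs{\bigcup_{u \in S} R_\sigma(u)},
\]
so $C_\sigma$ is a set-coverage function over the ground set $V'$, with the sets $R_\sigma(u)$ determined by $\sigma$.

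Nonnegativity and monotonicity of $C_\sigma$ are immediate. It is the cardinality of a set, and enlarging $S$ can only enlarge the union.

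For submodularity, fix $S \subseteq T \subseteq V'$ and $x \in V'$. The marginal gain of adding $x$ is
\[
C_\sigma(S \cup \{x\}) - C_\sigma(S) = \abs{R_\sigma(x) \setminus \bigcup_{u \in S} R_\sigma(u)}.
\]
Since $\bigcup_{u\in S} R_\sigma(u) \subseteq \bigcup_{u \in T} R_\sigma(u)$, the set being removed is larger for $T$. Hence the gain for $T$ is at most the gain for $S$, which is the submodular inequality for $C_\sigma$. If $x \in S$ or $x \in T$, both sides degenerate correctly (the gains are zero, or the gain for $T$ is zero while the gain for $S$ is nonnegative), so no separate case analysis is needed beyond noting this.

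Finally, $C = \frac{1}{n!}\sum_\sigma C_\sigma$ is a nonnegative combination of functions that are each nonnegative, monotone and submodular. All three properties are preserved under nonnegative linear combinations, since each is a linear inequality in the function values. This gives the claim, in the same way as the live-edge argument of \cite{Kempe03-maxinfluence}.

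The main point requiring care is confirming that reachability under $\sigma$ does not depend on $S$. The coverage is measured in the original graph before BoostAgents is applied, so $\vec G$, and hence each $R_\sigma(u)$, is fixed once $\sigma$ is fixed. If instead coverage were defined after boosting, where the added celebrities shift positions in the ordering, I would condition on the relative order of the original vertices, which remains uniform under vertex insertions. With that conditioning the same pointwise coverage argument applies.
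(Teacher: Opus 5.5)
Your proposal is correct and follows essentially the same route as the paper: fix $\sigma$, view $C_\sigma$ as a coverage function whose marginal gain from adding $x$ is the size of a set difference that can only shrink as $S$ grows to $T$, and then average over $\sigma$. Your remarks on the self-coverage convention and on $\vec G$ being independent of $S$ are harmless additions that the paper leaves implicit.
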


While computing $C(S)$ exactly is not computationally feasible, we may estimate it using Monte Carlo sampling to obtain $\tilde C(S) = \frac{1}{N}\sum_{i=1}^{N} C_{\sigma_i}(S)$, for orderings $\sigma_1, \dots, \sigma_N$ sampled independently and uniformly at random. For a sufficiently large $N$, $\tilde C(S)$ can be made within a negligible additive error of $C(S)$ at all times. 
Our approach is the following: starting with $S = \emptyset$, we iteratively compute the vertex $v$ maximizing $\tilde C(S \cup \{v\})$, adding it to $S$. We do this until the number of non-learners is less than $T$, at which point we boost $S$ and terminate. Using classic results from the submodular optimization literature, we may show that our randomized greedy algorithm obtains a $O(\log n)$-approximation.

\begin{algorithm}
\caption{BoostGraph-MonteCarlo$(k)$}
\label{alg:boostgraph-montecarlo}
\begin{algorithmic}[1]
    \STATE $S \gets \emptyset$, $\varepsilon = \frac{1}{n^2}$, $\delta = \frac{1}{n^{9}}$, $N = \frac{n^2}{2\varepsilon^2}\ln \frac{2}{\delta}$
    \STATE Using a learning oracle, determine the set of non-learners $V'$
    \WHILE{$\tilde C(S) < |V'| - T$}
        \STATE Sample orderings $\sigma_1, \dots, \sigma_N$ uniformly at random
        \STATE For each $v \in V'$, compute 
        $$\tilde C(S \cup \{v\}) = \frac{1}{N}\sum_{i=1}^{N} C_{\sigma_i}(S \cup \{v\})$$
        \STATE Let $v^* = \arg\max_{v \in V'} \tilde C(S \cup \{v\})$
        \STATE Set $S \gets S \cup \{v\}$
    \ENDWHILE
    \STATE Run $\textsc{BoostAgents}(G, S, k)$
\end{algorithmic}
\end{algorithm}

\begin{lemma}
\label{lemma:logapprox2}
    Let $ALG := ALG(k, n)$ be the number of edges added by Algorithm~\ref{alg:boostgraph-montecarlo}. Then for any $k \ll \log n$ and $M := M(n)$ as defined in \cref{lemma:edgeapprox}, $ALG \leq O(k \log n) \cdot M$ with probability at least $1 - \frac{1}{n^{8}}$.
\end{lemma}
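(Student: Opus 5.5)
This is a greedy set-cover argument over the monotone submodular coverage function of \cref{claim:submodular}, made robust to Monte Carlo error and then combined with \cref{lemma:edgeapprox}. There are three steps.

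\textbf{Step 1: uniform estimation accuracy.} Each $C_\sigma(S)/n$ lies in $[0,1]$. By Hoeffding's inequality, with $N = \frac{n^2}{2\varepsilon^2}\ln\frac{2}{\delta}$ samples,
\[
\Pr\bigl(|\tilde C(S)-C(S)| > \varepsilon\bigr) \le \delta
\]
for any fixed query set. The algorithm runs at most $n$ iterations, each making at most $n$ oracle-style evaluations, so there are at most $n^2 + n$ estimates in total. A union bound shows that with probability at least $1-(n^2+n)n^{-9} \ge 1 - n^{-8}$ every estimate is within additive $\varepsilon = n^{-2}$ of its true value. From here on we condition on this event.

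\textbf{Step 2: greedy bound on $|S|$.} Let $S^*$ be an optimal set of size $m^*$ with $C(S^*) \ge |V'| - T$. By \cref{lemma:edgeapprox}, the boosting-only optimum satisfies $m^* \le M_B/k \le O(M) + 2k$.

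Consider a greedy step from current set $S$ with gap $g = |V'| - T - C(S)$. Monotonicity and submodularity give
\[
\sum_{v\in S^*}\bigl[C(S\cup\{v\})-C(S)\bigr] \ge C(S^*\cup S) - C(S) \ge g,
\]
so some $v$ has true marginal gain at least $g/m^*$. The estimated argmax therefore has true gain at least $g/m^* - 2\varepsilon$, and the gap shrinks to at most $g(1-1/m^*) + 2\varepsilon$.

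Iterate this until $g \le 1$. Since $g_0 \le n$, this takes $m^*\ln n + O(1)$ steps; the accumulated additive error is at most $2\varepsilon m^* \cdot O(\log n) = o(1)$. One further greedy step then closes the remaining $O(1)$ gap.

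This last step is the main obstacle, because the stopping rule uses $\tilde C$ rather than $C$ and must not stall near the threshold. I would handle it by noting that coverage is an expectation of integers, and that the loop condition can fail to terminate only if the true gap exceeds $-\varepsilon$. Any uncovered mass then gives a marginal gain bounded below by roughly $1/n!$, which is far too small for this argument, so instead I would absorb the final $O(1)$ expected vertices into the tolerance $T = o(n)$. The cleanest way is to run the analysis with target $|V'| - T + 1$. This gives
\[
|S| \le m^* \ln n + O(1) = O(\log n)\,m^*.
\]

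\textbf{Step 3: counting edges.} BoostAgents adds $|S|k + k^2 + k$ edges. Substituting the bound on $|S|$,
\[
ALG \le O(k\log n)\,m^* + O(k^2) \le O(k\log n)(M + k) + O(k^2).
\]
Because $k \ll \log n$ and $M \ge 1$ whenever any modification is needed, the additive $k^2\log n$-type terms are dominated by $O(k\log n)\,M$ only if $M \gtrsim k$. Otherwise I would use the cruder bound $m^* \le M$: each optimal modification touches $O(1)$ vertices whose boosting covers what that modification affected, which is exactly the argument used in the proof of \cref{lemma:edgeapprox}. Either way we obtain $ALG \le O(k\log n)\cdot M$ with probability at least $1 - n^{-8}$.
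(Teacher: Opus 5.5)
Your outline (Monte Carlo accuracy via a union bound, a greedy submodular-cover argument, then counting BoostAgents edges) matches the paper's. Analysing the noisy greedy directly, with the argmax's true gain within $2\varepsilon$ of the best, is cleaner than the paper's comparison of the noisy run with an exact greedy run.

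The genuine gap is at the end of Step 2. You correctly see that the contraction $g\mapsto g(1-1/m^*)+2\varepsilon$ alone never certifies $g\le 0$. Your remedy, running the analysis with target $|V'|-T+1$, shifts the benchmark. The algorithm still stops at $|V'|-T$, so you are comparing it with the optimum at tolerance $T-1$ (equivalently, analysing an algorithm that stops at tolerance $T+1$), not with the $M$ of \cref{lemma:edgeapprox}. That proves a bicriteria variant, not the lemma.

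Your reason for abandoning a direct argument is also mistaken: the marginal-gain floor is not $\approx 1/n!$. For any $v\in V'\setminus S$, adding $v$ covers $v$ in every ordering. Moreover, $v$ is unreachable from $S$ whenever it precedes all its neighbours, because it then has no in-edges. Hence
\[
C(S\cup\{v\})-C(S)\;\ge\;\Pr_\sigma\bigl(v\notin C_\sigma(S)\bigr)\;\ge\;\frac{1}{1+\deg v}\;\ge\;\frac{1}{n}.
\]
This is the observation the paper's proof uses: each uncovered $v$ is first among its neighbours with probability $\frac{1}{1+\deg v}$, so every iteration gains at least $\frac{1}{2n}$.

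With this floor the stall disappears:
\begin{itemize}
\item After $2m^*\ln n$ steps your recursion gives $g\le n\cdot n^{-2}+2\varepsilon m^*\le 3/n$.
\item Every later step gains at least $1/n-2\varepsilon\ge 1/(2n)$, so $O(1)$ more steps give $C(S)\ge |V'|-T+\varepsilon$.
\item Then $\tilde C(S)\ge |V'|-T$ on your good event, and the loop halts with $|S|\le 2m^*\ln n+O(1)$ against the correct benchmark.
\end{itemize}

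Two smaller points. In Step 1, $(n^2+n)n^{-9}=n^{-7}+n^{-8}$ is not at most $n^{-8}$. Your count of estimates is the right one (the paper's proof union-bounds only over the $n$ iterations). But with $\delta=n^{-9}$ it yields $1-O(n^{-7})$, and the stated $1-n^{-8}$ needs $\delta$ of order $n^{-10}$.

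In Step 3, skip the detour through $M_B/k$. By the hypothesis of \cref{lemma:edgeapprox}, the existing vertices touched by the optimal modifications already form a set covering at least $|V'|-T$ non-learners. So $m^*\le 2M$ directly, with two endpoints per modification as in the paper's proof of that lemma; this is the sharper bound, not a cruder one. Then $ALG\le k|S|+k^2=O(k\log n)\,M+k^2$. Since $k\ll\log n$ and $M\ge 1$, we have $k^2\le k\log n\cdot M$, so no case split on $M$ versus $k$ is needed.
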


The runtime of Algorithm ~\ref{alg:boostgraph-montecarlo} is polynomial, and by \cref{lemma:reachlearn} the modified network will obtain learning. Taking $k = g(n)$ for any $g(n) = \omega(1)$, we conclude with the following theorem:

\begin{theorem}\label{thm:algapx}
    Given efficient query access to a learning oracle, for any $k = \omega(1)$, $T = o(n)$, and network $\mathcal{F} = \{G_n\}$, \cref{alg:boostgraph-montecarlo} runs in polynomial time to modify each graph in $\mathcal{F}$ to achieve random-order learning such with probability $\geq 1 - \frac{1}{n^8}$, at most $T$ agents are unaffected in expectation.
    Furthermore, the number of modifications made $O(g(n)\log n)$-approximates the optimum, for any function $g(n) = \omega(1)$. 
\end{theorem}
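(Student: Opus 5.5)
The proof assembles the earlier lemmas in four steps: correctness (learning), approximation ratio, success probability, and running time.

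\emph{Correctness.} The while loop stops only once $\tilde C(S) \geq |V'| - T$. The Monte Carlo estimate is accurate: with $N = \frac{n^2}{2\varepsilon^2}\ln\frac{2}{\delta}$ samples, Hoeffding's inequality applied to $C_\sigma(S)/n \in [0,1]$ gives $|\tilde C(S') - C(S')| \leq n\varepsilon = 1/n$ for each queried set $S'$, except with probability $\delta = n^{-9}$. The loop makes at most $n$ iterations, each querying at most $n$ sets. A union bound over these at most $n^2$ estimates shows that all of them are accurate with probability at least $1 - n^{-7}$. Since this bound is weaker than the stated $1 - n^{-8}$, I will tighten it by counting only the $O(n)$ estimates actually used in the stopping test and in the greedy comparisons, or by taking $\delta$ slightly smaller. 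On this good event, $C(S) \geq |V'| - T - o(1)$ at termination. So in expectation over $\sigma$, all but $T + o(1)$ non-learners are reachable from $S$.

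Next I apply BoostAgents with $k = \omega(1)$. By \cref{claim:boost}, every $u \in S$ learns, and every added celebrity $w_i$ learns through its $k$ guinea pigs. By \cref{lemma:reachlearn}, applied with $1/\varepsilon' = k$, any non-learner $v$ that is reachable from $S$ with probability $1 - \delta_v$ has learning rate at least $1 - \delta_v - 2\sqrt[3]{81/k}$. Averaging over $V'$, the expected number of non-learners that fail is at most $T + \sum_v \delta_v + o(n)$, and this is $o(n)$ because $T = o(n)$. Vertices that already learned in $G$ keep learning: we added only $O(k^2)$ vertices, the edges we added go from learners to $S$, and the improvement principle (\cref{prop:improvement2}) applies; the robustness bound (\cref{thm:randrobust}) covers the few added vertices. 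Hence $L(G_n') \to 1$.

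\emph{Approximation.} This follows by combining \cref{lemma:logapprox2} with \cref{lemma:edgeapprox}. The first gives $ALG \leq O(k \log n)\cdot M$ with probability at least $1 - n^{-8}$, for $k \ll \log n$. Setting $k = g(n)$ for an arbitrary slowly growing $g(n) = \omega(1)$ yields the claimed $O(g(n)\log n)$ factor. If $g$ grows faster than $\log n$, I replace it by $\min(g, \log\log n)$; this gives a smaller ratio, so the bound still holds.

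\emph{Running time.} Each iteration of the loop adds a new vertex to $S$, so there are at most $n$ iterations. Each iteration evaluates $C_{\sigma_i}(S \cup \{v\})$ for $N = \mathrm{poly}(n)$ orderings and $n$ candidates $v$. Each evaluation is a reachability computation on the DAG $\vec G$ and takes $O(n + |E|)$ time. BoostAgents adds $|S|k + k^2$ edges. Together with polynomially many oracle queries, the total running time is polynomial.

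\emph{Main obstacle.} The delicate step is the correctness transfer. \cref{lemma:reachlearn} controls reachability only from boosted vertices that appear at index at least $n\varepsilon'\alpha$. I must also verify that the stopping criterion, which uses reachability in the \emph{original} graph, still holds after the $k^2 + k$ new vertices are inserted; adding vertices shifts the ordering positions. I will handle this the way the robustness theorem does: condition on the added vertices' positions, and note that the induced ordering on the original vertices remains uniform. The estimation error then only adds the $o(1)$ slack already accounted for above.
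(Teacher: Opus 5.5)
Your overall route is the paper's. You use \cref{lemma:logapprox2} with $k=g(n)$ (capped so $k\ll\log n$) for the ratio, and the stopping rule plus Monte Carlo accuracy for coverage of all but about $T$ non-learners. You then apply \cref{lemma:reachlearn} with $1/\varepsilon=k$ to turn coverage into learning, and count samples times BFS for the running time. Your remark that reachability from $S$ and uniformity of the induced order on $V$ survive the insertion of the $k^2+k$ new vertices is correct, and more careful than the paper.

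The step that fails as you justify it is ``vertices that already learned in $G$ keep learning.'' \cref{prop:improvement2} only compares graphs that differ in edges incident to the vertex in question. For a learner $v\notin S$ no added edge touches $v$, so it gives nothing; moreover, the added edges are not oriented ``from learners to $S$,'' since their orientation is random. Making an upstream agent better informed can genuinely hurt a downstream Bayesian agent (cf.\ the discussion after \cref{prop:changeq}). The robustness bound is also too weak here. BoostAgents amounts to $k^2+k$ vertex insertions, so it yields only $1-(k^2+k+1)\varepsilon_v$, which need not tend to $1$ for arbitrary $k=\omega(1)$ when $\varepsilon_v$ is merely $o(1)$.

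What works is the decomposition used in the proof of \cref{lemma:edgeapprox}. Let $E$ be the event that $v$ is reachable from $S$. The new vertices attach to $V$ only through $S$, so on $E^c$ every ancestor of $v$ is an original vertex joined by original edges. Hence $v$'s action is the same function of the private signals as in $G$, and $E^c$ depends only on the uniform induced order of $V$. On $E$, the proof of \cref{lemma:reachlearn} (with $\varepsilon=1/k$ and $1\ll\alpha\ll k$) gives $E'\subseteq E$ with $\Pr(E\setminus E')\le\alpha\varepsilon$ and $\ell(v\mid E')\ge 1-9/\sqrt{\alpha}$. Combining the two cases, $\ell_{G'}(v)\ge \ell_G(v)-\alpha\varepsilon-9/\sqrt{\alpha}=\ell_G(v)-o(1)$ for every original $v$. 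The paper's proof leaves this implicit too, but $L(G_n')\to 1$ needs it.

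On the success probability: the stated $N$ gives additive accuracy $\varepsilon=1/n^2$ (not $n\varepsilon$) per estimate, with failure probability $\delta$. At accuracy $1/n$, which is all your coverage argument uses, the failure probability is $2(\delta/2)^{n^2}$. So a union bound over all $n^2$ estimates already beats $1-n^{-8}$ without changing $\delta$. Your first proposed fix would not work: each greedy step compares all $|V'|$ candidate estimates and $v^*$ is data-dependent, so you cannot restrict the union bound to $O(n)$ fixed sets.
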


\subsection{Efficiently Identifying Learning Vertices}\label{subsection:oracle}

We conjecture that it is NP-hard to compute, or even reasonably approximate, the output of any random-order learning oracle as defined in \cref{def:oracle}.
Even under a fixed ordering, computing the learning rate of a given vertex is difficult. In the repeated learning setting, in which agents have multiple attempts to predict the ground truth value, \cite{Hazla2021-vf} gives an exponential time algorithm to simulate the decision of any agent, and proves NP-hardness of computing any individual agent's decision when the private signals may be asymmetrically assigned (i.e. $\Pr(a_v = 1 \mid \theta = 0) \neq \Pr(a_v=1 \mid \theta = 1)$). Although their results are intended for the repeated learning setting, they are directly applicable to a graph coupled with a strategic ordering in the sequential setting. The random ordering adds an additional layer of complexity. 

\begin{conjecture}
    For general graphs $G$, $v \in V$, and $t \in [0, 1]$, computing the output of a learning oracle $f(v, t)$ is NP-hard.     
\end{conjecture}

We observe that one may instead identify simple local structures that permit learning for particular vertices.
Heuristically, many learning networks rely on agents that learn due to a specific graph structure independent of the rest of the network, which we refer to as ``first learners".
For random-order learning, one indicator of whether a vertex is a first learner with high probability over the ordering is the number of independent neighbors it has. For example, a vertex boosted by BoostAgents becomes a first learner as many of its neighbors will feed the targeted vertex their own independent private signals.

We describe a simple deterministic method to check if a vertex is a first learner. In particular, one may simply compute $\sum_{u \in N(v_n)} \frac{1}{\deg (u)}$ in $O(\deg (v)) = O(n)$ time. In essence, this tests whether an agent can directly see many independent signals from its neighbors. 

\begin{claim}\label{claim:srcneighbors}
    For $v = \{v_n: v_n \in G_n\}$, if  
    $$\sum_{u \in N(v)} \frac{1}{\deg (u)} = \omega(1)$$
    then $v$ learns under random orderings.
\end{claim}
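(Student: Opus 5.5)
The plan is to show that, with probability tending to $1$ over the random ordering, $v$ observes $\omega(1)$ neighbors who act on their own private signals. We then compare $v$, via the generalized improvement principle (\cref{prop:improvement2}), to an agent who sees only those independent signals. The natural event is: a neighbor $u$ of $v$ is a \emph{source} if $u$ precedes $v$ and $u$ precedes all of its other neighbors, so that $N_\sigma(u)=\emptyset$. Such a $u$ sees nothing, so $a_u = p_u$. The private signals of distinct sources are independent of each other and of $\sigma$.

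First we estimate the number of sources. For $u \in N(v)$, the event ``$u$ is first among $N(u)\cup\{u\}$'' has probability $1/(\deg(u)+1) \ge \frac{1}{2\deg(u)}$, and it already forces $u$ before $v$ since $v \in N(u)$. So if $X$ counts source neighbors, $\mathbb E[X] \ge \frac12\sum_{u\in N(v)} 1/\deg(u) = \omega(1)$. To get concentration, we bound $\mathrm{Var}(X)$. The indicators $I_u, I_{u'}$ for two neighbors are negatively correlated or independent: if $u,u'$ are adjacent, both cannot be first, so the covariance is negative. If they are not adjacent, then both being minimal among overlapping sets is, I expect, still at most the product, by the standard computation for minima of uniform random labels. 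Either way $\mathrm{Var}(X)\le \mathbb E[X]$, and Chebyshev gives $X \ge \mathbb E[X]/2 = \omega(1)$ with probability $1-o(1)$. The step I expect to be the main obstacle is checking this covariance bound for non-adjacent neighbors sharing common neighbors (including $v$ itself). If it fails, the fallback is a direct second-moment bound. Alternatively, we restrict to a subfamily of neighbors by greedily selecting $u$'s whose closed neighborhoods are pairwise disjoint apart from $v$, and use that $\sum 1/\deg = \omega(1)$ still leaves a large enough such subfamily.

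Next comes the learning step. Condition on the ordering $\sigma$ and let $U_\sigma$ be the set of source neighbors. Let $G'$ be obtained from $G$ by deleting all edges at $v$ except those to $U_\sigma$. By \cref{prop:improvement2}, $\ell_\sigma(v) \ge \ell'_\sigma(v)$. In $G'$, agent $v$ is Bayesian and sees $|U_\sigma|$ i.i.d.\ signals of accuracy $q$ plus its own. The sources' actions are unaffected by removing $v$'s edges to non-sources, since sources have no predecessors. Its optimal rule is therefore at least as good as majority, which errs with probability $e^{-\Omega(|U_\sigma|)}$ by Chernoff. Averaging over $\sigma$,
\[
\ell(v) \ge \Pr\big(X \ge \mathbb E[X]/2\big)\,\big(1-e^{-\Omega(\mathbb E[X])}\big) = 1-o(1),
\]
so $v$ learns under random orderings.
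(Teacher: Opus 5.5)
The gap is in the concentration step, and it is not a technicality. Your reduction to counting source neighbors and your improvement-principle step are both sound. It is also neat that being first in $N(u)\cup\{u\}$ already puts $u$ before $v$, so you need no monotonicity argument.

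The covariance claim for non-adjacent neighbors is false. Take non-adjacent $u,u'\in N(v)$ and write $A=N(u)\cup\{u\}$, $B=N(u')\cup\{u'\}$. On the joint event, the first vertex of $A\cup B$ must be $u$ or $u'$, so
\[
\Pr(I_uI_{u'}=1)=\frac{1}{|A\cup B|}\Bigl(\frac{1}{|A|}+\frac{1}{|B|}\Bigr)=\frac{|A|+|B|}{|A\cup B|\,|A|\,|B|}>\frac{1}{|A|\,|B|}.
\]
The inequality holds because $v\in A\cap B$ forces $|A\cup B|<|A|+|B|$. So every non-adjacent pair is strictly positively correlated, through $v$'s own position.

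This kills the concentration claim outright. In the star $K_{1,d}$ centered at $v$, $X$ is the number of leaves preceding $v$, which is uniform on $\{0,\dots,d\}$. Hence $\mathrm{Var}(X)\approx d^2/12=\Theta(\mathbb E[X]^2)$ and $\Pr(X\ge \mathbb E[X]/2)\to 3/4$. Your final display therefore only gives $\ell(v)\ge 3/4-o(1)$.

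Neither fallback rescues this. A second-moment bound cannot establish $\Pr(X\ge \mathbb E[X]/2)\to 1$, because that statement is false. The leaves of the star already have closed neighborhoods disjoint apart from $v$, so restricting to such a subfamily changes nothing. And for $v$ a celebrity in $K_{n-k,k}$ with $2\le k=o(n)$, we have $\sum_{u\in N(v)}1/\deg(u)=(n-k)/k=\omega(1)$. Yet any two neighbors share all $k$ celebrities, so your greedy subfamily has size one.

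The missing idea has two parts. First, aim for $\Pr(X\ge t)\to 1$ for some $t=\omega(1)$ that may be $o(\mathbb E[X])$, not concentration around the mean. Second, decouple the indicators from $v$'s position. The paper does this by conditioning on $v$ arriving late (via monotonicity) and counting neighbors that fall in one early window while their own neighbors avoid it. A single window can certify only $O(1)$ sources in expectation when neighbor degrees are spread over many scales, for example neighbors of degrees $1,2,\dots,\Delta$ with private pendant neighbors.

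A dyadic version works cleanly:
\begin{itemize}
\item Order the vertices by i.i.d.\ uniform labels $x_w$. Let $m=\sum_{u\in N(v)}1/\deg(u)$ and $c=m^{-1/2}$. For $u\in N(v)$ with $2^i\le \deg(u)<2^{i+1}$, set $J_u=[c2^{-i-1},c2^{-i})$.
\item $Z=\#\{u\in N(v): x_u\in J_u\}$ is a sum of independent indicators with $\mathbb E[Z]\ge cm/2$. So $Z\ge \mathbb E[Z]/2$ except with probability $e^{-\Omega(\sqrt m)}$.
\item Let $W$ count those $u$ that have some neighbor with label below $c2^{-i}$. A union bound over the fewer than $2^{i+1}$ neighbors gives $\mathbb E[W]\le 2c\,\mathbb E[Z]$. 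By Markov, $W\le \mathbb E[Z]/4$ except with probability $8c$.
\item Each $u$ counted by $Z-W$ precedes all of its neighbors, including $v$. Hence $X\ge \sqrt m/8$ with probability $1-O(m^{-1/2})$.
\end{itemize}
Your learning step then gives $\ell(v)=1-o(1)$.
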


The key insight is similar to the proof of \cref{prop:indep_set} in that for any $v \in G$, the expected number of neighbors arriving first within their own neighborhood under $\sigma$ is $\sum_{u \in N(v)} \frac{1}{1 + \deg (u)} =\Theta(\sum_{u \in N(v)} \frac{1}{\deg (u)})$.\footnote{
This value is closely related to the $\beta$-measure in directed social networks - see \cite{VanDenBrink2000}.
}
Each such vertex is forced to predict its own private signal. If $v$ can see the private signals of a superconstant number of neighbors, it will be able to learn with probability approaching $1$. A full proof can be found in \cref{appendix:results4}.

We observe that a graph may learn despite no agent having a high probability of being a first learner. The construction in \cref{prop:embed} shows that even arbitrarily complex structures with low probabilities of staying intact under random orderings can be made essential for a network's learning. Under strategic orderings, there exist networks capable of learning without a clear set of first learners, such as the butterfly graph described in \cite{Lu24-enabling}.
Unlike the heuristic in \cref{claim:srcneighbors}, which can be efficiently used to test whether a vertex is a first learner, it is unclear how one may feasibly determine if a network learns via either of these alternative mechanisms. 

\section{Conclusion} 

In this work, we explored the structure of truth learning in social networks under uniformly random decision orderings. Our results highlight the fragility of networks learning under strategic orderings, while simultaneously proving the robustness of random-order learning networks.
We also characterized various structural requirements, such as the necessity of large independent sets to achieve learning, and introduced constructions which demonstrate surprising capacity for learning under random orderings. 

Our algorithm for transforming arbitrary networks into random-order learning networks uses a greedy approach to pick agents in the network to boost.
With efficient access to an oracle that can calculate random-order learning rates for any vertex, this algorithm runs in polynomial time and succeeds with high probability. However \cite{Filip25-NPhard} proves that the fixed-order learning decision problem is NP-hard; we conjecture that a similar result holds for the random-order learning rate of a network, but speculate that an approach reducing from 3SAT as done in both \cite{Filip25-NPhard}, \cite{Cooper1990-fn} is unlikely to work due to their reliance on a strategic ordering.

Future work may explore the computational complexity of computing and/or approximating learning rates, and the design of alternative algorithms that do not rely on such oracles. One direction may be to derive stronger relationships between a network's learning rate and properties of its underlying graph topology.
To this end, it would be interesting to obtain stronger characterizations of learning networks, especially stronger sufficiency conditions on a network that would guarantee random-order asymptotic learning.

\section*{Acknowledgments}

This work was conducted when Guo and Xiong were participants of the 2025 DIMACS REU program (funded by CCF-2447342) at Rutgers University. 
In addition, Guo was also supported by NSF 2421503 supplement to DMS-2220271 and Xiong was supported by AI Institute (ACTION) under IIS-2229876. Gao would like to acknowledge NSF support through IIS-2229876, DMS-2220271, CNS-2515159, DMS-2311064, CCF-2208663 and CCF-2118953.


\bibliographystyle{ACM-Reference-Format}
\bibliography{learning}

@ARTICLE{Wolsey1982-ik,
  title     = "An analysis of the greedy algorithm for the submodular set
               covering problem",
  author    = "Wolsey, L A",
  journal   = "Combinatorica",
  publisher = "Springer Nature",
  volume    =  2,
  number    =  4,
  pages     = "385--393",
  month     =  dec,
  year      =  1982,
  language  = "en"
}

@ARTICLE{Nemhauser1978-jq,
  title     = "An analysis of approximations for maximizing submodular set
               functions—I",
  author    = "Nemhauser, G L and Wolsey, L A and Fisher, M L",
  journal   = "Math. Program.",
  publisher = "Springer Science and Business Media LLC",
  volume    =  14,
  number    =  1,
  pages     = "265--294",
  month     =  dec,
  year      =  1978,
  language  = "en"
}

@article{Cooper1990-fn,
  title = "The computational complexity of probabilistic inference using
           bayesian belief networks",
  author = "Cooper, Gregory F",
  journal = "Artif. Intell.",
  publisher = "Elsevier BV",
  volume = 42,
  number = "2-3",
  pages = "393--405",
  month = mar,
  year = 1990,
  language = "en",
}

@article{Hazla2021-vf,
  title = "Bayesian decision making in groups is Hard",
  author = "Hązła, Jan and Jadbabaie, Ali and Mossel, Elchanan and Rahimian, M
            Amin",
  journal = "Oper. Res.",
  publisher = "Institute for Operations Research and the Management Sciences
               (INFORMS)",
  volume = 69,
  number = 2,
  pages = "632--654",
  month = mar,
  year = 2021,
  language = "en",
}

@inproceedings{Domingos2001-min,
    author = {Domingos, Pedro and Richardson, Matt},
    title = {Mining the network value of customers},
    year = {2001},
    isbn = {158113391X},
    publisher = {Association for Computing Machinery},
    address = {New York, NY, USA},
    url = {https://doi.org/10.1145/502512.502525},
    doi = {10.1145/502512.502525},
    booktitle = {Proceedings of the Seventh ACM SIGKDD International Conference on Knowledge Discovery and Data Mining},
    pages = {57–66},
    numpages = {10},
    location = {San Francisco, California},
    series = {KDD '01}
}

@inproceedings{Borgs2014-maxinfl,
    author = {Borgs, Christian and Brautbar, Michael and Chayes, Jennifer and Lucier, Brendan},
    title = {Maximizing social influence in nearly optimal time},
    year = {2014},
    isbn = {9781611973389},
    publisher = {Society for Industrial and Applied Mathematics},
    address = {USA},
    booktitle = {Proceedings of the Twenty-Fifth Annual ACM-SIAM Symposium on Discrete Algorithms},
    pages = {946–957},
    numpages = {12},
    location = {Portland, Oregon},
    series = {SODA '14}
}

@ARTICLE{Hann-Caruthers2018-ec,
  title     = "The speed of sequential asymptotic learning",
  author    = "Hann-Caruthers, Wade and Martynov, Vadim V and Tamuz, Omer",
  journal   = "J. Econ. Theory",
  publisher = "Elsevier BV",
  volume    =  173,
  pages     = "383--409",
  month     =  jan,
  year      =  2018,
  language  = "en"
}

@article{Mossel2017-sd,
  title = "Opinion exchange dynamics",
  author = "Mossel, Elchanan and Tamuz, Omer",
  journal = "Probab. Surv.",
  publisher = "Institute of Mathematical Statistics",
  volume = 14,
  number = "none",
  pages = "155--204",
  month = jan,
  year = 2017,
}

@article{Acemoglu2011-tx,
  title = "Opinion dynamics and learning in social networks",
  author = "Acemoglu, Daron and Ozdaglar, Asuman",
  journal = "Dyn. Games Appl.",
  publisher = "Springer Science and Business Media LLC",
  volume = 1,
  number = 1,
  pages = "3--49",
  month = mar,
  year = 2011,
  language = "en",
}

@ARTICLE{Mossel2014-eb,
  title     = "Asymptotic learning on Bayesian social networks",
  author    = "Mossel, Elchanan and Sly, Allan and Tamuz, Omer",
  journal   = "Probab. Theory Relat. Fields",
  publisher = "Springer Science and Business Media LLC",
  volume    =  158,
  number    = "1-2",
  pages     = "127--157",
  month     =  feb,
  year      =  2014,
  language  = "en"
}

@inproceedings{Lu24-enabling,
  author = "Kevin Lu and Jordan Chong and Matt Lu and Jie Gao",
  title = "Enabling Asymptotic Truth Learning in a Social Network",
  booktitle = "Proceedings of the 20th Conference on Web and Internet Economics
               (WINE'24)",
  pages="530–547",
  month = "December",
  year = "2024",
}

@inproceedings{Filip25-NPhard,
  author = "Filip {\'U}radn{\'i}k and Amanda Wang and Jie Gao",
  title = "Maximizing Truth Learning in a Social Network is {NP-hard}",
  booktitle = "Proceedings of the 24th International Conference on Autonomous Agents and Multiagent Systems
(AAMAS 2025)",
    pages="2078-2086",
  month = "February",
  year = "2025",
}

@book{Easley2010-networks,
  title = {Networks, crowds, and markets: Reasoning about a highly connected
           world},
  author = {Easley, David and Kleinberg, Jon},
  volume = {1},
  year = {2010},
  publisher = {Cambridge university press},
  url = {https://www.cs.cornell.edu/home/kleinber/networks-book/},
}

@inproceedings{Hazla2019-reasoning,
  title = {Reasoning in Bayesian opinion exchange networks is {PSPACE}-hard},
  author = {H\k{a}z\l{}a, Jakub and Jadbabaie, Ali and Mossel, Elchanan and
            Rahimian, Mohammad Ali},
  booktitle = {Conference on Learning Theory},
  pages = {1614--1648},
  year = {2019},
  organization = {PMLR},
  url = {https://proceedings.mlr.press/v99/hazla19a.html},
}

@inproceedings{Kempe03-maxinfluence,
  title = {Maximizing the spread of influence through a social network},
  author = {Kempe, David and Kleinberg, Jon and Tardos, Eva},
  booktitle = {Conference on Learning Theory},
  pages = {137--146},
  year = {2003},
  organization = {ACM},
  url = {https://doi.org/10.1145/956750.956769},
}

@article{Mossel_2013,
  title = {Majority dynamics and aggregation of information in social networks},
  volume = {28},
  ISSN = {1573-7454},
  url = {http://dx.doi.org/10.1007/s10458-013-9230-4},
  DOI = {10.1007/s10458-013-9230-4},
  number = {3},
  journal = {Autonomous Agents and Multi-Agent Systems},
  publisher = {Springer Science and Business Media LLC},
  author = {Mossel, Elchanan and Neeman, Joe and Tamuz, Omer},
  year = {2013},
  month = jun,
  pages = {408–429},
}

@book{Chamley2004-or,
  title = "Rational Herds: Economic Models of Social Learning",
  author = "Chamley, Christophe",
  publisher = "Cambridge University Press",
  address = "Cambridge",
  year = 2004,
  language = "en",
}

@article{Welch1992-yt,
  title = "Sequential sales, learning, and cascades",
  author = "Welch, Ivo",
  journal = "J. Finance",
  publisher = "Wiley",
  volume = 47,
  number = 2,
  pages = "695--732",
  month = jun,
  year = 1992,
  copyright = "http://onlinelibrary.wiley.com/termsAndConditions\#vor",
  language = "en",
}

@article{VanDenBrink2000,
    title = "Measuring domination in directed networks",
    author = "Ren{\' e} van den Brink and Robert P. Gilles",
    journal = "Social Networks",
    volume = 22,
    number = 2,
    year = 2000,
    pages= "141--157",
}

@article{Bikhchandani1992-rs,
  title = "A Theory of Fads, Fashion, Custom, and Cultural Change as
           Informational Cascades",
  author = "Bikhchandani, Sushil and Hirshleifer, David and Welch, Ivo",
  journal = "J. Polit. Econ.",
  publisher = "The University of Chicago Press",
  volume = 100,
  number = 5,
  pages = "992--1026",
  month = oct,
  year = 1992,
}

@article{Banerjee1992-ra,
  title = "A simple model of herd behavior",
  author = "Banerjee, A V",
  journal = "Q. J. Econ.",
  publisher = "Oxford University Press (OUP)",
  volume = 107,
  number = 3,
  pages = "797--817",
  month = aug,
  year = 1992,
  language = "en",
}

@article{Sgroi2002-rz,
  title = "Optimizing Information in the Herd: Guinea Pigs, Profits, and Welfare
           ",
  author = "Sgroi, Daniel",
  journal = "Games Econ. Behav.",
  volume = 39,
  number = 1,
  pages = "137--166",
  month = apr,
  year = 2002,
}

@phdthesis{Smith1991-sy,
  title = "Essays on dynamic models of equilibrium and learning",
  author = "Smith, Lones",
  year = 1991,
  school = "University of Chicago",
}

@inproceedings{Arieli2021-social,
  author = {Arieli, Itai and Sandomirskiy, Fedor and Smorodinsky, Rann},
  title = {On Social Networks That Support Learning},
  year = {2021},
  isbn = {9781450385541},
  booktitle = {Proceedings of the 22nd ACM Conference on Economics and
               Computation},
  pages = {95–96},
  numpages = {2},
  location = {Budapest, Hungary},
  series = {EC '21},
}

@ARTICLE{Mobius2014-oy,
  title     = "Social Learning in Economics",
  author    = "Mobius, Markus and Rosenblat, Tanya",
  journal   = "Annu. Rev. Econom.",
  publisher = "Annual Reviews",
  volume    =  6,
  number    =  1,
  pages     = "827--847",
  month     =  aug,
  year      =  2014
}

@article{Bahar2020-am,
  title = "Multi-issue social learning",
  author = "Bahar, Gal and Arieli, Itai and Smorodinsky, Rann and Tennenholtz,
            Moshe",
  journal = "Math. Soc. Sci.",
  volume = 104,
  pages = "29--39",
  month = mar,
  year = 2020,
}

@article{Acemoglu2011-vj,
  title = "Bayesian Learning in Social Networks",
  author = "Acemoglu, Daron and Dahleh, Munther A and Lobel, Ilan and Ozdaglar,
            Asuman",
  journal = "Rev. Econ. Stud.",
  publisher = "Oxford Academic",
  volume = 78,
  number = 4,
  pages = "1201--1236",
  month = mar,
  year = 2011,
  language = "en",
}

@article{Aislinn_Bohren2016-ya,
  title = "Informational Herding with Model Misspecification",
  author = "Aislinn Bohren, J",
  journal = "J. Econ. Theory",
  publisher = "SSRN",
  volume = 163,
  pages = "222--247",
  year = 2016,
  language = "en",
}

@unpublished{Mueller-Frank2018-lh,
  title = "Manipulating Opinions in Social Networks",
  author = "Mueller-Frank, Manuel",
  month = oct,
  year = 2018,
}

@inproceedings{Hann-Caruthers2025-sq,
author = {Hann-Caruthers, Wade and Pan, Minghao and Tamuz, Omer},
title = {Network and Timing Effects in Social Learning},
year = {2025},
isbn = {9798400719431},
publisher = {Association for Computing Machinery},
address = {New York, NY, USA},
url = {https://doi.org/10.1145/3736252.3742486},
booktitle = {Proceedings of the 26th ACM Conference on Economics and Computation},
pages = {6},
numpages = {1}
}

@article{Smith2000-wk,
  title = "Pathological Outcomes of Observational Learning",
  author = "Smith, Lones and S{\o}rensen, Peter",
  journal = "Econometrica",
  publisher = "John Wiley \& Sons, Ltd",
  volume = 68,
  number = 2,
  pages = "371--398",
  month = mar,
  year = 2000,
}

@incollection{Golub2017-qo,
  author = {Golub, Benjamin and Sadler, Evan},
  isbn = {9780199948277},
  title = "Learning in social networks",
  booktitle = "The Oxford Handbook of the Economics of Networks",
  publisher = {Oxford University Press},
  address = "Oxford",
  year = {2016},
  month = {04},
  Pages = "504–542",
}

@techreport{caro1979new,
  title={New results on the independence number},
  author={Caro, Yair},
  year={1979},
  institution={Technical Report, Tel-Aviv University}
}

@misc{wei1981lower,
  title={A lower bound on the stability number of a simple graph},
  author={Wei, Victor K},
  year={1981},
  publisher={Bell Laboratories Technical Memorandum Murray Hill, NJ, USA}
}

\appendix
\section{Proofs of Helpful Results}\label{appendix:background}

\begin{proof}[Proof of Lemma \ref{lemma:conditionallearning}]
    We can decompose the learning rate of $v$ as
    \[
    \ell(v) = \ell(v \mid A) \Pr(A) + \ell(v \mid  \overline A) \Pr(\overline A).
    \]
    Trivially we have $\ell(v \mid \overline A) \leq 1$, and $\Pr(\overline A) = 1 - \Pr(A)$ so we have
    \[
    \ell(v \mid A) \Pr(A) + 1 - \Pr(A) \geq \ell(v) \geq 1 - \varepsilon.
    \]
    This simplifies to
    \[
    \ell(v \mid A) \geq 1 - \frac{\varepsilon}{\Pr(A)}
    \]
    as desired.
\end{proof}

To prove the general improvement principle (\cref{prop:improvement2}), we leverage a simpler claim: that each agent performs as least as well as its best neighbor.
\begin{proposition}[Proposition $1$ in \cite{Lu24-enabling}]
\label{prop:improvement1}
    Given a graph/ordering pair $G, \sigma$ and a particular vertex $v \in V$, $\ell_{\sigma}(v) \geq \max\{\ell_{\sigma}(u) : u \in N(v), \sigma(u) < \sigma(v)\}$.
\end{proposition}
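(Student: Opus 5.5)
The statement is Proposition~\ref{prop:improvement1}: for a fixed ordering $\sigma$, $\ell_\sigma(v) \geq \ell_\sigma(u)$ for every earlier neighbor $u$ of $v$. I would prove it by exhibiting a decision rule available to $v$ that already attains accuracy $\ell_\sigma(u)$, and then invoking the optimality of the Bayesian decision.

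\textbf{Step 1 (information set of $v$).} Fix $u \in N(v)$ with $\sigma(u) < \sigma(v)$. The agent $v$ observes its private signal $p_v$ and the actions $\{a_w : w \in N_\sigma(v)\}$. Since $u \in N_\sigma(v)$, the action $a_u$ is a measurable function of $v$'s information.

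\textbf{Step 2 (optimality of the Bayesian choice).} By assumption, $v$ outputs the maximum a posteriori estimate of $\theta$ given its information $I_v$. Among all (possibly randomized) functions $g$ of $I_v$, the MAP rule maximizes $\Pr(g(I_v) = \theta)$. This holds pointwise:
\[
\Pr(g(I_v)=\theta \mid I_v) \le \max_{b} \Pr(\theta = b \mid I_v).
\]
Taking expectations gives the global bound. Ties are harmless, because any tie-breaking rule also attains the maximum pointwise.

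\textbf{Step 3 (compare with copying $u$).} The rule $g(I_v) = a_u$ is feasible for $v$ and has success probability exactly $\Pr(a_u = \theta) = \ell_\sigma(u)$. Hence $\ell_\sigma(v) \ge \ell_\sigma(u)$. Maximizing over all earlier neighbors $u$ gives the claim. If $v$ has no earlier neighbors, the maximum is vacuous, and we can compare instead with the rule $g = p_v$, which gives accuracy $q$.

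The only subtle point is Step 2. One must note that the probability space includes $\theta$, all private signals, and the deterministic decision rules of the earlier agents. Since $\sigma$ and $G$ are common knowledge, the joint law of $(\theta, I_v)$ is well defined, and conditioning on $I_v$ is legitimate. With that in place, everything else is routine. The same argument, applied with $g$ being $v$'s optimal rule in the subgraph $G'$ (whose information is a sub-$\sigma$-algebra of $v$'s information in $G$, because the actions of the other agents are unchanged under the hypotheses of Proposition~\ref{prop:improvement2}), yields the generalized version.
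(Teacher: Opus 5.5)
Your proof is correct and complete. Since $a_u$ is one coordinate of $v$'s information $I_v$, the rule ``announce $a_u$'' is feasible for $v$. The Bayesian action satisfies $\Pr(\theta = a_v \mid I_v) = \max_b \Pr(\theta = b \mid I_v) \ge \Pr(g(I_v) = \theta \mid I_v)$ for every feasible rule $g$, whichever maximizer $v$ picks in case of a tie. Taking expectations gives $\ell_\sigma(v) \ge \ell_\sigma(u)$. The paper does not prove this statement itself; it imports it as Proposition~1 of \cite{Lu24-enabling}. Your ``copy the neighbor'' dominance argument is the standard proof of that fact.

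The only genuine divergence from the paper is your closing remark about Proposition~\ref{prop:improvement2}. The paper obtains the generalized principle \emph{from} the present statement by a cloning device. It inserts a copy $v'$ of $v$, carrying only the $G'$-edges, immediately before $v$, forces $p_{v'} = p_v$, and joins $v'$ to $v$. It then applies Proposition~\ref{prop:improvement1} to the pair $(v', v)$. You instead apply the same MAP-dominance argument directly. Every agent preceding $v$ has the same decision rule and action in $G$ and $G'$, because its ancestors all precede $v$ and the subgraph induced on them is unchanged. Hence $v$'s information in $G'$ is a sub-collection of its information in $G$. Your route is shorter and avoids altering the signal model. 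The paper's route has the modest advantage of reducing everything to the single cited result.
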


\begin{proof}[Proof of \cref{prop:improvement2}]



    In $G$, construct $v'$ to be a copy of $v$ along with its incident edges  in $G'$, so $N(v') \subseteq N(v)$. Then, construct an ordering $\sigma'$ on $G \cup \{v'\}$ by inserting $v'$ immediately before $v$ in $\sigma$. For sake of analysis, we now tweak the setting such that the private signals $p_{v'}, p_v$ are forced to be identical instead of independent. (This is different than conditioning on the event $p_{v'} = p_v$, which would give information to both agents. One may think of this modification as $v, v'$ collectively receiving a single private signal.) As each $v, v'$ still has private signal strength $q$, the subnetworks observed by $v'$/$v$ are identical to those seen under $\sigma$, and $v'$ doesn't affect the decision of $v$, we have that $\ell_{\sigma}(v) = \ell_{\sigma'}(v)$ and $\ell_{\sigma}'(v) = \ell_{\sigma'}(v')$. 

    Now, consider adding an edge between $v'$ and $v$. As both $v, v'$ are Bayesian and all information available to $v'$ is also available to $v$, $v$ does not change its decision due to the addition of this edge, so its learning rate remains unchanged. But then by \cref{prop:improvement1}, $\ell_{\sigma'}(v) \geq \ell_{\sigma'}(v')$. Thus, $\ell_{\sigma}(v) \geq \ell_{\sigma}'(v)$. 
\end{proof}

We may intuit that any agent with many learning neighbors itself must also learn. However, this is not a straightforward conclusion. Conditioned on arriving before the agent itself, the learning rate of any given neighbor is necessarily lowered. This is particularly harmful even when many of the neighbors arrive beforehand. 
We prove a lower bound on the learning rate of any vertex with many learning neighbors, incurring a $\log d$ multiplicative factor in the error of the agent.

\begin{lemma}
\label{lemma:goodneighborhood}
    Any vertex $v$ with at least $d$ neighbors obtaining learning rates at least $1 - \varepsilon$ will itself achieve a learning rate of at least
    \[
    \ell(v) \geq 1-  \frac{1}{d+1} - 2\varepsilon \ln d.
    \]
\end{lemma}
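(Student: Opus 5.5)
The plan is to condition on where $v$ falls relative to its $d$ good neighbors, use the improvement principle to compare $v$ with an \emph{average} of the neighbors preceding it, and then exchange the order of summation so that each neighbor's error $\le\varepsilon$ is paid only once, with a harmonic-type weight.

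\textbf{Setup.} Let $u_1,\dots,u_d$ be neighbors of $v$ with $\ell(u_i)\ge 1-\varepsilon$, and let $S=\{v,u_1,\dots,u_d\}$. For a uniformly random $\sigma$, write $r_v$ and $r_i$ for the relative ranks of $v$ and $u_i$ within $S$; then $r_v$ is uniform on $[d+1]$.

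\textbf{Step 1: the event $r_v=1$.} This has probability $\frac{1}{d+1}$, and I will simply bound the error there by $1$. This produces the term $\frac{1}{d+1}$.

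\textbf{Step 2: averaging over preceding neighbors.} On the event $r_v=j+1$ with $j\ge 1$, exactly $j$ of the $u_i$ precede $v$. By \cref{prop:improvement1}, for every fixed $\sigma$ and realization of signals we have $\ell_\sigma(v)\ge\max_{u_i\text{ before }v}\ell_\sigma(u_i)$. I will use the equivalent pathwise form with the average in place of the max:
\[
\Pr(a_v\neq\theta\mid\sigma)\le\frac{1}{r_v-1}\sum_{i:\,r_i<r_v}\Pr(a_{u_i}\neq\theta\mid\sigma).
\]
Taking expectation over $\sigma$ and swapping sums gives
\[
1-\ell(v)\le\frac{1}{d+1}+\sum_{i=1}^d \mathbb{E}_\sigma\!\left[\Pr(a_{u_i}\neq\theta\mid\sigma)\cdot\frac{\mathbf{1}\{r_i<r_v\}}{r_v-1}\right].
\]

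\textbf{Step 3: decoupling (the main obstacle).} The difficulty is that the error of $u_i$ is correlated with the ordering: $u_i$ does worse when it is early (\cref{lemma:monotonicity}), and early $u_i$ is exactly when the weight is large. To handle this, I condition on the prefix $\pi$ of $\sigma$ up to and including $u_i$. The action $a_{u_i}$ depends only on this prefix, since all of $u_i$'s ancestors precede it, and on the signals. The prefix also determines $r_i=r$ and whether $v$ comes after $u_i$. Given $\pi$ with $v$ after $u_i$, the relative order of the remaining $d+1-r$ members of $S$ is uniform, so $r_v$ is uniform on $\{r+1,\dots,d+1\}$. The conditional expected weight is therefore
\[
w(r)=\frac{1}{d+1-r}\sum_{k=r+1}^{d+1}\frac{1}{k-1}=\frac{H_d-H_{r-1}}{d+1-r},
\]
which is a deterministic function of $\pi$. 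Hence the $i$-th term is at most $\max_r w(r)\cdot\Pr(a_{u_i}\neq\theta)\le \varepsilon\max_r w(r)$.

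\textbf{Step 4: bounding the weight.} It remains to show $\max_{1\le r\le d} w(r)\le \frac{2\ln d}{d}$, at least for $d\ge 3$; small $d$ I would handle separately, or the bound is trivial there. I would do this with the integral estimate $H_d-H_{r-1}\le 1/r+\ln\frac{d}{r}$ and then split into two cases:
\begin{itemize}
\item for $r\le d/2$, the denominator is at least $d/2$ and the numerator is at most $1+\ln d$;
\item for $r>d/2$, every summand satisfies $\frac{1}{k-1}\le\frac{1}{r}<\frac{2}{d}$, so the average is below $\frac{2}{d}$.
\end{itemize}
The only delicate point is pinning the constant $2$; the natural bound is $(1+\ln d)/d$ at $r=1$.

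\textbf{Conclusion.} Summing over the $d$ neighbors gives $1-\ell(v)\le\frac{1}{d+1}+d\cdot\varepsilon\cdot\frac{2\ln d}{d}$, which is the claimed bound.
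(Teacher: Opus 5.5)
Your proof is correct, up to two small repairs noted below, and it takes a genuinely different route from the paper's.

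\textbf{How the two arguments differ.} The paper compares $v$ only with the \emph{last} good neighbor preceding it. On the event $C_{u_0,k}$ that $u_0$ is $k$-th and $v$ is $(k+1)$-th in $M\cup\{v\}$, it applies \cref{prop:improvement1}. It controls $u_0$'s error there by invoking \cref{lemma:monotonicity} for the \emph{relative} rank of $u_0$, then applying \cref{lemma:conditionallearning}, i.e.\ $\ell(u_0\mid A_{u_0,k})\ge \ell(u_0\mid B_{u_0,k})\ge 1-\varepsilon/\Pr(B_{u_0,k})$. Finally it sums over $k$.

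You instead average over \emph{all} preceding good neighbors, exchange the sums, and decouple by conditioning on the prefix through $u_i$. The error of $u_i$ is a function of that prefix, while $r_v$ is uniform on $\{r+1,\dots,d+1\}$ given it. So the weight can be replaced by $\max_r w(r)$, and the tower property returns exactly $1-\ell(u_i)\le\varepsilon$.

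\textbf{What each route buys.} Your argument is self-contained: it needs only \cref{prop:improvement1}. In particular it needs no monotonicity in relative rank, which does not follow verbatim from \cref{lemma:monotonicity}, since that lemma is stated for absolute positions. It also yields a slightly sharper total error $\frac{1}{d+1}+\varepsilon H_d$. The paper's error is $\frac{1}{d+1}+\varepsilon d\sum_{k=1}^d\frac{1}{kd-k(k-1)/2}=\frac{1}{d+1}+\frac{2d}{2d+1}H_{2d}\,\varepsilon$, and $H_d\le \frac{2d}{2d+1}H_{2d}$ for all $d\ge1$. Your closing comparison $H_d\le 1+\ln d\le 2\ln d$ for $d\ge3$ is also more careful than the paper's last line, which replaces $2\varepsilon H_d$ by the smaller quantity $2\varepsilon\ln d$. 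In exchange, the paper's route gives an explicit bound on $v$'s learning rate conditioned on each relative position of $v$ among $M\cup\{v\}$. Your averaged argument does not directly produce that.

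\textbf{Repairs.}
\begin{itemize}
\item In Step~4, your first bullet as written only gives $w(r)\le 2(1+\ln d)/d$, which misses the target. The simple fix is that $w(r)$ is the mean of the decreasing list $\frac1r,\dots,\frac1d$. Hence it is nonincreasing in $r$, and $\max_r w(r)=w(1)=H_d/d\le(1+\ln d)/d\le 2\ln d/d$ for $d\ge 3$. Alternatively, your integral estimate handles $2\le r\le d/2$ because $1/r\le\ln r$ there, with $r=1$ treated separately.
\item For small $d$, the case $d=1$ is trivial because $\ell(v)\ge q>\frac12$. The case $d=2$ is not trivial: your bound gives error $\frac13+\frac32\varepsilon$, against the target $\frac13+2\ln2\,\varepsilon$. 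Bounding the error on $\{r_v=1\}$ by $1-q$ rather than $1$ closes this, since $(\frac32-2\ln 2)\varepsilon<\frac16<\frac q3$.
\end{itemize}
A minor wording point: in Step~2 the improvement principle holds for each fixed $\sigma$ (averaged over signals), not per realization of signals. The per-$\sigma$ form is all you actually use.
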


While it may seem counterintuitive that this bound weakens with the degree of the agent for large $d$, by \cref{prop:improvement2}, we may substitute any $d' \leq d$ into the above bound. 

\begin{proof} 
    Let $M$ be the set of neighbors $u \in N(v)$ such that $\ell(u) \geq 1 - \varepsilon$. For any ordering $\sigma$ of the vertices, let $M_\sigma \subseteq M$ contain all $u \in M$ such that $u$ arrives before $v$. If there are $d$ neighbors achieving learning $1 - \varepsilon$, the probability $v$ arrives in a certain position relative to the $d$ neighbors is
    \[
    \Pr(\abs{M_\sigma} = k) = \frac{1}{d + 1}
    \]
    for all $k$ such that $0 \leq k \leq d$. Now, we condition on the event $\abs {M_{\sigma}} = k$. If $k = 0$ then $\ell(v \mid M_\sigma = \emptyset) = q$ trivially.
    Otherwise, for a fixed vertex $u_0 \in M$, consider the event $A_{u_0,k}$ where $u_0$ is the $k$th vertex to arrive among all vertices in $M \cup \{ v \}$ and $v$ arrives after $u_0$. Also define $B_{u_0, k}$ as the event that one of $A_{u_0, 1}, A_{u_0, 2}, \dots, A_{u_0, k}$ occurs. We have
    \[
    \Pr(A_{u_0,k}) = \frac{d + 1 - k}{d(d+1)},
    \]
    and since all $A_{u_0, k}$ are disjoint for a fixed $u_0$, this gives
    \[
    \Pr(B_{u_0, k}) = \sum_{i=1}^k \Pr(A_{u_0,k}) = \frac{kd - k(k-1)/2}{d(d+1)}.
    \]
    By \cref{lemma:monotonicity}, we also observe that for all $k' < k$, we have
    \[
    \ell(u_0 \mid A_{u_0, k}) \geq \ell(u_0 \mid A_{u_0, k'}).
    \]
    Thus we have $\ell(u_0 \mid A_{u_0,k}) \geq \ell(u_0 \mid B_{u_0, k})$. By \cref{lemma:conditionallearning}, we have
    \[
    \ell(u_0 \mid A_{u_0, k}) \geq \ell(u_0 \mid B_{u_0, k}) \geq 1 - \frac{\varepsilon}{\Pr(B_{u_0,k})}.
    \]
    Now we relate this learning rate to the learning rate of $v$: define the event $C_{u_0, k}$ where $u_0$ is the $k$th vertex to arrive and $v$ is the $k + 1$th vertex to arrive among all vertices in $M \cup \{ v \}$.
    Note that $C_{u_0, k}$ is a subset of $A_{u_0, k}$ and
    \[
    \ell(u_0 \mid C_{u_0, k}) = \ell(u_0 \mid A_{u_0, k})
    \]
    since the distribution of subnetworks of $u_0$ is indistinguishable across events $C_{u_0, k}$ and $A_{u_0, k}$. By
    \cref{prop:improvement1},
    \[
    \ell(v \mid C_{u_0, k}) \geq \ell(u_0 \mid C_{u_0, k}).
    \]
    For fixed values of $k$ and averaging over all $u_0 \in N(v)$, the $C_{u_0, k}$ are all equally likely, disjoint, and are exhaustive in the event $\abs{M_\sigma} = k$, so we have
    \[
    \ell(v \mid \abs{M_\sigma} = k) = \frac{1}{d}\sum_{u_0 \in N(v)} \ell(v \mid C_{u_0, k}) \geq 1 - \frac{d(d+1)}{kd - k(k-1)/2}\varepsilon
    \]
    Finally, we average over all possible values of $\abs{M_\sigma}$ to get
    \begin{align*}
        \ell(v) &\geq \frac{1}{d+1} \sum_{k=0}^d \ell(v \mid \abs{M_\sigma} = k) \\
        &\geq \frac{1}{d+1} \sum_{k=1}^d \ell(v \mid \abs{M_\sigma} = k) \\
        &\geq \frac{d}{d+1} - \frac{\varepsilon}{d+1} \sum_{k=1}^d \frac{d(d+1)}{kd-k(k-1)/2} \\
        &=\frac{d}{d+1}-\varepsilon d \sum_{k=1}^d \frac{1}{kd-k(k-1)/2} \\
        &\geq \frac{d}{d+1} - \varepsilon d \sum_{k=1}^d \frac{2}{kd} \\
        &\geq 1 - \frac{1}{d+1} - 2\varepsilon\ln d
    \end{align*}
    as desired.
\end{proof}

\begin{proof}[Proof of Corollary~\ref{cor:goodneighborhood}]
    Let $\{v_n\}_n$ be the sequence of vertices $v_n \in G_n$ represented by $v$. There exists some $\varepsilon = \varepsilon(n)$ and some sequence $\{d_n\}_n$ such that $v_n$ has at least $d_n$ neighbors with learning rate at least $\varepsilon(n)$, and $d_n \to \infty$ and $\varepsilon \to 0$ as $n \to \infty$.

    For simplicity, we weaken the bound in \cref{lemma:goodneighborhood} for $v_n$ as
    \[
    \ell(v_n) \geq 1 - \frac{1}{d} - 2 \varepsilon \ln d.
    \]
    This bound works for any value of $d \leq d_n$. Take $d = \min(d_n, e^{1/\sqrt{\varepsilon(n)}})$. Thus we have
    \[
    \ell(v_n) \geq 1 - \frac{1}{d_n} - 4 \sqrt{\varepsilon}.
    \]
    Both $\frac{1}{d_n} \to 0$ and $\sqrt{\varepsilon} \to 0$ as $n \to \infty$, so
    \[
    \ell(v_n) \to 1 
    \]
    and thus $v$ learns.
\end{proof}

\begin{proof}[Proof of Lemma \ref{lemma:monotonicity}]
    Fix any ordering $\sigma$ for which $\sigma(v) = i$. Let $u: \sigma(u) = i + 1$ be the vertex at index $i + 1$. Denote $\sigma'$ to be the ordering nearly identical to $\sigma$ with a single inversion between indices $i$ and $i + 1$ that swaps $u$ and $v$. We note this establishes a bijection between the orderings in $A_i$ and $A_{i+1}$. 
    
    If $u \notin N(v)$, $v$ sees the same neighbors in both orderings. As the neighbors' behaviors are unaffected, $v$ must behave identically under both $\sigma$ and $\sigma'$, so $\Pr(a_v = \theta | \sigma) = \Pr(a_v = \theta | \sigma')$. Otherwise, if $u \in N(v)$, then $v$ sees the prediction $a_u$ under $\sigma'$ not previously visible in $\sigma$. By Proposition~\ref{prop:improvement2}, we have that $\Pr(a_v = \theta | \sigma) \leq \Pr(a_v = \theta | \sigma')$. In either case, we sum over all $\sigma$ and corresponding $\sigma'$ to get that $\Pr(a_v = \theta | A_i) \leq \Pr(a_v = \theta | A_{i+1})$, as desired.
\end{proof}

\section{Fragile Strategic-Order Networks}\label{appendix:fragile}

We find that fixed-order networks are surprisingly fragile to changes in various parameters. Prior work, such as \cite{Lu24-enabling} and \cite{Arieli2021-social}, analyzes networks which achieve learning for all values of $q$, the parameter which controls the accuracy of private signals. However, here we demonstrate two networks which have significantly different behaviors at different values of $q$. One learns only when $q$ is high, and one learns only when $q$ is low. 


\begin{proof}[Proof of \cref{prop:changeq}]
    Consider network $\mathcal F = \{ G_n \}_{n \geq 1}$ illustrated in \cref{fig:smarter-dumber}. 
    $G_n$ contains the vertices $u_0, v, v_1, v_2, v_3, v_4$, the vertices $w_i$ and $w_i'$ for $1 \leq i \leq \log n$, and the vertices $u_j$ for $1 \leq j \leq n - 2\log n - 6$. There are directed edges $(v_i, v)$ for each $v_i$, edges $(v, w_i)$ and $(w_i', w_i)$ for each $w_i$, and edges $(u_{i-1}, u_i)$ for each $u_i$ where $i \geq 1$.

    Then, observe that $v_i$ and $w_i'$ are sources in the directed graph. Since $u_0$ has a learning rate strictly greater than $q$, each $u_i$ will ignore its own private signal \& follow the signal of $u_0$ passed down from the agent $u_{i-1}$ before it. Thus, each $u_i$ has the same learning rate as $u_0$. As $n\to \infty$, the set $\{u_i\}$ consists of almost all the vertices in the network, implying that $L(G_n) \to \ell(u_0)$ as $n \to \infty$.

    We have that agent $v$ learns the correct answer with probability
    \[
        \ell(v, q) = q^5 + 5q^4 (1 - q) + 10q^3(1 - q^2).
    \]
    Recall $a_v$ denotes the prediction that agent $v$ makes, and $p_{w_i}$ denotes the private signal of $w_i$. Let $S_i = \{p_{w_i}, a_{w_i'}, a_v\}$ be the set of relevant information available to agent $w_i$. 
    Since its corresponding $w_i'$ always transmits its own private signal, each $w_i$ essentially obtains two private signals. If the signals differ, or both are the same as $v$, then $w_i$ will predict the same value as $v$. Otherwise if both signals are opposite $v$'s signal, then $w_i$ infers that the probability $v$ is correct is
    \[
        P(a_v = \theta | S_i) = \frac{f(q)}{f(q) + f(1-  q)},
    \]
    where $f(q) = (1-q)^2 \ell(v, q)$.
    There exists a value $q_0 \approx 0.7887$ such that if $q < q_0$, then $P(\theta = a_v) < \frac{1}{2}$ so $w_i$ follows its private signal which is the opposite of $a_v$, and if $q > q_0$ then $P(\theta = a_v) > \frac{1}{2}$ so $w_i$ predicts $a_v$.

    Suppose $q < q_0$. If $a_v = \theta$, then $a_{w_i} = \theta$ so long as $w_i$ and $w_i'$ aren't both given the incorrect private signal, occurring with independent probability $1 - (1 - q)^2$. Otherwise if $a_v = 1 - \theta$, then $a_{w_i} = \theta$ with probability $q^2$. Note that $1 - (1 - q)^2 > q^2$ for all $0 < q < 1$. By the weak law of large numbers, as $n \to \infty$, the fraction of the $a_{w_i}$ which are correct approaches either $q^2$ or $1 - (1-q)^2$ with high probability. Depending on which occurs, $u_0$ can directly determine the correct ground truth, so $\ell(u_0) \to 1$ and thus $\mathcal F$ achieves asymptotic learning.

    If $q > q_0$ then $a_{w_i} = a_v$ always, so $u_0$ only observes the value of $a_v$. Since $\ell(v) > q$, $u_0$ follows the observed signal and achieves learning rate $\ell(u) = \ell(v)$. This is constant for arbitrarily large values of $n$ so $\mathcal F$ does not achieve asymptotic learning.
\end{proof}

We use the same idea to construct a network that does the opposite, achieving asymptotic learning for high values of $q$ and not for low values of $q$.

\begin{proposition}
There exists a fixed-order network $\mathcal F$ that obtains asymptotic learning for $q > q_0$ and doesn't learn for $q < q_0$, for some fixed threshold $q_0$.
\end{proposition}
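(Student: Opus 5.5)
We reuse the gadget of \cref{prop:changeq}, but reverse which side of the threshold makes the $w_i$'s informative. Keep the vertex $v$, the $k=\log n$ vertices $w_i$ all observing $v$ and feeding into $u_0$, and the long chain after $u_0$. Change what each $w_i$ sees besides $a_v$: give $w_i$ no independent predecessor (so it holds only its own signal), and give $v$ a constant number $m$ of source predecessors (say $m=2$). Also add one further independent source neighbor $x_i$ of $w_i$, ordered before $w_i$. As before, since $u_0$'s learning rate exceeds $q$, the chain copies $u_0$ and $L(G_n)\to \ell(u_0)$. So it suffices to show that $\ell(u_0)\to 1$ iff the $w_i$ act on independent information with constant probability.

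The real task is to choose the local structure so that the comparison runs the other way. Each $w_i$ sees $a_v$ together with $j$ independent signals (its own plus its sources). When all $j$ signals oppose $a_v$, $w_i$ compares the likelihood ratio $\frac{q^j(1-\ell(v,q))}{(1-q)^j\ell(v,q)}$ with $1$. We want $w_i$ to override $v$ exactly when $q$ is large. That means $\frac{\ell(v,q)}{1-\ell(v,q)}$ must grow more slowly in $q$ than $\left(\frac{q}{1-q}\right)^j$. This happens when $v$'s evidence is worth fewer effective signals than $j$. I would therefore make $v$ weak and the $w_i$ strong, giving $v$ fewer signals (e.g. $v$ sees only a single source, so $v$ holds two signals with tie-breaking) and each $w_i$ more (two or three sources).

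Near $q=1/2$, though, the comparison must favour $v$. I would arrange this through tie-breaking or signal counts, where the effective odds of $v$ carry a constant-factor advantage that disappears as $q\to 1$. Concretely, I would let $v$ observe a small constant gadget whose learning rate has odds behaving like $c\cdot\frac{q}{1-q}$ for some $c>1$ near $1/2$, while for $w_i$'s own evidence the power $j$ wins for large $q$. I would then define $g(q)=j\log\frac{q}{1-q}-\log\frac{\ell(v,q)}{1-\ell(v,q)}$ and check that it changes sign exactly once at some $q_0\in(1/2,1)$, by explicit computation as in the threshold $q_0\approx 0.7887$ above.

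With the threshold established, the two regimes go as in \cref{prop:changeq}. For $q>q_0$, each $w_i$ outputs the majority of its independent signals whenever they unanimously disagree with $v$, and outputs $a_v$ otherwise. Conditioned on $a_v$, the fraction of correct $a_{w_i}$ then concentrates by the weak law of large numbers at one of two distinct constants, which depend on whether $a_v=\theta$. So $u_0$ decodes $\theta$ and $\ell(u_0)\to1$. For $q<q_0$, every $w_i$ copies $a_v$, so $u_0$ only learns $a_v$ and $\ell(u_0)=\ell(v,q)<1$ is a constant.

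I expect the main obstacle to be the sign analysis: finding small concrete gadgets for $v$ and the $w_i$ so that the override condition holds precisely for large $q$. Two further checks are needed there. First, the concentration constants must differ in the informative regime. Second, $u_0$'s decision rule must not itself be swayed by $a_v$ (it is not adjacent to $v$, so this holds).
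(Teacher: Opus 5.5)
There is a genuine gap. The statement is an existence claim, so its whole content is a concrete gadget whose threshold has the reversed orientation. Your proposal leaves exactly that step open (you call it ``the main obstacle'').

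Worse, the architecture you keep from \cref{prop:changeq} points the wrong way for the natural gadgets. That architecture is one aggregator $v$ shared by all $w_i$, with independent raw signals at each $w_i$. The orientation is governed by the effective signal count of $a_v$, namely $\log\frac{\ell(v,q)}{1-\ell(v,q)}\big/\log\frac{q}{1-q}$. For majority-type gadgets this count is smaller near $q=\tfrac12$ than near $q=1$:
\begin{itemize}
\item for $v$ with four sources it tends to $\tfrac{15}{8}$ as $q\to\tfrac12$ but to $3$ as $q\to1$;
\item for $v$ with two sources it goes from $\tfrac32$ to $2$.
\end{itemize}
So $j$ unanimous independent signals override $v$ at small $q$ but not at large $q$, which is \cref{prop:changeq} again, or else they override at every $q$.

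Your concrete choices land in the second case:
\begin{itemize}
\item With $m=2$ and $j=2$ one gets $\frac{\ell}{1-\ell}=\frac{q^2(3-2q)}{(1-q)^2(1+2q)}<\bigl(\frac{q}{1-q}\bigr)^2$ for all $q>\tfrac12$. The $w_i$ always override, and the network learns for every $q$.
\item If $v$ sees a single source, the tie-breaking rule gives $a_v=p_v$, worth exactly one signal at every $q$. Your $g$ is then $(j-1)\log\frac{q}{1-q}>0$ and never changes sign.
\end{itemize}
The hint that $v$'s odds behave like $c\cdot\frac{q}{1-q}$ with $c>1$ near $q=\tfrac12$ is impossible as written, since $\ell(v,\tfrac12)=\tfrac12$. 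Read instead as asking for an effective count that \emph{decreases} in $q$ across an integer $j$, it is a nontrivial requirement, and you give no construction for it. Your sign analysis of $g$ therefore has nothing to act on.

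The missing idea, which is what the paper does, is to swap which side is shared. Three sources $x_1,x_2,x_3$ are observed by \emph{all} $w_i$. Each $w_i$ gets its \emph{own} copy $v_{i0}$ of the five-vertex gadget. So $w_i$ weighs four raw signals (three of them common) against an independent aggregator. The threshold is the same $q_0\approx0.7887$, where $\frac{\ell}{1-\ell}=\bigl(\frac{q}{1-q}\bigr)^2$, and the increasing effective count now works in your favour.
\begin{itemize}
\item For $q>q_0$ the gadget is worth between two and three signals, so $w_i$ follows $v_{i0}$ unless all four raw signals oppose it. Conditional on the $x_j$, the $a_{w_i}$ are i.i.d.\ with accuracy in $\{q+(1-q)\ell,\ \ell,\ q\ell\}$, all above $\tfrac12$, and $u_0$ decodes by majority.
\item For $q<q_0$, with probability $(1-q)^3$ all $x_j$ are wrong. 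A net of at least two shared wrong signals then beats every $v_{i0}$, so all $w_i$ are wrong and the chain herds.
\end{itemize}
Non-learning thus comes from correlated \emph{raw} signals, and learning from independent \emph{aggregators}, which is the reverse of your setup.
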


\begin{figure}
    \centering
    \begin{tikzpicture}
        \begin{scope}[every node/.style={circle, draw}]
            \node (v1) at (0, 0) {};
            \node (v2) at (1, 0) {};
            \node (v3) at (2, 0) {};
            \node (u0) at (2.5, 3.5) {$u_0$};
            \foreach \i in {1,...,3} {
                \node (w\i) at (4.5, \i + 1) {$w_\i$};
                \node (ww\i) at (6, \i + 1) {$v_{\i}$}; 
                \path [->] (ww\i) edge (w\i);
                \path [->] (v1) edge (w\i);
                \path [->] (v2) edge (w\i);
                \path [->] (v3) edge (w\i);
                \path [->] (w\i) edge (u0);
            }
        \end{scope}
        \node (vdots) at (4.5,5) {$\vdots$};
        \node (dots) at (1, 3.5) {$\cdots$};
        \path [->] (u0) edge (dots);
    \end{tikzpicture}
    \caption{Network that obtains learning for $q > q_0$ but not for $q < q_0$. For each $i$, we have condensed the vertices $v_{i0}, \dots, v_{i4}$ into one gadget vertex $v_i$ for visual clarity.}
    \label{fig:dumber-smarter}
\end{figure}
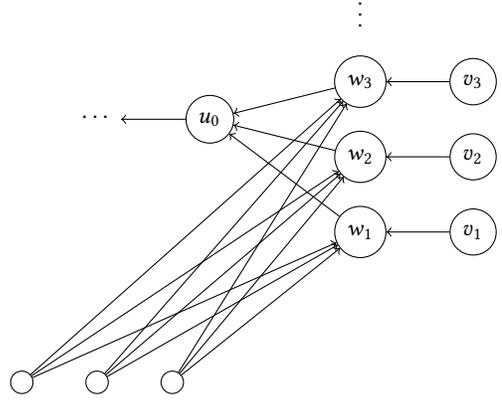

\begin{proof}

Consider the network $\mathcal F$ shown in \cref{fig:dumber-smarter}, defined as $\mathcal F = \{ G'_n \}_{n \in \mathbb N}$ as follows:
$G'_n$ contains the vertices $x_1,x_2,x_3$, the vertices $w_i$ for $1 \leq i \leq \log n$, the gadgets $v_{i}$ for $1 \leq i \leq \log n$ consisting of vertices $v_{ij}$ for $0 \leq j \leq 4$, and the vertices $u_i$ for $0 \leq i \leq n - 5\log n - 4$. $G'_n$ contains the edges $(x_1, w_i)$, $(x_2, w_i)$, $(x_3, w_i)$, $(v_{i0}, w_i)$ and $(v_{ij}, v_{i0})$ for all $1 \leq i \leq \log n$ and $1 \leq j \leq 4$. It also contains the edges $(w_i, u_0)$ for all $1 \leq i \leq \log n$ and $(u_{i - 1}, u_i)$ for all $1 \leq i \leq n - 6\log n - 4$. As in the previous graph, note that $L(G'_n) \to \ell(u_0)$ as $n \to \infty$. 

We again have that $\ell(v, q) = q^5 + 5q^4(1-q) + 10q^3(1-q^2)$. 
Let $S_i = \{p_{w_i}, a_{v_i0}, a_{x_1}, a_{x_2}, a_{x_3}\}$ denote the information available to node $w_i$. As $x_1, x_2, x_3$ each transmit their own private signal, each $w_i$ functionally gets $4$ private signals and the action of $v_{i0}$. 

Consider when $q > q_0$ (using the same value of $q_0 \approx 0.7887$). For any $i \in [\log n]$, if the $4$ private signals received by $w_i$ are split evenly (two $0$s and two $1$s) or there is a strict majority in favor of $a_{v_{i0}}$, $w_i$ will follow the prediction of $v_{i0}$. Otherwise, if there are three predictions for $1 - a_{v_{i0}}$, we have that
$$\Pr(a_{v_{i0}} = \theta | S_i) = \frac{g(q)}{g(q) + g(1-q)} > \frac{1}{2}$$
for $g(q) = q(1-q)^3\ell(v, q)$. 

We now case on the predictions of $x_1, x_2, x_3$. If all $3$ are given the correct private signal, $w_i = \theta$ so long as either $p_{w_i} = \theta$ or $a_{v_{i0}} = \theta$, which occurs with probability $q(1 - \ell(v_{i0})) + \ell(v_{i0}) \in (q, \ell(v_{i0}))$. If exactly $1$ or $2$ are correct, $a_{w_i} = a_{v_{i0}}$ unconditionally so $w_i = \theta$ with probability $\ell(v_{i0})$. If all are wrong, $w_i = \theta$ only when $p_{w_i} = \theta$ and $a_{v_{i0}} = \theta$, which occurs with probability $q\ell(v_{i0})$. For any $q > q_0$, these probabilities are all distinct and strictly greater than $1/2$, so as $n \to \infty$, the proportion of $w_i$ predicting the majority vote accounts for $q^* \in (1/2, 1)$ of the $w_i$'s, where $q^* \in \{q(1 - \ell(v_{i0})) + \ell(v_{i0}), \ell(v_{i0}), q\ell(v_{i0})\}$. By the weak law of large numbers, $u_0$ will be able to determine the ground truth with high probability by simply taking the majority of the predictions among the $w_i$, in which case $\mathcal{F}$ achieves asymptotic learning.

Otherwise, when $q < q_0$, with probability $(1-q)^3$, all $x_1, x_2, x_3$ predict $1 - \theta$. In this case, each $w_i$ follows the signals of these agents regardless of the prediction of $v_{i0}$. Subsequently, $u_0$ and the rest of the network follow this signal, which is incorrect with constant probability. Thus, $\mathcal{F}$ does not achieve asymptotic learning. 
\end{proof}

\section{Correctness of Algorithm \ref{alg:boostgraph-montecarlo}}\label{appendix:algproofs}

\begin{proof}[Proof of \cref{claim:boost}]   
    By uniformity, any celebrity $w_i$ sees at least $\sqrt{k}$ of its guinea pigs with probability $\geq 1 - 1/\sqrt{k}$. Conditioned on this event, $w_i$ sees at least $1 + \sqrt{k}$ private signals. As per Chpt. 16 of \cite{Easley2010-networks}, $w_i$ predicts the correct ground truth so long as at least half of the signals are correct, which occurs with probability $\geq 1 - \exp(-(1+\sqrt{k})/(8q^2)) = 1 - \exp(-\Theta(k))$ by a standard Chernoff bound. Then by \cref{prop:improvement2}, each $w_i$ learns with probability $\geq 1 - O(1/\sqrt{k})$.
    As $k = \omega(1)$, all $w_i$'s learn. Each $v \in S_n$ contains the vertices $w_1, \ldots w_k$ as neighbors, so the result follows immediately by Corollary~\ref{cor:goodneighborhood}.
\end{proof}

We now provide a simple combinatorial result that will be helpful in many of the following proofs, namely \cref{lemma:reachlearn}:

\begin{lemma}\label{lemma:chebyshev}
    For a given value $k \in [n]$, suppose set $S = \{v_1, \dots, v_k\}$ is sampled uniformly at random across all $k$-subsets of $[n]$. Let $Y$ be the random variable denoting the number of elements of $S$ with index at most $n\varepsilon$. Then 
    $$\Pr(k\varepsilon / 2 \leq Y  \leq 3k\varepsilon/2) \geq 1 - \frac{4}{k\varepsilon}$$
\end{lemma}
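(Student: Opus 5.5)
This is a hypergeometric concentration statement, and I will prove it with Chebyshev's inequality. Write $m = \lfloor n\varepsilon \rfloor$ for the number of "early" indices. Then $Y = \sum_{i=1}^{k} X_i$, where $X_i$ is the indicator that the $i$-th element of $S$ has index at most $n\varepsilon$. Equivalently, $Y$ is hypergeometric: $k$ draws without replacement from a population of $n$ containing $m$ marked items. Its mean is
\[
\mathbb{E}[Y] = \frac{km}{n} \approx k\varepsilon.
\]
I will treat $n\varepsilon$ as an integer, or absorb the rounding into the constants.

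For the variance, I use the standard hypergeometric formula:
\[
\mathrm{Var}(Y) = k\,\frac{m}{n}\left(1-\frac{m}{n}\right)\frac{n-k}{n-1} \leq k\varepsilon.
\]
This can also be derived directly from the indicators. Each $\Pr(X_i=1)=m/n$, and the pairwise covariances are nonpositive, because $\Pr(X_i=X_j=1)=\frac{m(m-1)}{n(n-1)} \leq (m/n)^2$. So the variance is at most the sum of the individual variances, each of which is at most $\varepsilon$.

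The event $Y\notin[k\varepsilon/2,\,3k\varepsilon/2]$ implies $|Y-k\varepsilon| > k\varepsilon/2$. Chebyshev's inequality then gives
\[
\Pr\bigl(|Y - k\varepsilon| > k\varepsilon/2\bigr) \leq \frac{\mathrm{Var}(Y)}{(k\varepsilon/2)^2} \leq \frac{k\varepsilon}{k^2\varepsilon^2/4} = \frac{4}{k\varepsilon}.
\]
Taking complements yields the claim.

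The only delicate point is the rounding when $n\varepsilon$ is not an integer, since then the mean is $k\lfloor n\varepsilon\rfloor/n$ rather than exactly $k\varepsilon$. I would handle this by assuming $n\varepsilon\in\mathbb{Z}$, as the paper implicitly does. If needed, one can note that the shift in the mean is less than $k/n \leq 1$, which only perturbs the constant and is negligible in the regime $k\varepsilon=\omega(1)$ where the lemma is used.
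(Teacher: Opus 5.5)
Your proof is correct and follows the paper's argument essentially step for step: you write $Y$ as a sum of indicators, bound $\mathrm{Var}(Y)\le \mathbb{E}[Y]$ using $\mathrm{Var}(X_i)\le \mathbb{E}[X_i]$ and the nonpositive covariances $\frac{n\varepsilon(n\varepsilon-1)}{n(n-1)}-\varepsilon^2\le 0$, and then apply Chebyshev with deviation $k\varepsilon/2$. Your remark on rounding $n\varepsilon$ is a minor addition; the paper also implicitly treats $n\varepsilon$ as an integer.
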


\begin{proof}
    Let $X_i$ be an indicator random variable equal to $1$ iff $v_i$ arrives before index $n\varepsilon$, and observe that $Y = \sum_{i = 1}^k X_i$. By linearity, 
    $$\mathbb{E}[Y] = \sum_{i=1}^{k} \mathbb{E}[X_i] = k\varepsilon$$
    where we use that $\mathbb{E}[X_i] = \Pr(X_i = 1) = \varepsilon$ for all $i \in [k]$. We now compute 
    $$\text{Var}(Y) = \sum_{i = 1}^{k} \text{Var}(X_i) + \sum_{i \neq j} \text{Cov}(X_i, X_j)$$
    We note $\text{Var}(X_i) \leq \mathbb{E}[X_i^2]  = \mathbb{E}[X_i]$ as $X_i \in \{0, 1\}$, and thus $\sum_{i=1}^{k} \text{Var}(X_i) \leq \mathbb{E}[Y]$. Furthermore, 
    \begin{align*}
        \text{Cov}(X_i, X_j) &= \mathbb{E}[X_i X_j] - \mathbb{E}[X_i]\mathbb{E}[X_j] \\
        &= \frac{n\varepsilon(n\varepsilon - 1)}{n(n-1)} - \varepsilon^2 \\
        &\leq 0
    \end{align*}
    and thus $\text{Var}(Y) \leq \mathbb{E}[Y]$. By Chebyshev's inequality,
    $$\Pr(Y < k\varepsilon / 2 \cup Y > 3k\varepsilon/2) \leq \Pr(|Y - \mathbb{E}[Y]| \geq k\varepsilon / 2) \leq \frac{\text{Var}(Y)}{k^2\varepsilon^2/4} \leq \frac{4}{k\varepsilon}$$
    which gives us $\Pr(k\varepsilon / 2 \leq Y \leq 3k\varepsilon / 2) \geq 1 - \frac{4}{k\varepsilon}$. 
\end{proof}

\begin{proof}[Proof of \cref{lemma:reachlearn}]
    Let $E$ be the event that $v$ is reachable from the set of vertices in $S$, and $A$ be the event $\sigma(v) \leq n - \varepsilon \alpha n$ for some unspecified $\alpha = \omega(1), \alpha = o(\frac{1}{\varepsilon})$. We are given that $\Pr(E) = 1 - \delta$, and also have that $\Pr(A) = 1 - \alpha \varepsilon$. Then,
    $$\Pr(E) = \Pr(E \cap A) + \Pr(E \cap A^c) \leq \Pr(E \cap A) + \alpha\varepsilon $$
    $$\therefore \Pr(E \cap A) \geq 1 - \delta - \alpha \varepsilon$$

    We now compute the probability $v$ is reached by some vertex in $S$ after index $n\varepsilon \alpha$.
    Let $\sigma[x:y]$ denote the subgraph induced by vertices between indices $x, y \in [n]$ in the ordering, and let $E'$ be the event that $v$ is reachable by some $u \in S$ where $\sigma(u) \geq n\varepsilon\alpha$. We claim $\Pr(E') \geq \Pr(E \cap A)$. Indeed, for any ordering $\sigma \in E \cap A$, observe we may bijectively map each $\sigma$ to a new ordering $\sigma' = \sigma[n  - n\varepsilon\alpha + 1 : n]\circ [1: n - n\varepsilon \alpha]$, i.e. we move the suffix of length $n\varepsilon\alpha$ to the front of the ordering. Note that $v$ is reachable by some $u \in \sigma[1: n - n\varepsilon\alpha]$, and in $\sigma'$, $\sigma'(u) \geq n\varepsilon\alpha$. Thus, $\sigma' \in E'$, and as this holds for all $\sigma \in E \cap A$, we conclude $\Pr(E') \geq 1 - \delta - \alpha\varepsilon$. 

    By \cref{prop:improvement1}, we have that $\ell(v | E') \geq \ell(u | E')$, where $u \in S, \sigma(u) \geq n\varepsilon\alpha$ is a vertex in $v$'s subnetwork. 
    It remains for us to prove $\ell(u | E')$ is large by showing that, conditioned on $\sigma(u) \geq n\varepsilon \alpha$, at least one of the celebrities $w_i$ added in BoostAgents arrives before $u$ and still learns. 
    
    For each $i \in [1/\varepsilon]$, consider the celebrity $w_i$, and let $B_i$ be the event that $w_i$ appears before index $n\varepsilon \alpha$. Let $X = \sum_{i} B_i$ denote the total number of celebrities before index $n\varepsilon \alpha$. By \cref{lemma:chebyshev}, $X \in [\frac{\alpha}{2}, \frac{3\alpha}{2}]$ with probability at least $1 - \frac{4}{\alpha}$. We note conditioning on event $E'$ only increases the probability of $B_i$, as $E'$ only concerns indices after $n\varepsilon \alpha$. 

    For any $w_i$ appearing before index $n \varepsilon\alpha$, let $C_i$ denote the event that it sees at least $\sqrt{\alpha}$ of its guinea pigs. By \cref{prop:improvement2}, the learning rate of $w_i$ conditioned on $C_i$ is at least as good as
    \[
    \ell(w_i | C_i) \geq (1 - \exp(-\frac{\sqrt{\alpha}}{8q^2})),
    \]
    using our previously described Chernoff bound. Thus it suffices to show $\Pr(C_i)$ is large. Again applying \cref{lemma:chebyshev}, we have probability at least $1 - \frac{4}{\alpha}$ that at least $\alpha / 2$ of $w_i$'s guinea pigs
    appear before index $n\varepsilon \alpha$. Taking a uniformly random ordering, all orderings of $w_i$ with relation to the other guinea pigs before index $n\varepsilon \alpha$ have equal probability, so $w_i$ has probability at least $1 - \frac{2}{\sqrt{\alpha}}$ of seeing at least $\sqrt{\alpha}$ guinea pigs. Thus, $\Pr(C_i | B_i) \geq 1 - \frac{4}{\alpha} - \frac{2}{\sqrt{\alpha}} \geq 1 - \frac{4}{\sqrt{\alpha}}$ for sufficiently large $\alpha$.

    We now have that 
    \begin{align*}
        \ell(u | E') &\geq \ell(u | E', \exists \; i: B_i = 1) \Pr(\exists \; i: B_i = 1 | E') \\
        &\geq \ell(w_i | E', B_i = 1)\left( 1 - \frac{4}{\alpha} \right)\\
        &\geq \ell(w_i | B_i = 1, C_i) \Pr(C_i  | B_i) \left(1 - \frac{4}{\alpha} \right) \\
        &\geq \left(1  - e^{-\sqrt{\alpha} / (8q^2)} \right)\left(1 - \frac{4}{\sqrt{\alpha}}\right)\left(1 - \frac{4}\alpha\right) \\
        &\geq 1 - \frac{9}{\sqrt{\alpha}}
    \end{align*}
    for $\alpha = \omega(1)$. Finally, we have that 
    $$\ell(v) \geq \ell(v | E') \Pr(E') \geq (1 - 9/\sqrt{\alpha})(1 - \delta - \alpha\varepsilon) \geq 1 - \delta - 2\sqrt[3]{81\varepsilon}$$
    where our desired inequality is obtained by setting $\alpha = \sqrt[3]{\frac{81}{\varepsilon^{2}}}$.    
\end{proof}

\begin{proof}[Proof of Lemma \ref{lemma:edgeapprox}]
    Consider any modification strategy using at most $M := M(n)$ modifications to $G := G_n$, $G = \{V, E\}$, where the resulting modified graph $G_{M}$ obtains learning. Let $S \subseteq V$ be the set of all vertices $u$ that are involved in a modification (i.e., an edge incident on $u$ is added/deleted). We show that calling BoostAgents$(G, S, k)$ on $G$ to obtain the modified graph $G_B$ results in a network that learns, for any superconstant $k = \omega(1)$. 

    Consider any vertex $v$ not originally a learner under $G$ that learns in $G_{M}$. With some probability, $v$'s subnetwork must contain some vertex $u \in S$ involved in a modification. Let $A$ be the event $v$ is reachable by any such $u$, and denote $\ell^{M}(v)$ to be the random-order learning rate of $v$ in $G_{M}$. Then,
    $$\ell^{M}(v) = \ell^{M}(v | A) \Pr(A) + \ell^{M}(v | A^c)\Pr(A^c) = 1 - o(1)$$
    We now show the learning rate of $v$ in $G_B$, $\ell(v)$, is at most a $o(1)$ additive correction less than $\ell^{M}(v)$. 
    In both $G_{M}$ and $G_B$, conditioned on $A^c$, the subnetwork of $v$ is identical to that in $G$, and thus the learning rate remains unchanged. Furthermore, $\Pr(A^c)$ depends only on the relative ordering of the original vertices in $V$ and is hence identical in both graphs. 
    Now, let $A'$ denote the probability that there exists a modified vertex $u \in S$ within $v$'s subnetwork appearing after index $n\varepsilon\alpha$, where we may set $\alpha = 1/\sqrt{\varepsilon}$. By the analysis of \cref{lemma:reachlearn}, $\Pr(A') \geq \Pr(A) - \sqrt{\varepsilon}$, and $\ell(v | A') \geq 1 - 5\varepsilon^{1/4} \geq \ell^{M}(v) - o(1)$. Thus,
    \begin{align*}
        \ell(v) &= \ell(v | A) \Pr(A) + \ell(v | A^c) \Pr(A^c) \\
        &\geq \ell(v | A') \Pr(A') + \ell(v | A^c)\Pr(A^c) \qquad (A' \subseteq A)\\
        &\geq (\ell^{M}(v | A) - o(1))(\Pr(A) - o(1)) + \ell^{M}(v | A^c)\Pr(A^c) \\
        &\geq \ell^{M}(v) - o(1)
    \end{align*}
    and thus $v$ learns in $G_B$. As this holds for all viable non-learning $v \in G$ and since $G_{M}$ learns, we conclude that $G_B$ learns as well.
    
    Let $ALG$ be the number of edges/vertices added to construct $G_B$. Note that BoostAgents adds at most $2k$ edges for each edge modification in $G_{M}$, along with $k^2$ additional edges and $k^2 + k$ vertices. This gives us $ALG \leq k \cdot (2M+1) + 2k^2 = O(k) \cdot M + 2k^2$, as desired.
\end{proof}

\begin{proof}[Proof of \cref{claim:submodular}]
    Nonnegativity and monotonicity are inherent to reachability: that is, $C_{\sigma}(T) \geq C_{\sigma}(S) \geq 0$ for any $T \supseteq S$. On submodularity, we fix $\sigma$ and consider some agent $v$. For any sets $S \subseteq T \subseteq V'$, we have that $C_{\sigma}(S) \subseteq C_{\sigma}(T)$. As such, $C_{\sigma}(\{v\}) \setminus C_{\sigma}(S) \supseteq C_{\sigma}(\{v\}) \setminus C_{\sigma}(T)$. But $\frac{1}{n!} \sum_{\sigma}|C_{\sigma}(\{v\}) \setminus C_{\sigma}(S)| = f(S \cup \{v\}) - f(S)$ and $\frac{1}{n!} \sum_{\sigma}|C_{\sigma}(\{v\}) \setminus C_{\sigma}(T)| = f(T \cup \{v\}) - f(T)$, so we are done.
\end{proof}

\begin{proof}[Proof of Lemma \ref{lemma:logapprox2}]
    At any iteration, let $\tilde c_v = \tilde C(S \cup \{v\}) - \tilde C(S)$. 
    By Monte Carlo guarantees, our estimate $\tilde C(S \cup \{v\})$ is within an additive $\varepsilon$ of the true value $C (S \cup \{v\})$ with probability $\geq 1 - \delta = 1 - \frac{1}{n^9}$. Across all $n$ iterations, the probability that no estimate falls outside of the error is $1 - \frac{1}{n^8}$. 

    Even so, our algorithm may pick a suboptimal $v$ at each iteration. 
    Let $S^{(i)}$ be the set of size $i$ that a exponential time greedy algorithm with exact knowledge of $C(S)$ would obtain, and let $S_{ALG}^{(i)}$ be the set of size $i$ obtained by our algorithm (equivalently, this is the set obtained after $i$ iterations of the main loop). As we incur $\leq \varepsilon = \frac{1}{n^2}$ error at each iteration, we have for all $i \leq n$, 
    $$C(S^{(i)}) \leq C(S_{ALG}^{(i)}) + i\varepsilon \leq C(S_{ALG}^{(i)}) + \frac{1}{n}$$
    Let $S$ be the final set obtained by the exponential time greedy algorithm, covering $\geq |V'| - T$ vertices, and $S_{ALG}$ be our algorithm's final boosting set. Suppose $|S| = j$, i.e. after $j$ iterations, the exponential time greedy algorithm has $C(S^{(j)}) \geq |V'| - T$.     
    Since each uncovered vertex $v$ has probability $\frac{1}{1 + \deg v} \geq \frac{1}{2n}$ of being first among its neighbors \& unreachable from $S$, each iteration increments $\tilde{C}(S^{(i)}_{ALG})$ by at least $\frac{1}{2n}$. 
    We conclude that $C(S_{ALG}^{j+2}) \geq |V'| - T$ with high probability.     
    Hence, the final set $S_{ALG}$ obtained by our randomized algorithm is at most $2$ larger than the set $S$ obtained if we could compute $C$ exactly. 
    For the rest of our analysis, we now bound the size of $S$, knowing $S_{ALG} \leq (1 + \frac{2}{|S|})S$.

    Let $S^*$ be the optimal boosted set. It remains for us to show that $|S| \leq |S^*| \cdot O(\log n)$, which directly implies $ALG \leq M_B \cdot O(\log n)$ for any $M_B$. This would be sufficient to obtain $ALG \leq M \cdot O(k \log n) + 2k^2$ as the boosting construction using $M_B$ modifications in \cref{lemma:edgeapprox} adds at most $2k$ edges for each of the $M$ modifications made under an arbitrary strategy. For sufficiently small $k = O(\log n)$, the additive $O(k^2)$ term can be merged with the $M \cdot O(k \log n)$ term, giving us the desired result.
    
    Let $u_i = |V'| - C(S)$ denote the number of non-learning vertices still uncovered by \cref{alg:boostgraph-montecarlo} when $|S| = i$. The exponential time algorithm terminates when $u_i \leq T$. Note that $|V'| - C(|S^*|) \leq T$; consequently, at any point in the algorithm, there exists a sequence of at most $M_B$ vertices that we may add to $S$ to cover all but at most $T$ non-learners, as $|V'| - C(S \cup S^*) \leq T$.    
    By submodularity of coverage (\cref{claim:submodular}), we may invoke the greedy approximation guarantee in \cite{Kempe03-maxinfluence} to obtain
    $$u_{i+M_B} \leq \frac{u_i}{e}$$
    for any iteration $i \geq 0$. As $u_0 = |V'|$, after $i$ iterations,
    $$u_{i}  \leq |V'| \left(\frac{1}{e} \right)^{\lfloor \frac{i}{M_B} \rfloor}$$
    The algorithm terminates when $u_i < T$, so the number of iterations of the main loop is $i \leq M_B \cdot \ln \frac{|V'|}{T} \leq M_B \cdot O(\log n)$. The number of iterations is precisely the size of $S$, and thus $|S| \leq O(\log n) |S^*|$.    
\end{proof}

\begin{proof}[Proof of Theorem \ref{thm:algapx}]
    By \cref{lemma:logapprox2}, \cref{alg:boostgraph-montecarlo} $O(k\log n)$-approximates the number of modifications needed to achieve learning with high probability. Taking $k = g(n)$ for any $g(n) = \omega(1)$, $g(n) = O(\log n)$, we obtain the desired approximation guarantee. 
    Furthermore, \cref{alg:boostgraph-montecarlo} outputs a set $S$ covering at least $|V'| - T$ vertices w.p. $\geq 1 - \frac{1}{n^8}$. 
    By \cref{lemma:reachlearn}, with probability approaching $1$, all but $o(n)$ vertices are boosted to learn, so the algorithm output is indeed a learning network. 
    
    It remains for us to analyze the runtime of Algorithm~\ref{alg:boostgraph-montecarlo}. Sampling $N$ random orderings takes $O(n^7 \log n)$, assuming each ordering can be generated in $O(n)$ time. For each ordering, computing $\tilde c_v = \tilde C(S \cup \{v\}) - \tilde C(S)$ can be done with a single BFS per ordering, which takes $O((m + n) n^6 \log n)$ time per vertex and thus $O((m+n) n^7 \log n)$ time in aggregate. The runtime of BoostAgents is $O(kn) = O(n^2)$, so the overall runtime is $O((m+n) n^7 \log n)$, which is polynomial.
\end{proof}

\section{Additional Proofs in Section 4}\label{appendix:results4}

\begin{proof}[Proof of Proposition \ref{prop:indep_set}]
    We will prove the contrapositive. 
    Let $\alpha(G)$ denote the maximum independent set size of $G$, and let the set of agents guaranteed to predict their private signal be $S \subseteq V$. We note that any agent $v \in V \setminus S$ will follow its neighbors if they unanimously agree on some binary value.
    However, the only agents guaranteed to predict their private signals are sources (in-degree $0$), and potentially some agents with in-degree $1$ who predict their private signal due to the tie-breaking rule. For each vertex $v$, this occurs only if $v$ is one of the first two vertices in the ordering among $\{v \} \cup N(v)$. 
    A classical result (first shown in \cite{caro1979new} and \cite{wei1981lower}) states that
    \[
        \alpha(G) \geq \sum_{v \in V} \frac{1}{1 + \deg(v)},
    \]
    so the expected number of agents with in-degree $0$ or $1$ is at most $\sum_{v \in V} \frac{2}{1 + \deg(v)} \leq 2 \alpha(G)$. Thus with probability at least $1/2$ over the random ordering, $|S| \leq 4\alpha(G)$ by Markov's inequality. When $\alpha(G) = O(1)$, we observe the probability of all agents in $S$ being given the wrong signal (and hence predicting the wrong binary value) is at least $(1 - q)^{4\alpha(G)} = \Omega(1)$. 
    Should this occur, we inductively observe that all vertices in $V \setminus S$ will herd to the wrong binary value. 
    Thus, with fixed probability, the entire graph herds to the incorrect binary value and therefore doesn't learn.
\end{proof}

\begin{proof}[Proof of \ref{prop:kcorecounterexample}]
    For any sufficiently large $n$, our construction leverages any functions $h(n), g(n) = \omega(1)$ for which $h(n) \cdot g(n) = o(n)$.

    We first take the complete graph $K_n$ and select $g(n)$ special vertices, denoted by the set $S$. Equip each special vertex with $h(n)$ guinea pigs (vertices with degree $1$). Let $G_m$ be the resulting graph, for $m = n + h(n)g(n)$. Without the vertices of degree $1$, we are left with $K_n$, which doesn't achieve asymptotic learning under any ordering, as discussed in \cite{Bikhchandani1992-rs}.

    To show the network $\mathcal{F} = \{G_m\}$ learns, it suffices to show that the vertices within the $K_n$ subgraph learn, as $o(m)$ vertices fall outside this subgraph.
    For $u \in S$, let $A$ be the event that $u$ sees at least $\sqrt{h(n)}$ many of its guinea pigs. Over the uniformly random ordering, $\Pr(A) \geq 1 - \frac{1}{\sqrt{h(n)}}$. As long as at least half of its guinea pigs predict the correct ground truth, so too will $u$ (see Chpt. 16 of \cite{Easley2010-networks}).
    By a Chernoff bound and Proposition \ref{prop:improvement2}, conditioned on seeing at least $\sqrt{h(n)}$ many independent private signals, we have that
    $$\ell(u | A) \geq 1 - \exp\left(-\frac{\sqrt{h(n)}}{8q^2}\right)$$
    so by Lemma \ref{lemma:conditionallearning}, we have $$\ell(u) \geq \left(1 - \frac{1}{\sqrt{h(n)}}\right)\left(1 - e^{-\sqrt{h(n)}/(8q^2)}\right).$$ For $h(n) = \omega(1)$, we conclude that each special vertex learns. 
    
    Since all vertices in the $K_n$ subgraph have at least $g(n) = \omega(1)$ learning neighbors, by Corollary \ref {cor:goodneighborhood}, all such vertices achieve learning as well. We conclude that $\mathcal{F}$ achieves asymptotic learning.
\end{proof}

\begin{proof}[Proof of Proposition \ref{prop:embed}]
    Within any graph $G_m$ (constructed as per the proposition statement), let $V^{(i)}$ be the vertex set of the $i$th copy of the graph $G_k'$, and $v_{k}^{(i)} \in V^{(i)}$ be the corresponding vertex in $V^{(i)} \cap S$. Recall $k = \log \log n$. We fix an index $j = n/k$, and for each $i \in \sqrt{n}$, let $A_i$ be the indicator random variable equal to $1$ iff all vertices in $V^{(i)}$ appear at/before index $j$. By uniformity over the ordering, we compute
    \begin{align*}
        \Pr(A_i = 1) &= \frac{\prod_{l=0}^{k - 1} (j - l)}{\prod_{l=0}^{k -1} (n-l)} = \prod_{l = 0}^{k - 1} \frac{j - l}{m - l} \\
        &\geq \left(\frac{j - k}{m - k}\right)^{k} \\
        &\geq \left( \frac{1}{2\log\log n} \right)^{\log \log n} \\
        &\geq \frac{1}{\log n (\log \log n)^{\log \log n}} \\
        &\geq \frac{1}{n^{\delta}}
    \end{align*}
    for any $\delta > 0$. Let $X$ be a random variable denoting the number of graphs $G_k'$ that appear at/before index $j$. As $X = \sum_{i \in [\sqrt{n}]} A_i$, by linearity, $\mathbb{E}[X] \geq n^{1/2 - \delta}$. Furthermore, by linearity of variance, $$\mathrm{Var}(X) = \sum_{i \in [\sqrt{n}]} \mathrm{Var}(A_i) + \sum_{i \neq l} \mathrm{Cov}(A_i, A_l)$$
    We first note $\mathrm{Var}(A_i) \leq \mathbb{E}[A_i^2] = \mathbb{E}[A_i]$ as $A_i \in \{0, 1\}$. We also observe $\mathrm{Cov}(A_i, A_l) \leq 0$, as 
    \begin{align*}
        &\mathrm{Cov}(A_i, A_l) = \mathbb{E}[A_i A_l] - \mathbb{E}[A_i]\mathbb{E}[A_l] \\
        =& \frac{\prod_{l=0}^{2k - 1} (j - l)}{\prod_{l=0}^{2k-1} (m - l)} - \left(\frac{\prod_{l=0}^{k - 1} (j - l)}{\prod_{l=0}^{ k -1} (m - l)} \right)^2 \\
        \leq &\left(\frac{\prod_{l=0}^{k - 1} (j - l)}{\prod_{l=0}^{ k -1} (m - l)} \right)^2 - \left(\frac{\prod_{l=0}^{k - 1} (j - l)}{\prod_{l=0}^{ k -1} (m - l)} \right)^2 \\
        = & 0
    \end{align*}
    where the penultimate line uses that
    $$\frac{\prod_{l=k}^{2k - 1} (j - l)}{\prod_{l=k}^{ 2k -1} (m - l)} \leq \frac{\prod_{l=0}^{k - 1} (j - l)}{\prod_{l=0}^{ 2k-1} (m - l)}$$
    so $\mathrm{Var}(X) \leq \mathbb{E}[X]$. Hence, by Chebyshev's inequality,
    $$\Pr(|X - \mathbb{E}[X]| \geq \mathbb{E}[X]/2) \leq \frac{\mathbb{E}[X]}{\mathbb{E}[X]^2/4} \leq \frac{4}{n^{1/2 - \delta}}$$
    so we conclude that with probability $\geq 1 - \frac{4}{n^{1/2 - \delta}}$ that at least $n^{1/2 - \delta}/2$ sets $V^{(i)}$ appear entirely before index $j$. 

    Without loss of generality, let $S' = [1, \dots, L]$ be the indices corresponding to sets appearing before index $j$. Denote $B_i$ to be the event that the relative ordering on $V^{(i)}$ is identical to $\sigma_k$. As the vertex sets $V^{(i)}$ are disjoint, the events $B_i$ for $i \in S'$ are mutually independent with each other and with the events $A_i$. Thus,
    \begin{align*}
        \Pr\left(\bigcap_{i \in S'} \overline{B_i} \bigm\vert |S'| \geq n^{1/2 - \delta} /2\right) &= \prod_{i \in S'} \Pr\left(\overline{B_i} \bigm\vert |S'| \geq n^{1/2 - \delta} \right) \\
        &= \prod_{i \in S'} \Pr(\overline{B_i}| A_i) = \prod_{i \in S'} \Pr(\overline{B_i})\\
        &= (1 - 1/k!)^{L}
    \end{align*}
    where the last line uses that $\Pr(B_i) = 1/k!$. When $L \geq n^{1/2 - \delta} / 2$, since $1/k! \leq 1/(\log \log n)^{\log \log n} \leq 1/n^{\delta}$, we use that $1 - x \leq e^{-x}$ to get that $(1 - 1/k!)^L \leq \exp(-n^{1/2 - 2\delta}/2)$. Thus,
    $$\Pr\left(\bigcup_{i \in S'} B_i \bigm\vert S' \right) = 1 - \Pr \left( \bigcap_{i \in S'} \overline {B_i} \bigm\vert S' \right) \geq 1 - \exp(-n^{1/2 - 2\delta}/2)$$

    Finally, to show our network learns, consider any vertex $u$ in the subgraph $u \in K_{n}$. Let $\ell_{\sigma_k}(v_k) = 1 - \varepsilon = 1 - o(1)$, and note that this lower bounds $\ell(v_k^{(i)} | A_i = 1)$ by \cref{prop:improvement2}. Let $\ell(u | j)$ be the learning rate of $u$ conditioned on being at index $j$. By \cref{lemma:monotonicity}, $\ell(u) \geq \ell(u | j)\Pr(\sigma(u) \geq j) = \ell(u | j)(1 - \frac{1}{\log \log n})$, it remains to show $\ell(u | j)$ goes to $1$ as $n \to \infty$. 
    
    With probability at least $\Pr(|S'| \geq n^{1/2 - \delta}, \bigcup_{i \in S'} B_i)$, any $u$ at vertex $j$ sees at least one $v_{k}^{(i)}$ that learns with probability $1 - \varepsilon$. Thus, 
    $$\ell(u | j) \geq (1 - \varepsilon)\left(1 -  \exp\left(\frac{-n^{1/2 - 2\delta}}{2}\right)\right)\left(1 - \frac{4}{n^{1/2 - \delta}}\right) = 1 - o(1)$$
    for any $0 < \delta < 1/4$. We conclude that each $u \in K_{n}$ learns, and as the $K_n$ subgraph consists of all but $o(m)$ vertices, the network $\mathcal{F} = \{G_m\}$ obtains asymptotic random-order learning. 
\end{proof}

\begin{proof}[Proof of \cref{claim:srcneighbors}]
    Fix $n$ and let $v := v_n$. Let $m = \sum_{u \in N(v)} \frac{1}{\deg u} = \omega(1)$, $N'(v) =\{u \in N(v): \deg u \leq \deg v / \sqrt{m}\}$, and $d = |N'(v)|$. Note that $d > m$, and
    \begin{align*}
        \sum_{u \in N(v_n)} \frac{1}{\deg u} &= \sum_{u \in N'(v)} \frac{1}{\deg u} + \sum_{u \notin N'(v)} \frac{1}{\deg u} \\
        &\leq \sum_{u \in N'(v)} \frac{1}{\deg u} + \frac{\deg v}{\deg v / \sqrt{m}} \\
        &= \sum_{u \in N'(v)} \frac{1}{\deg u} + \sqrt{m}
    \end{align*}
    Thus, $\sum_{u \in N'(v)} \frac{1}{\deg u}  \geq m - \sqrt{m} \geq m/2$. 
    
    Observe that a superconstant number of vertices in $N'(v)$ appear very early in the ordering. In particular, let $X$ be the number of vertices in $N'(v)$ arriving before index $\frac{nl}{d}$. By \cref{lemma:chebyshev}, $l/2 \leq X \leq 3l/2$ with probability at least $1 - \frac{4}{l}$. 
    
    Let $S \subseteq N'(v)$ be the set of vertices appearing before index $\frac{nl}{d}$. We now show that with high probability, all vertices in $S$ are the first among their neighbors. 
    For each $u \in S$, let $B_u$ be the event that $u$ is the first among its neighbors. For each $j \in [\deg u]$, let $B_{uj}$ be the event that neighbor $j$ is after index $\frac{nl}{d}$. Then $\bigcap_{j} B_{uj} \cap \{u \in S\} \subseteq B_{u} \cap \{u \in S\}$, and
    $$\Pr\left(\bigcup_{j} \overline{B_{uj}} \bigm\vert u \in S \right) \leq \frac{l \deg u}{d} \leq \frac{l}{\sqrt{m}}$$
    by a union bound, where $\Pr(\overline{B_{uj}} | u \in S) \leq \frac{l}{d}$. Thus, $\Pr(B_u | u \in S) \geq 1 - \Pr(\bigcup_{j} \overline{B_{uj}} | u \in S) \geq 1 - l/\sqrt{m}$. We conclude that $\Pr(\overline{B_u} | u \in S) \leq l/\sqrt{m}$, and thus
    \begin{align*}
        \Pr\left(|S| \in \left[\frac{l}{2}, \frac{3l}{2}\right], \bigcap_{u \in S} B_u \right) &= \Pr\left(\bigcap_{u \in S} B_u \bigm\vert S \right) \Pr\left(|S| \in \left[\frac{l}{2}, \frac{3l}{2}\right]\right)\\
        &\geq \left(1 - \Pr\left(\bigcup_{u \in S} \overline{B_u} \bigm\vert S\right) \right)\left(1 - \frac{4}{l}\right) \\
        &\geq \left(1 - \frac{3l^2}{2\sqrt{m}}\right)\left(1  -\frac{4}{l} \right)
    \end{align*}
    where the last line uses a union bound to obtain $\Pr(\bigcup_{u \in S} \overline{B_u} | S) \leq \frac{3l}{2}\Pr(\overline{B_u} | u \in S)$. We may set $l = m^{1/6} = \omega(1)$ to get that, with probability $1 - \Theta(m^{-1/6}) = 1 - o(1)$, $|S| \in [l/2, 3l/2]$ and each $u \in S$ is the first among its neighbors. Let the intersection of both these events be denoted $E$. 

    To conclude, we show the learning rate of our vertex $v$ conditioned on being at index $nl/d$ is high. As $\ell(v | i) \geq \ell(v | nl/d)$ for any $i \geq nl/d$, by Lemma \ref{lemma:monotonicity}, we note $\ell(v) \geq \ell(v | \frac{nl}{d}) \Pr(\sigma(v) > \frac{nl}{d}) \geq \ell(v | \frac{nl}{d})(1 - \frac{l}{d})$. As $1 - \frac{l}{d} = 1 - o(1)$, it remains to show $\ell(v | \frac{nl}{d})$ goes to $1$ as $n \to \infty$. We have that
    \begin{align*}
    \ell\left(v \bigm\vert \frac{nl}{d}\right) \geq & \ell\left(v \bigm\vert \frac{nl}{d}, E\right) \Pr\left(E \bigm\vert \sigma(v) = \frac{nl}{d}\right) \\
    \geq &\ell\left( v \bigm\vert \frac{nl}{d}, E\right)(1 - \Theta(m^{-1/6}))
    \end{align*}
    Conditioned on $E$ and being after index $nl/d$, vertex $v$ sees at least $m^{1/6}/2$ neighbors who arrive first in their neighborhood, who simply predict their private signal. By a Chernoff bound, the learning rate obtained by taking the majority of these private signals is $\geq 1 - \exp(-m^{1/6}/(16q^2))$, so by Proposition \ref{prop:improvement2}, $\ell(v | \frac{nl}{d}, E) \geq 1 - \exp(-\Theta(m^{1/6}))$. We conclude that $v$ learns.
\end{proof}




\end{document}